\numberwithin{equation}{section}
\newcounter{2dCN}
\newcounter{1dCN}
\newcommand*{\textlabel}[2]{%
  \edef\@currentlabel{#1}
  \phantomsection
  #1\label{#2}
}
\def\semicolon{;}
\def\applytolist#1{
    \expandafter\def\csname multi#1\endcsname##1{
        \def\multiack{##1}\ifx\multiack\semicolon
            \def\next{\relax}
        \else
            \csname #1\endcsname{##1}
            \def\next{\csname multi#1\endcsname}
        \fi
        \next}
    \csname multi#1\endcsname}
\def\calc#1{\expandafter\def\csname c#1\endcsname{{\mathcal #1}}}
\def\bbc#1{\expandafter\def\csname bb#1\endcsname{{\mathbb #1}}}
\def\bffc#1{\expandafter\def\csname bf#1\endcsname{{\mathbf #1}}}
\def\sfc#1{\expandafter\def\csname s#1\endcsname{{\sf #1}}}
\def\rmfc#1{\expandafter\def\csname rm#1\endcsname{{\mathrm #1}}}
\def\frfc#1{\expandafter\def\csname fr#1\endcsname{{\mathfrak #1}}}
\def\scfc#1{\expandafter\def\csname sc#1\endcsname{{\mathscr #1}}}
\DeclareMathOperator{\cAd}{Ad}
\DeclareMathOperator{\id}{id}
\DeclareMathOperator{\supp}{supp}
\DeclareMathOperator{\Dom}{Dom}
\DeclareMathOperator{\End}{End}
\DeclareMathOperator{\Rep}{Rep}
\DeclareMathOperator{\Mob}{M\ddot{o}b}
\DeclareMathOperator{\PSLtwoR}{\mathrm{PSL}_2(\bbR)}
\DeclareRobustCommand{\eg}{e.g.\@\xspace}
\DeclareRobustCommand{\cf}{cf.\@\xspace}
\DeclareRobustCommand{\ie}{i.e.\@\xspace}
\def\bb1{\mathbbm{1}}
\def\DiffSone{\mathrm{Diff}_+(S^1)}
\def\uDiffSone{\overline{\mathrm{Diff}_+(S^1)}} 
\def\Mob{\mathrm{M\ddot{o}b}}
\def\uMob{\overline{\Mob}}
\def\<{\langle}
\def\>{\rangle}
\newtheorem{theorem}{Theorem}[section]
\newtheorem{definition}[theorem]{Definition}
\newtheorem{proposition}[theorem]{Proposition}
\newtheorem{lemma}[theorem]{Lemma}
\theoremstyle{remark}
\newtheorem{remark}[theorem]{Remark}
\newtheorem{example}[theorem]{Example}
\title{Rational and non-rational two-dimensional conformal field theories arising from lattices}
\author[1]{Maria Stella Adamo\thanks{{\tt maria.stella.adamo@fau.de}}}
\author[2]{Luca Giorgetti\thanks{{\tt giorgett@mat.uniroma2.it}}}
\author[2]{Yoh Tanimoto\thanks{{\tt hoyt@mat.uniroma2.it}}}
\affil[1]{Department Mathematik, Friedrich-Alexander-Universit\"at Erlangen-N\"urnberg \authorcr
Cauerstrasse 11, 91058 Erlangen, Germany}
\affil[2]{Dipartimento di Matematica, Universit\`a di Roma Tor Vergata,\authorcr
   Via della Ricerca Scientifica 1, I-00133 Roma, Italy}
\date{}
\begin{document}
\maketitle
\begin{center}
 \textit{Dedicated to the memory of Huzihiro Araki.}
\end{center}

\begin{abstract}
For a (finite-dimensional) real Hilbert space $\frh$ and an orthogonal projection $p$,
we consider the associated Heisenberg Lie algebra and the two-dimensional Heisenberg conformal net.
Given an even lattice $Q$ in $\frh$ with respect to the indefinite bilinear form on $\frh$ defined by $p$,
we construct a two-dimensional conformal net $\cA_Q$ extending the Heisenberg conformal net.
Moreover, with a certain discreteness assumption on the spectrum of the extension,
we show that any two-dimensional extension of the Heisenberg conformal net is of the form $\cA_Q$
up to unitary equivalence.

We consider explicit examples of even lattices where $\frh$ is two-dimensional and $p$ is one-dimensional, and we show that the extended net may have
completely rational or non-completely rational chiral (\ie one-dimensional lightray) components,
depending on the choice of lattice.
In the non-rational case, we exhibit the braided equivalence of a certain subcategory of the representation category
of the chiral Heisenberg net corresponding to the two-dimensional lattice extension.

Inspired by the charge and braiding structures of these nets, we construct two-dimensional conformal Wightman fields on
the same Hilbert spaces. We show that, in some cases, these Wightman fields generate the corresponding extended nets.
\end{abstract}

\section{Introduction}
Two-dimensional Conformal Field Theories (CFT) are nowadays studied in various frameworks.
Some of them are specifically developed to describe and encompass conformal covariance, such as Vertex Operator Algebras (VOA) \cite{FLM88VertexOperatorAlgebras, HK07FullFieldAlgebras},
or the Segal axioms \cite{Segal04}. Others have been introduced to deal with more general Quantum Field Theories
and they can be restricted to the conformal setting: such as conformal nets \cite{KL04-1, KL04-2} (fulfilling the Araki--Haag--Kastler axioms \cite{Haag96,Araki99}), or the G{\aa}rding--Wightman axioms \cite{LM75, SW00}.
There is a subclass of conformal field theories that is relatively well-understood, especially in two spacetime dimensions: they are called rational CFTs.
These theories have well-behaved chiral components (described by rational VOAs or chiral conformal nets),
and their representation categories have rigid structures (they are typically fusion, braided and modular).
In the rational setting, the full two-dimensional CFTs extending the tensor product of chiral components 
can be described purely algebraically and categorically using certain commutative algebra objects in the tensor product of their representation categories \cite{DMNO13},
or equivalently, using module categories \cite{Ost03}, or braided equivalences of the chiral components \cite{DNO13}, see \eg \cite{FRS02, SFR06, RFFS07, Kong07, BKL15} and references therein.
On the other hand, there is a vast majority of non-rational CFTs, see \eg \cite{Carpi04}, and their structure remains to be understood.

A representative family of non-rational (chiral or two-dimensional) CFTs is given by realizations of the canonical commutation relations: the so-called Heisenberg algebras. Recently, an algebraic-analytic framework for two-dimensional (full) CFT (called full VOA) has been proposed \cite{Moriwaki23} and the extensions of the Heisenberg algebras have been studied.
In \cite{AMT24OS}, it has been proved that the correlation functions in a full VOA (assuming unitarity and some technical conditions) define Euclidean correlation functions
satisfying the Osterwalder--Schrader axioms, and thus there are corresponding two-dimensional conformal G{\aa}rding--Wightman fields \cite{AGT23Pointed}, by the celebrated Wightman reconstruction theorem.
Therefore, it is natural to address a more direct construction of G{\aa}rding--Wightman fields and conformal nets.
In this work, we construct (and classify) two-dimensional extensions of the Heisenberg models,
both in the conformal net and G{\aa}rding--Wightman framework.

The Heisenberg algebra associated with a finite-dimensional real Hilbert space $\frh$
is the Lie algebra, denoted by $\widehat\frh$,
generated by $\{\pmb{\alpha}(m)\}_{\pmb{\alpha} \in \frh, m \in \bbZ}$ and a central element $K$
with commutation relations $[\pmb{\alpha}(m), \pmb{\beta}(n)] = \delta_{m,-n} m (\pmb{\alpha}, \pmb{\beta})_\frh K$.
By choosing an orthogonal projection $p$, we can declare that $\widehat{p\frh}$ corresponds to the (left) chiral component,
and $\widehat{\bar p\frh}$ to the (right) chiral (\ie antichiral) component.
The algebra $\widehat{p\frh}$ has irreducible modules parametrized by $\pmb{\lambda} \in p\frh$,
thus they are naturally endowed with an additive group structure.
In analogy with the rational case, in order to construct a two-dimensional CFT as an extension of the two-dimensional tensor product of $\widehat{p\frh}$ with $\widehat{\bar p\frh}$, 
one might consider the trivial braided equivalence from
the category of all $\widehat{p\frh}$-modules to that of $\widehat{\bar p\frh}$-modules (if $\dim p = \dim \bar p$ and $p\frh$ and $\bar p\frh$ are
identified).
It turns out that this is not the only option. In addition to these ``diagonal'' extensions,
one can choose a (not necessarily $N$-dimensional) even lattice $Q$ in the $N$-dimensional real space $\frh$
with respect to the indefinite bilinear form defined by $p$, $(\cdot|\cdot) := (p\cdot|p\cdot)_{\frh} - (\bar p\cdot|\bar p\cdot)_{\frh}$,
where only a subset of simple modules appear, and then construct ``non-diagonal" CFT extensions.
Given an arbitrary even lattice $Q$ in $\frh$, we perform this construction first in the two-dimensional conformal net setting, 
and then in the G{\aa}rding--Wightman setting.
We do this by considering the $2$-cocycle $\epsilon$ of the lattice $Q$ \cite{Kac98} and the twisted shift operators.
The interplay between $\epsilon$ and the braiding of the chiral representation category guarantees the locality of the extended CFT.

By looking at special cases, we observe that these families of models accommodate both rational and non-rational
cases. For example, when $\frh$ is two-dimensional ($\frh \cong \bbR^2$) and we divide it into one-dimensional ($\cong \bbR$) orthogonal subspaces
corresponding the chiral and antichiral parts, there are even lattices $Q\subset \frh$ generated by $\frac{R}{\sqrt2}\oplus \frac{R}{\sqrt2}$
and $\frac{R^{-1}}{\sqrt 2}\oplus -\frac{R^{-1}}{\sqrt 2}$ for any $R \in \bbR, R\neq 0$, and the corresponding two-dimensional conformal net, denoted by $\cA_Q$, has rational chiral components properly extending the chiral Heisenberg nets if and only if $R^2 \in \bbQ$. If $R^2 \in \bbR\setminus\bbQ$, the chiral components of $\cA_Q$ remain the chiral Heisenberg nets $\cA_\bbR$, which are not rational.

This non-rational case is of particular interest. Differently from the rational case \cite{DNO13, BKL15},
we are not adding bulk (\ie non-chiral) fields associated with all the irreducible objects in the representation categories of the chiral components. 
Yet, the resulting two-dimensional conformal net $\cA_Q$ appears to be maximally extended 
(\cf our classification result in Section \ref{classification} which holds under a certain technical condition, see below).
This suggests that, even considering proper subcategories of the chiral components, one may obtain maximally extended CFTs.
In the examples above with $R^2 \in \bbR \setminus \bbQ$, we exhibit the braided equivalence corresponding to the (infinite index) extension $\cA_\bbR \otimes \cA_\bbR \subset \cA_Q$. The equivalence is defined on the (strict) tensor subcategory generated by the irreducible representations with charge $\frac{R}{\sqrt 2}$ and $\frac{R^{-1}}{\sqrt 2}$, it is a non-strictly tensor functor, and its non-trivial tensorator is the $2$-cocycle $\epsilon$.
The analysis of these concrete examples opens the way to the study of more general two-dimensional non-rational conformal nets and extensions.

For an arbitrary $N$-dimensional $\frh$, under the condition that the additive subgroup of charges appearing in the extension is
discrete, we classify all possible two-dimensional conformal net extensions of the Heisenberg nets: they are all associated with an even lattice.
We follow the strategy of \cite{BMT88}. That is, we first extract charge-carrying operators
and we put them in a canonical form, then we examine the possible $2$-cocycles arising.
By the discreteness assumption, the set of possible charges forms a lattice.
Locality of the extension imposes that the lattice is even and that the $2$-cocycles are symmetric.
As the symmetric $2$-cocycles of an abelian group are necessarily $2$-coboundaries, we obtain the classification result.

The models constructed here are relatively simple, yet of interest from various points of view.
For example, in the physics literature, models corresponding to different $Q$ (in the same $\frh$)
are believed to be related by the ``current-current'' deformation:
assuming that they are associated with a Lagrangian, different models should be associated
with the Lagrangian perturbed by the current.
On the other hand, when $\frh$ is $2$-dimensional and $p$ is $1$-dimensional as in the examples mentioned earlier,
the models constructed here seem to be related with the massless free field
with the compactified target spaces. See \cite[Section 6]{Moriwaki23} and references therein for the physics literature.
It is an interesting problem to connect these models with the conformal field theories arising from certain Lagrangians,
see \cite{GKR24ProbabilisticLiouville} for recent developments.

Once we have clear charge structures, we proceed to construct Wightman fields.
The fields generating the two-dimensional extensions are labelled by the elements of the lattice,
and the smeared operator creates states with that charge.
In order to show that they generate a two-dimensional conformal net, we need
to verify that they satisfy strong enough bounds, \cf \cite[Lemma 5.2]{AGT23Pointed}.
Such bounds are proved in some cases.

Heisenberg algebras and their extensions are associated with the loop group of the groups $\bbR^N$ or $(S^1)^N$,
respectively (see \cite[Chapter V]{Toledano-LaredoThesis}), and from their vacuum representations one can
construct conformal nets. In this regard, it is important to consider their reflection positive representations (\cf \cite{ANS22ReflectionPositivity, ANS24ReflectionPositivity} for finite-dimensional groups), which should give rise to
their Euclidean counterparts. Such representations will be the subject of future investigations.

This paper is organized as follows.
In Section \ref{conformalnets}, we recall the definitions of two-dimensional conformal nets,
chiral conformal nets on $S^1$, and their representations.
In Section \ref{sec:U(1)}, we consider the basic case where $\frh \cong \bbR$, and we recall the definition of the associated Heisenberg algebra $\widehat \frh$ (the $\rmU(1)$-current algebra) and the corresponding conformal net on $S^1$. 
In Section \ref{sec:Heisenberg}, we generalize it to the finite-dimensional case and recall the (braided and tensor) structure of its modules (representations).
In Section \ref{sec:2d}, we construct the two-dimensional conformal net $\cA_Q$ associated with an even lattice $Q \subset \frh$ with respect to the bilinear form $(\cdot| \cdot)$, twisted by $p$.
In Section \ref{sec:BE}, we study the braided equivalence associated with the extension in a class of examples, where $Q \subset \frh \cong \bbR^2$.
In Section \ref{classification}, we classify two-dimensional conformal nets extending the Heisenberg nets, under the assumption that the representations of the Heisenberg net appearing in the Hilbert space of the extension form a discrete subset of $\frh$.
In Section \ref{wightman}, we construct the corresponding Wightman fields on the Hilbert space of $\cA_Q$.

\section{Preliminaries}\label{preliminaries}

\subsection{Conformal nets}\label{conformalnets}
\subsubsection{Two-dimensional conformal nets}\label{2dconformalnet}
We consider the Araki--Haag--Kastler framework \cite{Haag96, Araki99} for two-dimensional conformal field theory \cite{KL04-2}.
In general, a Araki-Haag-Kastler net is defined on the Minkowski space $\bbR^{1+1}$,
but in the conformal case, one can take more convenient coordinates:
as we did in \cite[Section 2.1]{AGT23Pointed}, we map the two-dimensional Minkowski space
to $(-\pi, \pi)\times (-\pi, \pi)$ by $(a_+, a_-) \mapsto (2\arctan a_+, 2\arctan a_-)$,
where $(a_+, a_-)$ are the lightcone coordinates on $\bbR^{1+1}$ (\cf \cite{KL04-2}).
With this identification, $\uDiffSone\times\uDiffSone$ and $\uMob\times\uMob$ act locally on $\bbR^{1+1}$,
where $\Mob = \PSLtwoR$.
Recall that $\Mob \subset \DiffSone$ and if a unitary projective representation $U$ of $\uDiffSone \times \uDiffSone$ is
restricted to $\uMob\times\uMob$, we may assume that it is a true (non-projective) representation.

A \textbf{conformal net on $\bbR^{1+1}$}, or \textbf{two-dimensional conformal net}, on a Hilbert space $\cH$ is a triple $(\cA, U, \Omega)$
where $\cA$ is an assignment to each open bounded region $O \subset \bbR^{1+1}$ of a von Neumann algebra $\cA(O)$ on $\cH$,
$U$ is a unitary projective representation of $\uDiffSone\times\uDiffSone$ on $\cH$, $\Omega \in \cH$
such that (we assume a stronger condition that assures that the net extends to
the Einstein cylinder \cite[Section 2.3, Proposition 2.5]{AGT23Pointed}, so this is the definition of \cite{KL04-2})
\begin{enumerate}[{(2dCN}1{)}]
 \item \textbf{Isotony: }if $O_1\subset O_2$, then $\cA(O_1)\subset\cA(O_2)$. \label{2dcn:isotony}
 \item \textbf{Locality:} if $O_1$ and $O_2$ are spacelike separated, then $\cA(O_1)\subset\cA(O_2)'$. \label{2dcn:locality}
 \item\textbf{Diffeomorphism covariance:}
 for a bounded region $O \subset \bbR^{1+1}$, there is a neighborhood $\cU$ of the unit element of $\uDiffSone\times \uDiffSone$
 such that if $\gamma \in \cU$ then $\gamma O \subset \bbR^{1+1}$ and
 \[
  U(\gamma)\cA(O)U(\gamma)^*=\cA(\gamma O).
 \]
 Furthermore, if $\supp \gamma$ is spacelike from $O$, then $\cAd U(\gamma)(x) = x$ for $\cA(O)$.
 \label{2dcn:diff}
 \item\textbf{Positivity of energy:}
 the restriction of $U$ to the translation subgroup $\bbR^2 \subset \uMob \times \uMob$
 has the joint spectrum contained in the closed forward light cone
 $\overline {V_+}=\{(a_0,a_1)\in\bbR^{1+1}: a_0^2-a_1^2\geq0, a_0\geq0\}$ (in the original coordinates)
 $=\{(a_+,a_-)\in\bbR^{1+1}: a_\pm \geq0\}$ (in the lightcone coordinates). \label{2dcn:positiveenergy}
 \item \textbf{Vacuum and the Reeh-Schlieder property: }there exists a unique (up to a phase)
 vector $\Omega \in\cH$ such that $U(g)\Omega=\Omega$ for $g\in \uMob\times\uMob$
 and is cyclic for any local algebra, namely $\overline{\cA(O)\Omega}=\cH$. \label{2dcn:vacuum}
 \item \textbf{Extension to the cylinder: } $U(R_{2\pi}\times R_{-2\pi}) = \bb1$. \label{2dcn:cylinder}
 \setcounter{2dCN}{\value{enumi}}
\end{enumerate}

One of the main results of this paper is to construct a family of examples of two-dimensional conformal nets on $\bbR^{1+1}$.
For this purpose, we note that it is enough to construct $\cA(O)$ first for \textbf{double cones}, which correspond
to a set of the form $I_+ \times I_- \subset (-\pi, \pi)\times (-\pi, \pi)$,
where $I_+, I_-$ are relatively compact intervals in $(-\pi, \pi)$,
For such $\cA$, we require the same axioms except that locality means that
$\cA(I_{+,1} \times I_{-,1})$ commutes with $\cA(I_{+,2} \times I_{-,2})$
whenever one of the following conditions holds:
\begin{itemize}
 \item $I_{+,1} < I_{+,2}$ and $I_{-,2} < I_{-,1}$
 \item $I_{+,2} < I_{+,1}$ and $I_{-,1} < I_{-,2}$
\end{itemize}
Note that any of these conditions correspond to spacelike-separated double cones in $\bbR^{1+1}$,
and $\uDiffSone\times\uDiffSone$ map (locally) any double cone to a double cone.

From such $\cA$ for double cones, for any other open set $O \subset (-\pi, \pi)\times (-\pi, \pi)$
with non-trivial spacelike complement, we can extend $\cA$ by
$\cA(O) = \bigvee_{I_+ \times I_- \subset O} \cA(I_+ \times I_-)$,
where $\bigvee_\bullet \cM_\bullet$ denotes the von Neumann algebra generated by $\{\cM_\bullet\}$.
With same $U, \Omega$, the triple $(\cA, U, \Omega)$ with the extended $\cA$ satisfies the axioms of two-dimensional conformal net on $\bbR^{1+1}$.

When $(\cA, U, \Omega)$ is a conformal net on $\bbR^{1+1}$, the net $\cA$ contains nontrivial chiral components \cite{Rehren00}:
they are conformal nets $\cA_+, \cA_-$ on $S^1$ (see Section \ref{S1conformalnet}) such that $\cA$ is an extension of $\cA_+ \otimes \cA_-$
(see Section \ref{1d2dconformalnet}).
In this work, we do not assume that they are the maximal ones given in \cite{Rehren00}. Instead, we start with concrete examples of $\cA_+, \cA_-$
(the Heisenberg nets, Section \ref{sec:Heisenberg-net}) and construct (and classify) the extensions $\cA$.

\subsubsection{Conformal nets on \texorpdfstring{$S^1$}{S1}}\label{S1conformalnet}
Let $\cI$ be the set of non-empty, open, non-dense, proper intervals of $S^1$.
For $I \in \cI$, we denote by $I'$ the interior of $S^1 \setminus I$.
A \textbf{conformal net on $S^1$} on a Hilbert space $\cH_\kappa$ is a triple $(\cA_\kappa, U_\kappa, \Omega_\kappa)$,
where $\cA_\kappa$ (the subscript $\kappa$ is to remind the reader that the net is chiral, \ie defined on $S^1$) 
is an assignment to each $I \in \cI$ of a von Neumann algebra $\cA_\kappa(I)$ on $\cH_\kappa$,
$U_\kappa$ is a unitary projective representation of $\DiffSone$ on $\cH_\kappa$, $\Omega_\kappa \in \cH_\kappa$
such that
\begin{enumerate}[{(1dCN}1{)}]
 \item \textbf{Isotony: }if $I_1\subset I_2$, then $\cA_\kappa(I_1)\subset\cA_\kappa(I_2)$. \label{1dcn:isotony}
 \item \textbf{Locality:} if $I_1$ and $I_2$ are disjoint, then $\cA_\kappa(I_1)\subset\cA_\kappa(I_2)'$. \label{1dcn:locality}
 \item\textbf{Diffeomorphism covariance:} \label{1dcn:diff}
 For $I \in \cI, \gamma \in \DiffSone$,
 \[
  U_\kappa(\gamma)\cA_\kappa(I)U_\kappa(\gamma)^*=\cA_\kappa(\gamma  I).
 \]
 Furthermore, if $\supp \gamma$ is disjoint from $I$, then $\cAd U_\kappa(\gamma)(x) = x$ for $x \in \cA_\kappa(I)$.
 \item\textbf{Positivity of energy:}
 the restriction of $U_\kappa$ to the rotation subgroup has the spectrum contained in $\bbN$.
 \label{1dcn:positiveenergy}
 \item \textbf{Vacuum and the Reeh-Schlieder property:} there exists a unique (up to a phase)
 vector $\Omega_\kappa \in\cH_\kappa$ such that $U_\kappa(g)\Omega_\kappa=\Omega_\kappa$ for $g\in \Mob$
 and $\overline{\cA_\kappa(I)\Omega_\kappa}=\cH_\kappa$. \label{1dcn:vacuum}
 \setcounter{1dCN}{\value{enumi}}
\end{enumerate}
As we assume (1dCN\ref{1dcn:positiveenergy}),
$U_\kappa$ is a projective representation of $\DiffSone$ rather than of $\uDiffSone$, see \cite[Proposition 2.1]{AGT23Pointed}).

For a conformal net $\cA_\kappa$ on $S^1$,
the following properties are automatic:
\begin{itemize}
 \item $\cA_\kappa(I)$ are type III$_1$ factors \cite[Lemma 2.9]{GF93}
 \item Haag duality $\cA_\kappa(I)' = \cA_\kappa(I')$ \cite[Theorem 2.19]{GF93}
 \item Additivity $\cA_\kappa(I) = \overline{\bigcup_{I_\alpha \Subset I} \cA_\kappa(I_\alpha)}$ \cite[Section 3]{FJ96}
\end{itemize}

Let us recall some useful notions regarding representations of conformal nets (see \cite[Section 2.2.2]{AGT23Pointed}).
A \textbf{representation} of a conformal net $(\cA_\kappa, U_\kappa, \Omega_\kappa)$
on $S^1$ is a family of representations $\rho = \{\rho_I\}$ of $\{\cA_\kappa(I)\}_{I \in \cI}$ on $\cH_\rho$
such that if $I_1 \subset I_2$, then $\rho_{I_2}|_{\cA_\kappa(I_1)} = \rho_{I_1}$ (\textbf{compatibility}).
For any such irreducible representation, there is a multiplier representation $U_\rho$ of $\uDiffSone$,
defined by $U_\rho(\gamma) = \rho(U_\kappa(\gamma))$ if $\supp \gamma \subset I_1$ for some $I_1 \in \cI$,
such that for any $I \in \cI$ and $x \in \cA_\kappa(I)$ it holds that $\rho_{\gamma I}(\cAd U(\gamma)(x)) = \cAd U_\rho(\gamma)(\rho_I(x))$
\cite[Section 5.2]{AGT25Multiplier}.
In particular, if $\cH_\rho = \cH_\kappa$ (the vacuum Hilbert space), and $\rho_I(\cA_\kappa(I)) = \cA_\kappa(I)$ for each $I$,
we call $\rho$ an \textbf{automorphism}. Note that an automorphism is irreducible.
A representation on the same Hilbert space $\cH_\rho = \cH_\kappa$ is said to be \textbf{localized in $I$} if $\rho_{I'} = \id$.
where $I'$ is the interior of $S^1 \setminus I$. In this case, $\rho_{I}$ maps $\cA_\kappa(I)$ to itself by Haag duality.
Given $\rho_1$ localized in $I_1$, one can always find a unitarily equivalent automorphism $\rho_2$ localized in another interval $I_2$,
as local algebras are of type III.
If $I_1 \cup I_2 \subset I \in \cI$, there is a unitary operator $V \in \cA_\kappa(I)$ implementing the equivalence
between $\rho_1$ and $\rho_2$ by Haag duality. Such a $V$ is called a \textbf{charge transporter}.

For an automorphism $\rho$ and $\gamma \in \uDiffSone$,
$\rho^\gamma := \cAd U_\kappa(\gamma) \circ \rho \circ \cAd U_\kappa(\gamma^{-1})$ is localized in $\gamma I$
and is equivalent to $\rho$. We can take
$z_{\kappa,\rho}(\gamma) := U_\kappa(\gamma)U_\kappa^\rho(\gamma)^*$ as a charge transporter such that $\cAd z_{\kappa, \rho}\circ \rho = \rho^\gamma$.
We call $z_{\kappa,\rho}(\gamma)$ the \textbf{covariance cocycle} of $\rho$.

Let us remove $-1$ from $S^1$ and identify $S^1 \setminus \{-1\}$ with $(-\pi, \pi)$.
Let $\rho_1, \rho_2$ be two automorphisms localized in $I$ such that $\overline I \subset (-\pi, \pi)$.
Take $\tilde\rho_1, \tilde\rho_2$ localized in $I_1, I_2 \in \cI$, respectively,
such that $I_1 \cap I_2 = \emptyset$, $I_1, I_2 \subset (-\pi, \pi)$ and $\tilde\rho_j$ is equivalent to $\rho_j, j=1,2$.
For charge transporters $V_1, V_2$ between $\rho_1$ and $\tilde\rho_1$, $\rho_2$ and $\tilde \rho_2$, respectively,
$\varepsilon^{\pm}_{\rho_1, \rho_2} := \rho_2(V_1^*) V_2^* V_1 \rho_1(V_2)$,
where $\pm$ depends on whether $I_1$ is on the future (right, $+$) or the past (left, $-$) of $I_2$,
is called the \textbf{(DHR) braiding} between $\rho_1, \rho_2$, after \cite{DHR71, FRS89, GF93}.
The braiding does not depend on the choice of $\tilde \rho_j$ or the charge transporters under these conditions.

If we take $\gamma$ such that $\gamma I$ is on the future or the past of $I$, $I\cap \gamma I = \emptyset$
and $V_1 = \bb1, V_2 = z_{\kappa,\rho_2}(\gamma)$, then the braiding is ($+$ corresponds to $\gamma I$ being in the past of $I$,
$-$ corresponds to the future)
\begin{align}\label{eq:braiding-z}
 \varepsilon^\pm_{\rho_1, \rho_2} = z_{\kappa,\rho_2}(\gamma)^* \rho_1(z_{\kappa,\rho_2}(\gamma)).
\end{align}

Let us consider a pair of conformal nets $(\cA_j, U_j, \Omega_j)$ on $S^1$ on the Hilbert spaces $\cH_j$, $j=1,2$. We can consider its tensor product
$\cA_1\otimes \cA_2(I) := \cA_1(I)\otimes \cA_2(I)$ and it is again a conformal net\footnote{This should be distinguished from
the two-dimensional conformal net with $\cA_1$ as the chiral component, $\cA_2$ as the antichiral component, see Section \ref{1d2dconformalnet}.}
on $S^1$ with respect to $U_1\otimes U_2$
and $\Omega_1\otimes \Omega_2$ on $\cH_1\otimes \cH_2$.
For two representations $\sigma_j$ of $\cA_j$, respectively $j=1,2$,
the tensor product $\sigma_1\otimes \sigma_2$ is a representation of $\cA_1\otimes \cA_2$.
Let $\rho_j, \sigma_j, j=1,2$ be localized representations of $\cA_j$, respectively.
It is easy to show, if $V_j, W_j, j=1,2$ are charge transporters of $\rho_j, \sigma_j$ as before,
then $V_j\otimes W_j$ are a charge transporter for $\rho_j\otimes \sigma_j$,
thus $\varepsilon^\pm_{\rho_1\otimes\rho_2, \sigma_1\otimes\sigma_2} = \varepsilon^\pm_{\rho_1, \sigma_1}\otimes\varepsilon^\pm_{\rho_2, \sigma_2}$.

\subsubsection{Two-dimensional tensor product nets and extensions}\label{1d2dconformalnet}
Let $(\cA_\pm, U_\pm, \Omega_\pm)$ be two conformal nets on $S^1$ on the Hilbert space $\cH_\pm$, respectively.
From them, we can construct a two-dimensional conformal net $(\cA_+\otimes\cA_-, U_+\otimes U_-, \Omega_+\otimes \Omega_-)$
on the Hilbert space $\cH_+\otimes \cH_-$ by setting (first for double cones, then extending to arbitrary bounded regions
as explained in Section \ref{2dconformalnet})
\begin{itemize}
 \item $\cA_+\otimes\cA_-(I_+\times I_-) := \cA_+(I_+) \otimes \cA_-(I_-)$
 \item $U_+\otimes U_-(\gamma_+\times \gamma_-) := U_+(\gamma_+)\otimes U_-(\gamma_-)$
 \item $\Omega_+\otimes \Omega_- \in \cH_+\otimes\cH_-$
\end{itemize}
The $\pm$-components are referred to as the \textbf{chiral} and \textbf{antichiral} components, respectively.
Such a net \textit{a priori} extends to the Einstein cylinder (see \cite{AGT23Pointed}), but by
(1dCN\ref{1dcn:positiveenergy}), it reduces to the torus.

One can also consider representations of such nets, requiring the compatibility condition as we did in Section \ref{S1conformalnet}.
We say that $\tau = \{\tau_O\}$, where $\tau_O$ is a representation of $\cA_+\otimes \cA_-(O)$,
where $O$ is a region on the Einstein cylinder, if $\tau_{\tilde O}|_{\cA(O)} = \tau_{O}$ for any $O \subset \tilde O$,
where $\tilde O$ is contained in a copy of the Minkowski space.
Such $\tau$ can be restricted to $\cA_+\otimes \bb1$, $\bb1\otimes \cA_-$
and gives representations of $\cA_+, \cA_-$ as conformal nets on $S^1$, by considering
the time-zero circle of the cylinder.

A generic two-dimensional conformal net $(\cA, U, \Omega)$ is an extension of such net in the following sense:
There are (two-dimensional) conformal nets of von Neumann algebras $\underline\cA_+$, $\underline\cA_-$
acting on the Hilbert space $\cH$ of $\cA$, which satisfy all the axioms (2dCN\ref{2dcn:isotony})--(2dCN\ref{2dcn:positiveenergy}) except that
$\Omega$ is not required to be cyclic for $\underline{\cA}_\pm$, and in addition,
if $x \in \underline\cA_+(I_+\times I_-)$, then $\cAd U(\iota\times \gamma_-)(x) = x$ for all $\gamma_- \in \uDiffSone$ and $\iota$ the identity diffeomorphism,
and similarly, if $x \in \underline\cA_-(I_+\times I_-)$, then $\cAd U(\gamma_+\times \iota)(x) = x$ for all $\gamma_+ \in \uDiffSone$.
In this case, we can define (by a slight abuse of notations) $\underline\cA_\pm(I_\pm) := \underline\cA_\pm(I_+\times I_-)$,
where $\cA_\pm(I_\pm)$ does not depend on $I_\mp$, and
$\cH_\pm := \overline{\bigcup_{I \in \cI} \cA_\pm(I)\Omega}$
(which is equal to $\overline{\cA_\pm(I)\Omega}$ for any $I \in \cI$ by the Reeh-Schlieder argument, see \cite[Lemma 5.1]{BLM11} and \cite[Theorem 1.4]{GLW98})
and define the restrictions $\cA_\pm(I) := \underline{\cA}_\pm(I)|_{\cH_\pm}$.
They are conformal nets on $S^1$, as we assume (2dCN\ref{2dcn:cylinder}).
In this case, we say that $\cA$ is an \textbf{extension} $\cA_+\otimes \cA_-$
and we denote it as $\cA_+\otimes \cA_- \subset \cA$.

If $\cA$ is an extension of $\cA_+\otimes \cA_-$, then the map
$\underline \cA_\pm(I) \ni x \mapsto x|_{\cH_\pm} \in \cA_\pm(I)$ is an injective homomorphism
(as $\Omega$ is separating). The inverse map is a representation of $\cA_\pm(I)$.
Using this inverse map, we can consider the embedding $\cA_+\otimes \cA_- \to \cA$,
and it can be seen as a representation of $\cA_+\otimes \cA_-$ on $\cH$.

\subsection{The \texorpdfstring{$\rmU(1)$}{U(1)}-current algebra}\label{sec:U(1)}
\subsubsection{The \texorpdfstring{$\rmU(1)$}{U(1)}-current algebra and its modules}\label{sec:U(1)-algebra}

We will consider the infinite-dimensional Lie algebras known as the Heisenberg algebras.
As a special case with rank $1$, the Lie algebra is also known as the $\rmU(1)$-current algebra, especially in the context of
algebraic QFT \cite{BMT88}. For the convenience of the reader familiar with the latter, we will briefly summarize the structures.
See also \cite[Section 6]{AGT23Pointed}.

The \textbf{$\rmU(1)$-current algebra} is the infinite-dimensional Lie algebra generated by the symbols
$\{J_m\}_{m \in \bbZ}$ and $K$, with the following commutation relations
\begin{align*}
 [J_m, J_n] = \delta_{m,-n}m K
\end{align*}
and $K$ is a central element.

For each pair of values $c, \lambda \in \bbC$, we can consider the so-called Verma module of the $\rmU(1)$-current algebra:
it is spanned by a distinguished vector $\Omega_\lambda$ and vectors of the form
\begin{align*}
 J_{-m_1}\cdots J_{-m_k}\Omega_\lambda, \qquad m_j \in\bbZ, m_j > 0
\end{align*}
with the properties that $K=c\bb1, J_0 = \lambda\bb1$ and
\begin{align*}
 J_m \Omega_\lambda = 0, \qquad m > 0.
\end{align*}
This representation (module) of the $\rmU(1)$-current algebra is denoted by $M_{\frh}(c,\lambda)$ 
(where $\frh$ is a 1-dimensional $\bbR$-vector space, \cf Section \ref{sec:Heisenberg-algebra}).
This can be seen as the algebraic symmetric Fock space based on $\{J_m\Omega_\lambda, m < 0\}$.
On $M_\frh(c, \lambda)$ with $c \neq 0$, $\{c^{-\frac12} J_m\}_{m \in \bbZ}$ satisfy the same commutation relations 
as the $\{J_m\}_{m \in \bbZ}$ with $c=1$, so we may and do assume that $c=1$.

If $\lambda \in \bbR$, the representation is \textbf{unitary}, in the sense that there is a (positive definite)
invariant
scalar product $\<\cdot, \cdot\>$ with respect to which $(J_m)^* = J_{-m}$
and $\langle \Omega_\lambda, \Omega_\lambda \rangle = 1$.
Note that, by the invariance property, the value of $\<\cdot, \cdot\>$ is uniquely determined by $\lambda$ and $c=1$.
Moreover, $M_{\frh}(1,\lambda)$'s are isomorphic to each other as vector spaces with a scalar product,
by identifying the vectors $J_{-m_1}\cdots J_{-m_k}\Omega_\lambda$ (the values $\lambda$ do not appear here, nor in the scalar product).
The module $M_\frh(1, 0)$ with $\lambda = 0$ is called the \textbf{vacuum module} and its completion
accommodates the chiral conformal net.

\subsubsection{The \texorpdfstring{$\rmU(1)$}{U(1)}-current net and its representations}\label{sec:U(1)-net}

Let us take the vacuum module $M_\frh(1, 0)$ of the $\rmU(1)$-current algebra
and its Hilbert space completion $\cH_{\frh, 0}$ with respect to the invariant scalar product $\<\cdot, \cdot\>$.
We introduce the operators $L_m$ on $M_\frh(1, 0)$ (or on $M_\frh(1, \lambda)$) by the Sugawara formula
\begin{align}\label{eq:sugawara}
 L_m := \frac12\sum_{k \in \bbZ} :J_{m-k}J_k:, \quad m \in \bbZ.
\end{align}
They satisfy the commutation relations $[L_m, J_n] = -nJ_{m+n}$, and
the Virasoro algebra relations $[L_m, L_n] = (m-n)L_{m+n} + \frac{c}{12}m(m^2-1) \delta_{m,-n}$ with central charge $c=1$.
For $m_j > 0$
\begin{align*}
 L_0 \cdot J_{-m_1}\cdots J_{-m_k}\Omega_0 = \left(\sum_{j=1}^k m_j\right)J_{-m_1}\cdots J_{-m_k}\Omega_0.
\end{align*}
The operator $L_0$ is already diagonalized, essentially self-adjoint and positive on $M_\frh(1, 0) \subset \cH_{\frh, 0}$.
We denote its closure by the same symbol $L_0$, and refer to it as the conformal Hamiltonian. 
Let $C^\infty(L_0) := \bigcap_{j=1}^\infty \Dom(L_0^j)$.
It holds that $\|J_m\Psi\| \le (m+1)\|(L_0 + \bb1)\Psi\|$ for any $\Psi \in C^\infty(L_0)$ \cite[(2.23)]{BS90},
which we call the \textbf{linear energy bounds} for the $\rmU(1)$-current $\{J_m\}_{m\in\bbZ}$ in the vacuum Hilbert space $\cH_{\frh, 0}$.

Let $f \in C^\infty(S^1)$. We regard $f$ as a $2\pi$-periodic smooth function on $\bbR$.
Let us write its Fourier components by $f_n = \frac1{2\pi}\int_{-\pi}^{\pi} \rme^{-\rmi nt}f(t)\rmd t$. On the domain $C^\infty(L_0)$, we define
\begin{align*}
 J(f)\Psi := \sum_{m \in \bbZ} f_m J_m \Psi,
\end{align*}
which is convergent in $\cH_{\frh, 0}$ by the linear energy bounds, and $J(f)\Psi \in C^\infty(L_0)$. It defines an essentially self-adjoint operator whose closure we denote again by $J(f)$, and $C^\infty(L_0)$ is an invariant core for $J(f)$ for every $f \in C^\infty(S^1)$, \cf \cite[Appendix A]{AGT23Pointed}. We denote also $W(f) := \rme^{\rmi J(f)}$.

For $I \in \cI$ (see Section \ref{2dconformalnet}), we define
\begin{align*}
 \cA_\frh(I) := \{W(f) : \supp f \subset I\}''.
\end{align*}
The representation of the Virasoro algebra $\{L_m\}_{m\in\bbZ}$ integrates to a positive-energy projective unitary representation $U_0$
of $\DiffSone$ such that
\begin{align*}
 \cAd U_0(\gamma)(J(f)) = J(f\circ \gamma^{-1}), \quad \cAd U_0(\gamma)(W(f)) = W(f\circ \gamma^{-1}),
\end{align*}
where the first equality is as unbounded operators (including the domains) \cite[Proposition 6.4]{CKLW18}.
The triple $(\cA_\frh, U_0, \Omega_0)$ satisfies the axioms of conformal net on $S^1$,
and we call it the \textbf{$\rmU(1)$-current net}.

Let $h \in C^\infty(S^1)$. For $I_1 \in \cI$, we define
(note that we changed convention from \cite{AGT23Pointed}, and the integral is normalized by $\frac1{2\pi}$)
\begin{align*}
 \sigma_{h, I_1}(W(f)) = \rme^{\frac{\rmi}{2\pi} \int f(t)h(t)dt}W(f).
\end{align*}
Then this is an automorphism of $\cA_\frh$.
Any two such automorphisms $\sigma_{h_1, I_1}, \sigma_{h_2, I_1}$ are unitarily equivalent
if and only if $\frac{1}{2\pi} \int_{-\pi}^\pi h_1(t)dt = \frac{1}{2\pi} \int_{-\pi}^\pi h_2(t)dt$.
We know by \cite[Section 5.2]{AGT25Multiplier}, \cite[Theorem 2.2]{Gui21Categorical} that any such automorphism (representation) is diffeomorphism covariant,
in the sense that there is a positive-energy unitary multiplier representation $U_h$ of $\uDiffSone$
such that
\begin{align*}
 \cAd U_h(\gamma)(\sigma_h(W(f))) = \sigma_h(\cAd U_0(\gamma)(W(f))) = \sigma_h(W(f \circ \gamma^{-1})).
\end{align*}
Let $I \in \cI$. If we take $h \in C^\infty(S^1)$, $\supp h \subset I$,
then $\sigma_h$ is an automorphism of $\cA_\frh$ localized in $I$.

Let us fix $h$ such that $\frac1{2\pi}\int_{-\pi}^{\pi} h(t)dt = 1$.
For $\alpha \in \bbR$, we say that the automorphism $\sigma_{\alpha h}$ has charge $\alpha$.
If $h$ is localized in $I$, then by the calculations of \cite{AGT23Pointed}\footnote{In \cite[Section 6.3]{AGT23Pointed}, we took the convention
$\int h(t)dt = 1$ and the automorphism was given by $W(f)\mapsto \rme^{\rmi\int f(t)h(t)dt} W(f)$.
In this paper we have $\frac1{2\pi}\int h(t)dt = 1$, thus $h$ gets multiplied by $2\pi$. On the other hand, in the calculations of braiding,
we had integral such as $\int h_j(t)H_{j'}(t)dt$, where $h_j = \alpha_j h$ and $H_{j'}$ is a primitive of $h_{j'}$.
We replace the integral by $\frac1{2\pi}\int h_j(t)H_{j'}(t)dt$, thus altogether the value has a factor of $2\pi$
compared with \cite{AGT23Pointed}. In \cite{AGT23Pointed} we had $\varepsilon_{\alpha h, \beta h}^\pm = \rme^{\pm\frac{\alpha\beta}2}$,
thus it translates to the formula below.}, for $\alpha, \beta \in \bbR$, the braiding is
\begin{align*}
 \varepsilon_{\alpha h, \beta h}^\pm = \rme^{\pm \rmi \pi\alpha\beta}.
\end{align*}

\subsection{Heisenberg algebras}\label{sec:Heisenberg}

\subsubsection{The Heisenberg Lie algebras and their modules}\label{sec:Heisenberg-algebra}
Let $\frh$ be a finite-dimensional $\bbR$-vector space, $(\cdot, \cdot)_\frh$ be a \textit{positive-definite} symmetric bilinear form,
that is $(\pmb \alpha, \pmb \alpha)_\frh \geq 0$, and $=0$ if and only if $\pmb \alpha = \pmb 0$, and $(\pmb \alpha, \pmb \beta)_\frh = (\pmb \beta, \pmb \alpha)_\frh$, for every $\pmb \alpha, \pmb \beta \in \frh$, and linear on both variables.
If $\frh$ is 1-dimensional with the standard scalar product, the following construction gives the $\rmU(1)$-current algebra of Section \ref{sec:U(1)-algebra}.
See also \cite[Section 3.5]{Kac98}.

We consider the Heisenberg Lie algebra $\widehat\frh$, which is spanned by elements of the form
$\pmb{\alpha}(m)$, where $\pmb{\alpha} \in \frh, m \in \bbZ$
(with the relations $(\pmb{\alpha} + \pmb{\beta})(m) = \pmb{\alpha}(m) + \pmb{\beta}(m)$ and $(\lambda \pmb{\alpha})(m) = \lambda \pmb{\alpha}(m)$)
and a central element $K$ (that is, as a linear space, $\widehat\frh \cong (\bigoplus_\bbZ \frh) \oplus \bbC K$
and $\pmb{\alpha}(m)$ has the element $\pmb{\alpha} \in \frh$ in the $m$-th summand),
with the commutation relations
\begin{align*}
 [\pmb{\alpha}(m), \pmb{\beta}(n)] &= \delta_{m,-n}m\cdot (\pmb{\alpha}, \pmb{\beta})_\frh K.
\end{align*}
This is related with the $\rmU(1)$-current algebra as follows.
\begin{itemize}
 \item If $\frh$ is $1$-dimensional, then by taking a vector $\pmb{\upsilon} \in \frh$ with $(\pmb{\upsilon}, \pmb{\upsilon})_\frh = 1$,
 we obtain $[\pmb{\upsilon}(m), \pmb{\upsilon}(n)] = \delta_{m, -n}m K$, which is exactly the commutation relation of the $\rmU(1)$-current algebra,
 by identifying $\pmb{\upsilon}(m) = J_m$.
 \item If $\frh$ is $N$-dimensional, we can take an orthonormal basis.
 For orthogonal vectors $\pmb{\alpha}, \pmb{\beta} \in \frh$, one has $[\pmb{\alpha}(m), \pmb{\beta}(n)] = 0$ also for $m=-n$.
 For each $\pmb{\alpha} \in \frh$ with $(\pmb{\alpha}, \pmb{\alpha})_\frh = 1$,
 the elements $\{\pmb{\alpha}(m)\}_{m \in \bbZ}, K$ span a subalgebra isomorphic to the one-dimensional case.
 Note that, 
 the (commutative) subalgebra generated by $\pmb{\alpha}(m), m < 0$ is isomorphic to the \textit{direct sum} of $N$ copies of the subalgebra
 generated by $J_m, m < 0$ of the $\rmU(1)$-current algebra,
 that is, if $\pmb{\alpha} = (\alpha_1, \cdots, \alpha_n) \in \frh$,
 each $\alpha_j(m)$ can be seen as a copy of $\alpha_j J_m$.
\end{itemize}
As in the 1-dimensional case, there are irreducible modules of this algebra parametrized by $\pmb{\lambda} \in \frh$ as follows.
For each $\pmb{\lambda} \in \frh$, we denote $M_{\frh}(1, \pmb{\lambda}) = U(\widehat\frh)/J_{\pmb{\lambda}}$,
where $U(\widehat\frh)$ is the universal enveloping algebra of $\widehat\frh$, $J_{\pmb{\lambda}}$ is the ideal generated by $\pmb{\alpha}(m)$ for $m > 0$, $\pmb{\alpha}(0)-(\pmb{\lambda}, \pmb\alpha)_\frh$, and by $K-1$.
More explicitly, let $\{\pmb{\upsilon}_j\}_{j=1, \cdots, N}$ be a basis of $\frh$,
If $\pmb{\lambda} = \sum_j \lambda_j \pmb{\upsilon}_j, \lambda_j \in \bbR$, then this module is isomorphic to
the tensor product $M_{\bbR \pmb{\upsilon}_1}(1, \lambda_1)\otimes \cdots\otimes M_{\bbR \pmb{\upsilon}_N}(1, \lambda_N)$
of modules in Section \ref{sec:U(1)-algebra}, where $\bbR \pmb{\upsilon}_j$ are the $1$-dimensional vector spaces spanned by $\pmb{\upsilon}_j$
(see \cite[Theorem 3.43]{DG13MathematicsOfQuantization})
and each $\pmb{v}_j(m)$ acts on the $j$-th component as $J_m$.
If $\pmb{\lambda} = \pmb 0$, the module $M_\frh(1, \pmb 0)$ is called the \textbf{vacuum module} here as well.

\subsubsection{The Heisenberg conformal net and its representations}\label{sec:Heisenberg-net}
Let $\frh$ be $N$-dimensional with $(\cdot, \cdot)_\frh$ as in Section \ref{sec:Heisenberg-algebra}.
The following construction gives the $N$-fold tensor product of the $\rmU(1)$-current net of Section \ref{sec:U(1)-net}.

We consider the Heisenberg Lie algebra $\widehat\frh$ as above.
Let $M_\frh(1, \pmb{\lambda})$ be one of the representations above.
We denote the Hilbert space completion with respect to the invariant scalar product $\<\cdot, \cdot\>$
by $\cH_{\pmb \lambda}$, or when we need to stress $\frh$, by $\cH_{\frh, \pmb{\lambda}}$.

Fix an orthonormal basis $\{\pmb{\upsilon}_j\}_{j=1, \cdots, N}$ of $\frh$. On each module $M_\frh(1, \pmb{\lambda})$ (we will be mainly interested in $\pmb{\lambda} = \pmb{0}$ for the following construction)
we introduce the representation of the Virasoro algebra given by the $N$-dimensional analogue of the Sugawara formula
\begin{align*}
 L_m := \frac12 \sum_{j=1}^N \sum_{k\in \bbZ} :\pmb{v}_j(m-k)\pmb{v}_j(k):
\end{align*}
Actually, it is easy to check that this does not depend on the choice of the orthonormal basis $\{\pmb{\upsilon}_j\}$.
The operators $\{L_m\}$ satisfy the Virasoro relations with central charge $c=N$.
Indeed, each $\pmb{v}_j(m)$ acts on the $j$-th tensor component as $J_m$.
In this way, $L_m$ can be seen as the $N$-fold tensor product
$\bigoplus_{j=1}^N \bb1\otimes\cdots \otimes\underset{j\text{-th}}{L^{\rmU(1)}_m}\otimes\cdots \otimes\bb1$
of the representation of the Virasoro algebra for the $\rmU(1)$-current algebra.

As in the $1$-dimensional case, for each $f \in C^\infty(S^1)$ with $f_m = \frac1{2\pi}\int_{-\pi}^{\pi} \rme^{-\rmi mt}f(t)dt$
and $\pmb{\alpha} \in \frh$,
we introduce the densely defined operator on $\cH_{\pmb 0}$:
\begin{align*}
 \pmb{\alpha}(f) := \sum_{m\in\bbZ} f_m\pmb{\alpha}(m),
\end{align*}
(with a slight abuse of notation ``$\pmb{\alpha}(\cdot)$''): $\pmb{\alpha} \in \frh$, $\pmb{\alpha}(m) \in \End(M_\frh(1, \pmb{0}))$ for $m\in \bbZ$,
but here $\pmb{\alpha}(\cdot)$ will define a Wightman field on $S^1$.
For $\pmb{\alpha}, \pmb{\beta} \in \frh$, $f,g \in C^\infty(S^1)$, it holds that
\begin{align*}
 [\pmb{\alpha}(f), \pmb{\beta}(g)] = (\pmb{\alpha}, \pmb{\beta})_\frh \cdot \frac1{2\pi}\int_{-\pi}^\pi f'(t)g(t)dt.
\end{align*}
In particular, $\pmb{\alpha}(f)$ and $\pmb{\beta}(g)$ commute when $\supp f$ and $\supp g$ are disjoint.

Let us consider the vacuum module $M_\frh(1, \pmb{0})$.
For each $\pmb{\alpha} \in \frh$, the components $\{\pmb{\alpha}(m)\}_{m \in \bbZ}$ satisfy the linear energy bounds with respect to $L_0$.
To see this, one can take an orthonormal basis of $\frh$ whose first vector is a scalar multiple of $\pmb{\alpha}$, then the $\pmb{\alpha}(m)$
are bounded by the first component of $L_0$,
thus by the whole operator $L_0$:
$\|\pmb{\alpha}(m) \Psi\| \le \|\pmb{\alpha}\|\cdot (m+1)\|(L_0 + \bb1)\Psi\|$ for every $\Psi \in C^\infty(L_0) = \bigcap_{j=1}^\infty \Dom(L_0^j)$.

As in the $1$-dimensional case, we obtain a positive-energy projective unitary representation $U_{\pmb{0}}$ of $\DiffSone$, which satisfies
for each $\pmb{\alpha} \in \frh$
\begin{align*}
 \cAd U_{\pmb{0}}(\gamma)(\pmb{\alpha}(f)) = \pmb{\alpha}(f\circ \gamma^{-1}),
\end{align*}
including the domain $C^\infty(L_0)$. Let us define
\begin{align*}
 \cA_\frh(I) := \{\rme^{\rmi \pmb{\alpha}(f)}: \pmb{\alpha} \in \frh, f \in C^\infty(S^1), \supp f \subset I\}.
\end{align*}
Together with the vacuum vector $\Omega_{\pmb{0}}$, the triple $(\cA_\frh, U_{\pmb{0}}, \Omega_{\pmb{0}})$
constitutes a conformal net on $S^1$. There is a natural unitary equivalence
$\cA_\frh \cong \cA_{\bbR\pmb{\upsilon}_1}\otimes \cdots \otimes \cA_{\bbR\pmb{\upsilon}_N}$, where each $\cA_{\bbR\pmb{\upsilon}_1}$
is a copy of the $\rmU(1)$-current net from Section \ref{sec:U(1)-net}.

The sectors (equivalence classes of irreducible representations) of the net can be parametrized by $\frh$.
Indeed, let us take a basis $\{\pmb{\upsilon}_j\}$ of $\frh$.
By \cite[Proposition 3.11]{LT18}, which applies to the $\rmU(1)$-current net by \cite[Example 3.9]{LT18},
and by the classification in \cite[Section 2B, 3B]{BMT88} together with \cite{CW-localenergy} (\cf \cite{Zellner17})
any sector $\sigma$ of $\cA_\frh \cong \cA_{\pmb{\upsilon}_1\bbR}\otimes \cdots \otimes \cA_{\pmb{\upsilon}_N\bbR}$
is of the form $\sigma_{\lambda_1 h}\otimes\cdots \otimes\sigma_{\lambda_N h}$,
where $h \in C^\infty(S^1)$ such that $\frac1{2\pi}\int_{-\pi}^\pi h(t)dt = 1$,
giving the parametrization $(\lambda_1, \cdots, \lambda_N) \in \bbR^N \cong \frh$,
whose Hilbert space we denote by $\cH_{\pmb{\lambda}h}$ (or $\cH_{\frh, \pmb{\lambda}h}$ when necessary).
These sectors correspond to the module $M_\frh(1, \pmb{\lambda}), \pmb{\lambda} = \sum_j \lambda_j \pmb{\upsilon}_j$
(up to a unitary equivalence).

Let us fix $h\in C^\infty(S^1)$ such that $\frac1{2\pi}\int_{-\pi}^{\pi} h(t)dt = 1$.
We know that, for $\alpha, \beta \in \bbR$, the braiding of two sectors $\sigma_{\alpha h}, \sigma_{\beta h}$ of the $\rmU(1)$-current net is
$\varepsilon_{\alpha h, \beta h}^\pm = \rme^{\pm\rmi\pi\alpha\beta}$.
If $\frh$ is $N$-dimensional, the braiding of the tensor product is the tensor product of the braidings, see Section \ref{S1conformalnet}.
Let us write in the basis $\pmb{\alpha} = \sum_j \alpha_j \pmb{\upsilon}_j, \pmb{\beta} = \sum_j \beta_j \pmb{\upsilon}_j \in \frh$,
and $\sigma_{\pmb{\alpha}h} := \bigotimes_j \sigma_{\alpha_j h}, \sigma_{\pmb{\beta}h} := \bigotimes_j \sigma_{\beta_j h}$.
Then the braiding of the two sectors $\sigma_{\pmb{\alpha}h}, \sigma_{\pmb{\beta}h}$ of $\cA_\frh$ is 
\begin{align}\label{eq:braidingNdim}
\varepsilon_{\pmb{\alpha}h, \pmb{\beta}h}^\pm = \bigotimes_{j=1}^N \varepsilon^\pm_{\alpha_j\pmb{v}_jh, \beta_j\pmb{v}_jh} 
=\rme^{\pm\rmi\pi \sum_j \alpha_j \beta_j} = \rme^{\pm\rmi\pi(\pmb{\alpha}h, \pmb{\beta})_{\frh}},
\end{align}
where we used that the braidings $\varepsilon^\pm_{\alpha_j\pmb{v}_j h, \beta_j\pmb{v}_j h}$ are scalar in this case.
Furthermore, by the concrete construction, we have $\sigma_{\pmb{\alpha}h}\sigma_{\pmb{\beta}h} = \sigma_{(\pmb{\alpha} + \pmb{\beta})h}$.
Although the Hilbert space of the representation $\sigma_{\pmb{\alpha}h}$ remains the same $\cH_{\pmb 0}$,
when we stress that the net $\cA_\frh$ is represented by $\sigma_{\pmb{\alpha}h}$, we denote the Hilbert space by
$\cH_{\pmb{\alpha}h}$.

Each representation $\sigma_{\pmb{\alpha}h}$ is diffeomorphism covariant \cite[Section 5.2]{AGT25Multiplier}:
there is a unitary multiplier representation $U_{\pmb{\alpha}h}$ of $\uDiffSone$ 
such that
\begin{align}\label{eq:covariance-chiral-sector}
 \cAd U_{\pmb{\alpha}h}(\gamma)(\sigma_{\pmb{\alpha}h}(x)) = \sigma_{\pmb{\alpha}h}(\cAd U_{\pmb{0}}(\gamma)(x)),
 \qquad x \in \cA_\frh(I_1), I_1 \in \cI, \gamma \in \uDiffSone.
\end{align}

By taking the constant function $\frac{\pmb{\alpha}}{2\pi}$ (we put $\frac1{2\pi}$ to keep the normalization),
we have $\sigma_{\pmb{\alpha} h} \cong \sigma_{\frac{\pmb{\alpha}}{2\pi}}$.
While $\sigma_{\frac{\pmb{\alpha}}{2\pi}}$ is not localized in any interval,
it can be seen as the representation corresponding to the module $M_\frh(1, \pmb{\lambda})$
\textit{without} unitary equivalence.

\section{Two-dimensional extensions}\label{sec:2d}
From now on, we fix an $N$-dimensional $\bbR$-vector space $\frh$ together with a \textit{positive-definite} symmetric bilinear form $(\cdot, \cdot)_\frh$
as in Section \ref{sec:Heisenberg-algebra}. In this Section, we construct a two-dimensional conformal net $\cA_Q$
that whose chiral components extend the Heisenberg nets, associated with a certain even lattice $Q$.

\subsection{Chiral and antichiral components}\label{sec:2d-chiral}
Let $p$ be an orthogonal projection on $\frh$ with respect to $(\cdot, \cdot)_\frh$. We consider $p\frh$ to correspond to the chiral component
and $\bar p \frh$ to correspond to the antichiral component,
where $\bar p = 1-p$ is the orthogonal complement of $p$. Let us denote the pair $(\frh, p)$ as $\frh_p$
(see Section \ref{sec:vertex} for the current algebra $\widehat{\frh_p}$ with the distinction between the chiral and antichiral parts).
As each $p\frh, \bar p\frh$ are finite-dimensional $\bbR$-vector spaces,
we have the associated Heisenberg conformal nets $\cA_{p\frh}, \cA_{\bar p\frh}$ on $S^1$
on $\cH_{p\frh, \pmb 0}, \cH_{\bar p\frh, \pmb 0}$, respectively.
Then, on the Hilbert space $\cH_{p\frh, \pmb 0}\otimes \cH_{\bar p\frh, \pmb 0} \cong \cH_{\frh, \pmb 0}$, which we also denote by $\cH_{\frh_p, \pmb 0}$,
we have the two-dimensional conformal net $\cA_{p\frh}\otimes \cA_{\bar p\frh}$ as in Section \ref{1d2dconformalnet},
which we shall also denote by $\cA_{\frh_p}$.

We construct irreducible representations of the two-dimensional net $\cA_{p\frh}\otimes \cA_{\bar p\frh}$
in the sense of Section \ref{1d2dconformalnet}:
Let us fix $I \in \cI$ and $h \in C^\infty(S^1)$ with $\frac1{2\pi}\int_{-\pi}^{\pi} h(t)dt = 1$
and $\supp h \subset I$.
Then there is a one-to-one correspondence between equivalence classes of
irreducible representations of $\cA_{p\frh}\otimes \cA_{\bar p\frh}$ and $\pmb{\lambda} \in \frh$
(see Lemma \ref{pr:tensorproduct2dsector} for classification),
where the representation of the algebra $\cA_{p\frh}(I_+)\otimes \cA_{\bar p\frh}(I_-)$
is $\sigma_{p\pmb{\lambda}h, I_+}\otimes \sigma_{\bar p\pmb{\lambda}h, I_-}$.
Again, the Hilbert spaces of these representations remain the same $\cH_{\pmb{0}} (=\cH_{\frh_p, \pmb 0})$, but in order to
indicate the representation, we denote it by
$\cH_{\frh_p, \pmb{\lambda}h} \cong \cH_{p\pmb{\lambda}h}\otimes \cH_{\bar p\pmb{\lambda}h}$.

\subsection{Twisted lattice algebra}\label{sec:2d-twist}
For $\pmb{\alpha}, \pmb{\beta} \in \frh$, define $(\pmb{\alpha}|\pmb{\beta}) := (p\pmb{\alpha}, p\pmb{\beta})_{\frh} - (\bar p\pmb{\alpha}, \bar p \pmb{\beta})_{\frh}$.
Let $Q$ be an (not necessarily $N$-dimensional\footnote{In the usual convention, a lattice in an $N$-dimensional vector space $\frh$
is a $\bbZ$-span of a basis of $\frh$. Here we allow $Q$ to be generated by linearly independent elements,
not necessarily a basis. To stress this difference, we write ``not necessarily $N$-dimensional''.})
\emph{even} lattice in $\frh$ with respect to $(\cdot|\cdot)$, that is, an additive subgroup $Q$ of $(\frh,+)$ spanned by
a set of linearly independent elements of $\frh$, such that $(\pmb{\alpha}|\pmb{\beta}) \in \bbZ$ for any pair $\pmb{\alpha}, \pmb{\beta} \in Q$, and $(\pmb{\alpha}|\pmb{\alpha}) \in 2\bbZ$ for any $\pmb{\alpha} \in Q$.

As in \cite[Section 5.5]{Kac98}, 
for an ordered basis $\pmb{\upsilon}_1, \cdots, \pmb{\upsilon}_{N_0}$ of $Q$, $N_0 \le N$, let
\begin{align}\label{eq:twococycle}
 \epsilon(\pmb{\upsilon}_i, \pmb{\upsilon}_j)
 &:= \begin{cases}
     (-1)^{(\pmb{\upsilon}_i | \pmb{\upsilon}_j) + (\pmb{\upsilon}_i | \pmb{\upsilon}_i)(\pmb{\upsilon}_j | \pmb{\upsilon}_j)} = (-1)^{(\pmb{\upsilon}_i | \pmb{\upsilon}_j)} 
     & \text{ if }i < j \\
     (-1)^{((\pmb{\upsilon}_i|\pmb{\upsilon}_i)+(\pmb{\upsilon}_i|\pmb{\upsilon}_i)^2)/2} = (-1)^{(\pmb{\upsilon}_i|\pmb{\upsilon}_i)/2}& \text{ if }i = j \\
     1 & \text{ if }i > j
    \end{cases},
\end{align}
where the equalities hold as a special case of \cite[Equations (5.5.11), (5.5.12)]{Kac98} because the lattice $Q$ is even,
and extend $\epsilon$ to $Q\times Q$ biadditively
(that is, $\epsilon(\pmb{\alpha} + \pmb{\beta}, \pmb{\gamma}) = \epsilon(\pmb{\alpha}, \pmb{\gamma})\epsilon(\pmb{\beta}, \pmb{\gamma})$
and $\epsilon(\pmb{\alpha}, \pmb{\beta} + \pmb{\gamma}) = \epsilon(\pmb{\alpha}, \pmb{\beta})\epsilon(\pmb{\alpha}, \pmb{\gamma})$).
Then $\epsilon$ is a $\{\pm 1\}$-valued $2$-cocycle of $Q$, see \cite[Remark 5.5a]{Kac98}, \ie $\epsilon(\pmb{\beta} + \pmb{\gamma},\pmb{\alpha}) \epsilon(\pmb{\beta}, \pmb{\gamma}) = \epsilon(\pmb{\beta}, \pmb{\gamma} + \pmb{\alpha}) \epsilon(\pmb{\gamma},\pmb{\alpha})$, $\epsilon(\pmb{\alpha}, \pmb{0}) = \epsilon(\pmb{0}, \pmb{\alpha}) = 1$, $\epsilon(\pmb{\alpha}, \pmb{\beta}) = (-1)^{(\pmb{\alpha}|\pmb{\beta})} \epsilon(\pmb{\beta}, \pmb{\alpha})$ for every $\pmb{\alpha}, \pmb{\beta}, \pmb{\gamma} \in Q$, and it is such that $\epsilon(\pmb{\alpha}, \pmb{\alpha}) = (-1)^{(\pmb{\alpha}|\pmb{\alpha})/2}$ for every $\pmb{\alpha}\in Q$.

We define the twisted group algebra $\bbC_\epsilon[Q]$ (twisted by the 2-cocycle $\epsilon$), as the $\bbC$-algebra with basis
$\{e_{\pmb{\alpha}}\}_{\pmb{\alpha} \in Q}$, with the product given by
\begin{align}\label{eq:shift-def}
 e_{\pmb{\alpha}} e_{\pmb{\beta}} = \epsilon(\pmb{\alpha}, \pmb{\beta})e_{\pmb{\alpha} + \pmb{\beta}}.
\end{align}
For vectors $\pmb{\alpha}, \pmb{\beta} \in Q$, if $\pmb{\alpha} \neq \pmb{\beta}$, again because the lattice is even,
\begin{align}\label{eq:shift-commut}
 e_{\pmb{\alpha}} e_{\pmb{\beta}} = \epsilon(\pmb{\alpha}, \pmb{\beta}) e_{\pmb{\alpha} + \pmb{\beta}} = (-1)^{(\pmb{\alpha}|\pmb{\beta})} \epsilon(\pmb{\beta}, \pmb{\alpha}) e_{\pmb{\beta} + \pmb{\alpha}} = (-1)^{(\pmb{\alpha}|\pmb{\beta})}e_{\pmb{\beta}} e_{\pmb{\alpha}}
\end{align}
and if $\pmb{\alpha} = \pmb{\beta}$, we trivially have $e_{\pmb{\alpha}} e_{\pmb{\alpha}} = (-1)^{(\pmb{\alpha}|\pmb{\alpha})}e_{\pmb{\alpha}} e_{\pmb{\alpha}}$, as $(\pmb{\alpha}|\pmb{\alpha}) \in 2\bbZ$.

We define a (sesquilinear, linear in the second component) scalar product on $\bbC_\epsilon[Q]$ by
$\<e_{\pmb{\alpha}}, e_{\pmb{\beta}}\> := \delta_{\pmb{\alpha}, \pmb{\beta}}$.
With this scalar product, the left multiplication operators by $e_{\pmb{\alpha}}$ on $\bbC_\epsilon[Q]$ are unitary:
\begin{align*}
 \<e_{\pmb{\alpha}}e_{\pmb{\gamma}}, e_{\pmb{\alpha}}e_{\pmb{\beta}}\>
 &= \epsilon(\pmb{\alpha}, \pmb{\beta})\overline{\epsilon(\pmb{\alpha}, \pmb{\gamma})} \<e_{\pmb{\alpha} + \pmb{\gamma}}, e_{\pmb{\alpha} + \pmb{\beta}}\> \\
 &= \epsilon(\pmb{\alpha}, \pmb{\beta})\epsilon(\pmb{\alpha}, \pmb{\gamma}) \delta_{\pmb{\alpha} + \pmb{\gamma}, \pmb{\alpha} + \pmb{\beta}} \\
 &= \epsilon(\pmb{\alpha}, \pmb{\beta})\epsilon(\pmb{\alpha}, \pmb{\gamma}) \delta_{\pmb{\gamma}, \pmb{\beta}} \\
 &= \epsilon(\pmb{\alpha}, \pmb{\beta})^2 \delta_{\pmb{\gamma}, \pmb{\beta}} \\
 &= \<e_{\pmb{\gamma}}, e_{\pmb{\beta}}\>,
\end{align*}
because $\epsilon(\pmb{\alpha}, \pmb{\beta}), \epsilon(\pmb{\alpha}, \pmb{\gamma}) \in \{\pm 1\}$ for any $\pmb{\alpha}, \pmb{\beta}, \pmb{\gamma} \in Q$.

\begin{example}\label{ex:RR2example}
 Let $\frh = \bbR^2 = \bbR \oplus \bbR$ with the Euclidean scalar product $(\cdot,\cdot)_\frh$ .
 With respect to Section \ref{sec:2d-chiral}, we can take $p$ as the orthogonal projection to the first component,
 and $\bar p$ as the orthogonal projection to the second component, and define $(\cdot|\cdot)$ accordingly.
 Fix arbitrarily $R \in \bbR, R \neq 0$. We take the lattice $Q \subset \frh$ generated by
 $\frac{1}{\sqrt 2}(R \oplus R)$ and $\frac{1}{\sqrt 2}(R^{-1} \oplus (-R^{-1}))$. This is indeed an even lattice:
 for any $a, b, a_j, b_j \in \bbZ$, $j=1,2$, we have
 \begin{align}\label{eq:semidefscalarprodinQ2d-1}
  &\left(\frac{a}{\sqrt 2}(R \oplus R) + \frac{b}{\sqrt 2}(R^{-1} \oplus (-R^{-1}))
  \middle|\frac{a}{\sqrt 2}(R \oplus R) + \frac{b}{\sqrt 2}(R^{-1} \oplus (-R^{-1}))\right) = 2ab \in 2\bbZ, \\ \label{eq:semidefscalarprodinQ2d-2}
  &\left(\frac{a_1}{\sqrt 2}(R \oplus R) + \frac{b_1}{\sqrt 2}(R^{-1} \oplus (-R^{-1}))
  \middle|\frac{a_2}{\sqrt 2}(R \oplus R) + \frac{b_2}{\sqrt 2}(R^{-1} \oplus (-R^{-1}))\right) = a_1 b_2 + a_2 b_1 \in \bbZ. 
 \end{align}
 Note that, if $(x\oplus y)\in \frh$ satisfies
 $\left(x\oplus y\middle|\frac1{\sqrt 2}(R\oplus R)\right), \left(x \oplus y\middle|\frac1{\sqrt 2}(R^{-1}\oplus (-R^{-1}))\right) \in \bbZ$,
 then $(x\oplus y) \in \bbZ \frac1{\sqrt 2}(R\oplus R) + \bbZ \frac1{\sqrt 2}(R^{-1}\oplus (-R^{-1})) = Q$.
 Indeed, let us put
 \begin{align*}
  \left(x\oplus y\middle|\frac1{\sqrt 2}(R\oplus R)\right) =: m, \left(x \oplus y\middle|\frac1{\sqrt 2}(R^{-1}\oplus (-R^{-1}))\right) =: n,
 \end{align*}
 then $x - y = \sqrt 2 R^{-1} m, x + y = \sqrt 2 R n$,
 thus $x = \frac1{\sqrt 2}mR^{-1} + \frac1{\sqrt 2}nR, y = -\frac1{\sqrt 2}mR^{-1} + \frac1{\sqrt 2}nR$.
 This shows that $Q$ is maximal, \ie there is no integral lattice that includes $Q$ as a proper sublattice.
 
\end{example}

\subsection{Construction of two-dimensional conformal nets}\label{2dlatticenet}
\subsubsection{The Hilbert space and basic operators}
Let $\frh$ and $p$ be as in Section \ref{sec:2d-chiral}.
For $\pmb{\alpha} \in \frh$, we write $\pmb{\alpha} = p\pmb{\alpha} \oplus \bar p\pmb{\alpha}$ the orthogonal decomposition.
Let $Q \subset \frh$ be an (not necessarily $N$-dimensional) even lattice with respect to
$(\pmb{\alpha}|\pmb{\beta}) = (p\pmb{\alpha}, p\pmb{\beta})_\frh - (\bar p\pmb{\alpha}, \bar p\pmb{\beta})_\frh$ for $\pmb{\alpha},\pmb{\beta} \in Q$ as before.
Let us fix the 2-cocycle $\epsilon$ of $Q$ as in Section \ref{sec:2d-twist}, $I \in \cI$
and $h \in C^\infty(S^1)$ such that $\supp h \subset I$ and $\frac1{2\pi}\int_{-\pi}^{\pi} h(t)dt = 1$ as in Section \ref{sec:Heisenberg-net}.

We consider the Hilbert space
\begin{align}\label{eq:H_Q}
 \cH_Q &:= \bigoplus_{\pmb{\lambda} \in Q} \cH_{\frh_p, \pmb{\lambda}h}
 \cong \bigoplus_{\pmb{\lambda} \in Q} \cH_{p\pmb{\lambda}h}\otimes \cH_{\bar p\pmb{\lambda}h}.
\end{align}
Using the unitary equivalence (only as Hilbert spaces) $\cH_{\frh_p, \pmb{\lambda}h} \cong \cH_{\frh_p, \pmb{0}}$,
we also have the following unitary equivalence
\begin{align*}
 \cH_{Q} \cong \cH_{\frh_p, \pmb{0}} \otimes \overline{\bbC_\epsilon[Q]},
\end{align*}
where $\bbC_\epsilon[Q]$ is endowed with the scalar product $\<e_{\pmb{\alpha}}, e_{\pmb{\beta}}\> = \delta_{\pmb{\alpha}, \pmb{\beta}}$ as in the previous section and completed to get a Hilbert space.
The unitary left multiplication operators $e_{\pmb{\alpha}}$ act only on the second component, while
the operators of the net $\cA_{p\frh}\otimes \cA_{\bar p\frh}$ act on
$\cH_{\frh_p, \pmb{0}}\otimes e_{\pmb{\lambda}} \cong \cH_{p\pmb{\lambda}h}\otimes \cH_{\bar p\pmb{\lambda}h}$
by $\sigma_{p\pmb{\lambda}h}\otimes \sigma_{\bar p\pmb{\lambda}h}$ (see Section \ref{1d2dconformalnet}).
Denote this representation of $\cA_{p\frh}\otimes \cA_{\bar p\frh}$ on $\cH_Q$ by $\tau_Q$.
By \eqref{eq:covariance-chiral-sector}, $\tau_Q$ is diffeomorphism covariant with the representation $U_Q$
of $\uDiffSone\times \uDiffSone$ defined as follows
\begin{align*}
 U_Q(\gamma_+\times \gamma_-) := \bigoplus_{\pmb{\lambda} \in Q} U_{p\pmb{\lambda}h}(\gamma_+)\otimes U_{\bar p\pmb{\lambda}h}(\gamma_-).
\end{align*}

We denote the twisted shift operator on $\cH_Q$ corresponding to $\pmb\alpha \in Q$,
induced by the left multiplication by $e_{\pmb{\alpha}}$ on $\bbC_\epsilon[Q]$, as $\psi^{\pmb{\alpha}}$.
Let $\Psi = (\Psi_{\pmb{\lambda}})_{\pmb{\lambda} \in Q} \in \cH_Q$
(where for each $\pmb{\lambda}, \Psi_{\pmb{\lambda}} \in \cH_{\frh_p, \pmb{0}}\otimes e_{\pmb{\lambda}}
\cong \cH_{p\pmb{\lambda}h}\otimes \cH_{\bar p\pmb{\lambda}h}$ but we parametrize the components by $\pmb{\lambda} \in Q$,
instead of the representation $\sigma_{p\pmb{\lambda}h}\otimes \sigma_{\bar p\pmb{\lambda}h}$).
Then, definition \eqref{eq:shift-def} translates into
\begin{align}\label{eq:shift-field}
 (\psi^{\pmb{\alpha}} \Psi)_{\pmb{\alpha} + \pmb{\lambda}} = \epsilon(\pmb{\alpha}, \pmb{\lambda})\Psi_{\pmb{\lambda}},
\end{align}
or equivalently, $(\psi^{\pmb{\alpha}} \Psi)_{\pmb{\lambda}} = \epsilon(\pmb{\alpha}, \pmb{\lambda} - \pmb{\alpha})\Psi_{\pmb{\lambda} - \pmb{\alpha}}$.
The commutation relations \eqref{eq:shift-commut} translate into
\begin{align}\label{eq:field-comm}
 \psi^{\pmb{\alpha}}\psi^{\pmb{\beta}} = (-1)^{(\pmb{\alpha}|\pmb{\beta})}\psi^{\pmb{\beta}}\psi^{\pmb{\alpha}},
\end{align}
where recall that $(\pmb{\alpha}|\pmb{\beta}) \in \bbZ$ and $(\pmb{\alpha}|\pmb{\alpha}) \in 2\bbZ$ for any $\pmb\alpha, \pmb\beta \in Q$.

For a later use (Section \ref{classification}), we also introduce simple shift operators without $\epsilon$, that can be defined for any additive subgroup $Q$:
\begin{align}\label{eq:simple-shift}
 (\underline \psi^{\pmb{\alpha}} \Psi)_{\pmb{\alpha} + \pmb{\lambda}} = \Psi_{\pmb{\lambda}}.
\end{align}

Moreover, we have
\begin{align}\label{eq:field-cov}
 \cAd U_Q(\gamma_+\times \gamma_-)(\psi^{\pmb{\alpha}}) =
 \tau_Q\left(z_{p\pmb{\alpha}h}(\gamma_+)\otimes z_{\bar p\pmb{\alpha}h}(\gamma_-)\right) \psi^{\pmb{\alpha}} 
\end{align}
by calculations analogous to \cite[(2.2), Theorem 3.1]{AGT23Pointed}: differently from \cite{AGT23Pointed}, $\psi^{\pmb{\alpha}}$
is not a simple shift operator, but it is followed only by $\epsilon(\pmb{\alpha}, \pmb{\lambda})$.
As it is a scalar and $U_Q(\gamma_+\times \gamma_-)$ preserves the spaces $\cH_{\frh_p, \pmb{\lambda}h}$,
the results hold here as well.

Note that $\epsilon(\pmb{\alpha}, \pmb{\lambda}) = \epsilon(-\pmb{\alpha}, \pmb{\lambda})$
by bimultiplicatvity and $\epsilon(\pmb{\alpha}, \pmb{\alpha}) \in \{1, -1\}$,
\begin{align*}
 \epsilon(-\pmb{\alpha}, \pmb{\lambda} + \pmb{\alpha})
 &= \epsilon(-\pmb{\alpha}, \pmb{\lambda})\epsilon(-\pmb{\alpha}, \pmb{\alpha})
 = \epsilon(\pmb{\alpha}, \pmb{\lambda})\epsilon(\pmb{\alpha}, \pmb{\alpha}).
\end{align*}
Using this, for $\Psi, \Phi \in \cH_Q$, we have
\begin{align}
 \<\Psi, \psi^{\pmb{\alpha}}\Phi\> &= \sum_{\pmb{\lambda} \in Q} \<\Psi_{\pmb{\lambda}}, (\psi^{\pmb{\alpha}}\Phi)_{\pmb{\lambda}}\> \nonumber \\
 &= \sum_{\pmb{\lambda} \in Q} \epsilon(\pmb{\alpha}, \pmb{\lambda}-\pmb{\alpha})\<\Psi_{\pmb{\lambda}}, \Phi_{\pmb{\lambda} - \pmb{\alpha}}\> \nonumber \\
 &= \sum_{\pmb{\lambda} \in Q} \epsilon(\pmb{\alpha}, \pmb{\lambda})\<\Psi_{\pmb{\lambda} + \pmb{\alpha}}, \Phi_{\pmb{\lambda}}\> \nonumber \\
 &= \epsilon(\pmb{\alpha}, \pmb{\alpha})\sum_{\pmb{\lambda} \in Q} \epsilon(-\pmb{\alpha}, \pmb{\lambda}+\pmb{\alpha})\<\Psi_{\pmb{\lambda} + \pmb{\alpha}}, \Phi_{\pmb{\lambda}}\> \nonumber \\
 &= \epsilon(\pmb{\alpha}, \pmb{\alpha})\<\psi^{-\pmb{\alpha}}\Psi, \Phi\>, \label{eq:fieldstar}
\end{align}
which shows that $(\psi^{\pmb{\alpha}})^* = \epsilon(\pmb{\alpha}, \pmb{\alpha})\psi^{-\pmb{\alpha}}$.

\subsubsection{The two-dimensional conformal net}
In the notation of the previous section, we construct a conformal net $(\cA_Q, U_Q, \Omega_Q)$ on $\bbR^{1+1}$ on the Hilbert space $\cH_Q$,
defined above, as follows.
\begin{itemize}
 \item The net $\cA_Q$ is given as follows: For the diamond $I \times I$, we set
 \begin{align*}
  \cA_Q(I\times I) := \tau_Q(\cA_{p\frh}(I)\otimes \cA_{\bar p\frh}(I)) \vee \{\psi^{\pmb{\alpha}}, (\psi^{\pmb{\alpha}})^*\}_{\pmb{\alpha} \in Q}.
 \end{align*}
 For any other diamond $\gamma_+ I \times \gamma_- I$,
 where $\gamma_+ \times \gamma_- \in \uDiffSone\times\uDiffSone$
 such that $\gamma_+ I \times \gamma_- I \subset \bbR^{1+1}$,
 we set
 \begin{align*}
  \cA_Q(\gamma_+ I \times \gamma_- I) := \cAd U_Q(\gamma_+ \times \gamma_-)(\cA_Q(I\times I)).
 \end{align*}
 We show below that this definition does not depend on the choice of $\gamma_+, \gamma_-$ fulfilling the above constraints.
 \item The $\uDiffSone\times\uDiffSone$-covariance is implemented by $U_Q$ as in the previous section.
 \item The vacuum vector is $\Omega_Q := \Omega_{\pmb 0} \otimes \Omega_{\pmb 0} \in \cH_{\pmb 0}\otimes \cH_{\pmb 0} \subset \cH_Q$.
\end{itemize}

\begin{theorem}\label{th:2dnetQ}
 The triple $(\cA_Q, U_Q, \Omega_Q)$ is a conformal net on $\bbR^{1+1}$.
\end{theorem}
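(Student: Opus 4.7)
My strategy is to verify the axioms (2dCN\ref{2dcn:isotony})--(2dCN\ref{2dcn:cylinder}) for double cones of the form $\gamma_+ I \times \gamma_- I$ and then invoke the extension procedure described in Section \ref{2dconformalnet}. Most axioms reduce to their chiral analogues applied sector-wise via $\tau_Q$; the substantive content lies in three places: (i) well-definedness of $\cA_Q(\gamma_+ I \times \gamma_- I)$ independently of the representative $(\gamma_+, \gamma_-)$, (ii) locality of the extended net, and (iii) the cylinder condition (2dCN\ref{2dcn:cylinder}), where the evenness of $Q$ enters.

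For (i), if $\delta_\pm$ stabilizes $I$ set-wise, one needs $\cAd U_Q(\delta_+\times\delta_-)(\cA_Q(I\times I)) = \cA_Q(I\times I)$. The subnet part is handled by the diffeomorphism covariance of $\tau_Q$; for a twisted shift, \eqref{eq:field-cov} gives $\cAd U_Q(\delta_+\times\delta_-)(\psi^{\pmb\alpha}) = \tau_Q(z_{p\pmb\alpha h}(\delta_+)\otimes z_{\bar p\pmb\alpha h}(\delta_-))\psi^{\pmb\alpha}$, and the charge transporters live inside $\cA_{p\frh}(I)$ and $\cA_{\bar p\frh}(I)$ because both $\sigma_{\pmb\alpha h}$ and $\delta_\pm$ are localized in $I$. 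Isotony between double cones follows by transporting a reference inclusion through covariance, diffeomorphism covariance is built into the construction, and positivity of energy is inherited sector-by-sector from $U_{p\pmb\lambda h}\otimes U_{\bar p\pmb\lambda h}$.

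The main obstacle is (ii), locality. For spacelike separated double cones $D_1, D_2$, the subnet--subnet commutator vanishes because $\tau_Q$ is a representation of $\cA_{p\frh}\otimes\cA_{\bar p\frh}$ and both chiral nets are local. The subnet--shift commutator reduces, via the intertwining identity $\psi^{\pmb\alpha}\tau_Q(x) = \tau_Q((\sigma_{p\pmb\alpha h}^{-1}\otimes\sigma_{\bar p\pmb\alpha h}^{-1})(x))\psi^{\pmb\alpha}$ (immediate from \eqref{eq:shift-field}), to the observation that spacelike separation forces $\sigma_{\pmb\alpha h}$ to act trivially on the appropriately transported $x$. The shift--shift commutator is the heart of the argument: applying \eqref{eq:field-cov} on both sides, commuting $\psi^{\pmb\alpha_1}$ past the second subnet factor, and exchanging $\psi^{\pmb\alpha_1}\psi^{\pmb\alpha_2}$ via \eqref{eq:field-comm} leave a factor $(-1)^{(\pmb\alpha_1|\pmb\alpha_2)}$. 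The remaining discrepancy between products of charge transporters reduces by \eqref{eq:braiding-z} to the product of the chiral and antichiral DHR braidings, with opposite signs enforced by the spacelike orientation in the two lightcone directions, and by \eqref{eq:braidingNdim} this product equals $e^{i\pi[(p\pmb\alpha_1, p\pmb\alpha_2)_\frh - (\bar p\pmb\alpha_1, \bar p\pmb\alpha_2)_\frh]} = (-1)^{(\pmb\alpha_1|\pmb\alpha_2)}$, exactly cancelling the cocycle contribution. Integrality of $(\cdot|\cdot)$ on $Q$ is essential here, and the hard part is the bookkeeping of signs and the careful choice of implementing diffeomorphisms.

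For the vacuum and Reeh--Schlieder, $\Omega_Q = \Omega_{\pmb 0}\otimes\Omega_{\pmb 0}$ sits in the $\pmb\lambda = \pmb 0$ summand where $U_Q|_{\uMob\times\uMob}$ restricts to $U_{\pmb 0}\otimes U_{\pmb 0}$ and fixes it, while on other summands at least one chiral translation generator has strictly positive lowest eigenvalue, excluding invariant vectors. Cyclicity of $\cA_Q(I\times I)\Omega_Q$ follows because $\psi^{\pmb\alpha}\Omega_Q$ is a copy of the tensor product vacuum in the $\pmb\alpha$-sector (up to the scalar $\epsilon(\pmb\alpha,\pmb 0) = 1$), and $\tau_Q(\cA_{p\frh}(I)\otimes \cA_{\bar p\frh}(I))$ acts there as $\sigma_{p\pmb\alpha h}(\cA_{p\frh}(I))\otimes \sigma_{\bar p\pmb\alpha h}(\cA_{\bar p\frh}(I)) = \cA_{p\frh}(I)\otimes\cA_{\bar p\frh}(I)$, whose action on that vacuum is cyclic by Reeh--Schlieder for the Heisenberg nets. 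Finally, $U_Q(R_{2\pi}\times R_{-2\pi})$ acts on $\cH_{\frh_p,\pmb\lambda h}$ by the scalar $e^{\pi i[(p\pmb\lambda, p\pmb\lambda)_\frh - (\bar p\pmb\lambda, \bar p\pmb\lambda)_\frh]} = e^{\pi i(\pmb\lambda|\pmb\lambda)}$, arising from the difference of the Sugawara conformal dimensions, and this equals $1$ because $(\pmb\lambda|\pmb\lambda) \in 2\bbZ$ for $\pmb\lambda \in Q$, establishing (2dCN\ref{2dcn:cylinder}).
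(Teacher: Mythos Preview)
Your proof plan is correct and follows essentially the same approach as the paper: the well-definedness step via \eqref{eq:field-cov} and the fact that the covariance cocycles land in the local algebra over $I$, the locality argument broken into subnet--subnet, subnet--shift and shift--shift cases with the crucial cancellation between $(-1)^{(\pmb\alpha|\pmb\beta)}$ from \eqref{eq:field-comm} and the combined chiral/antichiral braiding $\varepsilon^+_{p\pmb\alpha h,p\pmb\beta h}\varepsilon^-_{\bar p\pmb\alpha h,\bar p\pmb\beta h}$, and the cylinder condition via $(\pmb\lambda|\pmb\lambda)\in 2\bbZ$ all mirror the paper's proof. The only cosmetic differences are that the paper treats the two subnet--shift directions as separate sub-bullets and phrases isotony by first reducing to $I\subset I_+,I_-$ rather than abstractly ``transporting a reference inclusion''; also, for vacuum uniqueness you should speak of the conformal Hamiltonian $L_0$ (whose lowest eigenvalue on $\cH_{p\pmb\lambda h}$ is $\tfrac12(p\pmb\lambda,p\pmb\lambda)_\frh$) rather than the translation generator, but this does not affect the argument.
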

\begin{proof}
 Let us show that the net $\cA_Q$ is well defined.
 We first take $\gamma_\pm$ such that $\gamma_+ I = I, \gamma_- I = I$.
 In this case, it is clear that
 $\cAd U_Q(\gamma_+ I \times \gamma_- I)(\tau_Q(\cA_\frh(I)\otimes \cA_\frh(I))) = \tau_Q(\cA_\frh(I)\otimes \cA_\frh(I))$.
 By \eqref{eq:field-cov}, we see that $\cAd U_Q(\gamma_+\times \gamma_-)(\psi^{\pmb{\alpha}})$
 is contained in $\cA_Q(I\times I)$, because in this case
 $z_{p\pmb{\alpha}h} \in \cA_{p\frh}(I), z_{\bar p\pmb{\alpha}h} \in \cA_{\bar p\frh}(I)$. Similar for $(\psi^{\pmb{\alpha}})^*$.
 Therefore, $\cAd U_Q(\gamma_+\times \gamma_-)(\cA_Q(I\times I)) \subset \cA_Q(I\times I)$.
 By applying the same argument to $\gamma_+^{-1}\times \gamma_-^{-1}$,
 we see that $\cAd U_Q(\gamma_+\times \gamma_-)(\cA_Q(I\times I)) = \cA_Q(I\times I)$.

 Next, let us show that $\cA_Q(\gamma_+ I \times \gamma_- I)$ is well-defined.
 For this, let $\gamma_{+,j}, \gamma_{-,j} \in \uDiffSone, j=1,2$ and that
 $\gamma_{+,1} I = \gamma_{+, 2} I$ and $\gamma_{-, 1} I = \gamma_{-, 2} I$.
 Then $\gamma_{+, 2}^{-1}\gamma_{+,1} I = I$ and $\gamma_{-, 2}^{-1}\gamma_{-, 1} I =  I$,
 and by the previous paragraph,
 \begin{align*}
  \cAd U_Q(\gamma_{+, 1} \times \gamma_{-, 1})(\cA_Q(I \times I))
  &= \cAd U_Q(\gamma_{+, 2} \times \gamma_{-, 2})\circ
  \cAd U_Q(\gamma_{+, 2}^{-1}\gamma_{+, 1} \times \gamma_{-, 2}^{-1}\gamma_{-, 1})(\cA_Q(I \times I)) \\
  &= \cAd U_Q(\gamma_{+, 2} \times \gamma_{-, 2})(\cA_Q(I \times I)),
 \end{align*}
 that is, $\cA_Q(\gamma_+ I \times \gamma_- I)$ does not depend on
 the choice of $\gamma_+, \gamma_-$.

 Let us check the axioms (2dCN\ref{2dcn:isotony})--(2dCN\ref{2dcn:cylinder}).
 \begin{itemize}
  \item (2dCN\ref{2dcn:diff}) follows by definition.
  \item To check (2dCN\ref{2dcn:isotony}), by covariance we may assume $I \subset I_+, I_-$.
  There are $\gamma_+, \gamma_-$ such that $\gamma_+ I = I_+, \gamma_- I = I_-$.
  Then, as before,
  $\cAd U_Q(\gamma_+ \times \gamma_-)(\tau_Q(\cA_{p\frh}(I)\otimes\cA_{\bar p\frh}(I))) = \tau_Q(\cA_{p\frh}(I_+)\otimes\cA_{\bar p\frh}(I_-))$
  and $\cAd U_Q(\gamma_+\times \gamma_-)(\psi^{\pmb{\alpha}}) =
  \tau_Q\left(z_{p\pmb{\alpha}}(\gamma_+)\otimes z_{\bar p\pmb{\alpha}}(\gamma_-)\right) \psi^{\pmb{\alpha}}$.
  As we have $\tau_Q\left(z_{p\pmb{\alpha}}(\gamma_+)\otimes z_{\bar p\pmb{\alpha}}(\gamma_-)\right) \in
  \tau_Q(\cA_{p\frh}(I_+)\otimes\cA_{\bar p\frh}(I_-)) \subset \cA_Q(I_+ \times I_-)$,
  it follows that $\psi^{\pmb{\alpha}} \in \cA_Q(I_+ \times I_-)$.
  To show that $(\psi^{\pmb{\alpha}})^* \in \cA_Q(I_+ \times I_-)$, it is enough to use \eqref{eq:fieldstar}.
  \item To check (2dCN\ref{2dcn:locality}), again by covariance,
  we may assume that there are
  $\gamma_+, \gamma_- \in \uDiffSone$ such that $\gamma_+ I$ is in the past of $I$,
  while $\gamma_-(I)$ is in the future of $I$ (see Section \ref{2dconformalnet}, the other case is similar).
  We prove that $\cA_Q(I\times I)$ and $\cA_Q(\gamma_+ I \times \gamma_- I)$
  commute, where
  \begin{align*}
   \cA_Q(I \times I) &= \tau_Q(\cA_{p\frh}(I)\otimes\cA_{\bar p\frh}(I))
   \vee \{\psi^{\pmb{\alpha}}, (\psi^{\pmb{\alpha}})^*\}, \\
   \cA_Q(\gamma_+ I \times \gamma_- I) &= \tau_Q(\cA_{p\frh}(I_+)\otimes\cA_{\bar p\frh}(I_-))
   \vee \{\cAd U_Q(\gamma_+\times \gamma_-)(\psi^{\pmb{\alpha}}), \cAd U_Q(\gamma_+\times \gamma_-)(\psi^{\pmb{\alpha}})^*\}.
  \end{align*}
  \begin{itemize}
  \item It is clear that
   $\tau_Q(\cA_{p\frh}(I)\otimes\cA_{\bar p\frh}(I))$ and $\tau_Q(\cA_{p\frh}(I_+)\otimes\cA_{\bar p\frh}(I_-))$ commute.
  \item As $\tau_Q$ is localized in $I \times I$, $\psi^{\pmb{\alpha}}$ commutes with
   $\tau_Q(\cA_{p\frh}(I_+)\otimes\cA_{\bar p\frh}(I_-))$.
   Similar for $(\psi^{\pmb{\alpha}})^*$ by using \eqref{eq:fieldstar}.
  \item Showing that
   $\tau_Q(\cA_{p\frh}(I)\otimes\cA_{\bar p\frh}(I))$ and $\cAd U_Q(\gamma_+\times \gamma_-)(\psi^{\pmb{\alpha}})$ commute
   is equivalent to showing that
   $\cAd U_Q(\gamma_+^{-1}\times \gamma_-^{-1})(\tau_Q(\cA_{p\frh}(I)\otimes\cA_{\bar p\frh}(I)))
   = \tau_Q(\cA_{p\frh}(\gamma_+^{-1} I)\otimes\cA_{\bar p\frh}(\gamma_-^{-1} I)))$ and $\psi^{\pmb{\alpha}}$ commute,
   and this holds because again $\tau_Q$ is localized in $I \times I$ and $\cAd \psi^{\pmb{\alpha}}$ implements
   $\sigma_{\pmb{\alpha}h} = \sigma_{p\pmb{\alpha}h}\otimes \sigma_{\bar p\pmb{\alpha}h}$, while
   $\gamma_+^{-1} I \times \gamma_-^{-1} I$ is spacelike from $I \times I$ and
   thus $\sigma_{\pmb{\alpha}h}$ is trivial on $\cA_\frh(\gamma_+^{-1} I)\otimes\cA_\frh(\gamma_-^{-1} I)$.
   Similar for $\cAd U_Q(\gamma_+\times \gamma_-)(\psi^{\pmb{\alpha}})^*$.
  \item Let us show that $\psi^{\pmb{\alpha}}$ and $\cAd U_Q(\gamma_+\times \gamma_-)(\psi^{\pmb{\beta}})$
   commute (and their adjoints).
   By using $\psi^{\pmb{\alpha}} \psi^{\pmb{\beta}} = (-1)^{(\pmb{\alpha}|\pmb{\beta})}\psi^{\pmb{\beta}}\psi^{\pmb{\alpha}}$
   \eqref{eq:field-comm}, compute
   \begin{align*}
    \psi^{\pmb{\alpha}} \cAd U_Q(\gamma_+\times \gamma_-)(\psi^{\pmb{\beta}})
    &= \psi^{\pmb{\alpha}} \tau_Q\left(z_{p\pmb{\beta}h}(\gamma_+)\otimes z_{\bar p\pmb{\beta}h}(\gamma_-)\right) \psi^{\pmb{\beta}} \\
    &= \tau_Q\left(\sigma_{p\pmb{\alpha}h}((z_{p\pmb{\beta}h}(\gamma_+))\otimes \sigma_{\bar p\pmb{\alpha}h}(z_{\bar p\pmb{\beta}h}(\gamma_-))\right) \psi^{\pmb{\alpha}}\psi^{\pmb{\beta}} \\
    &= \tau_Q\left(\sigma_{p\pmb{\alpha}h}((z_{p\pmb{\beta}h}(\gamma_+))\otimes \sigma_{\bar p\pmb{\alpha}h}(z_{\bar p\pmb{\beta}h}(\gamma_-))\right)
    (-1)^{(\pmb{\alpha}|\pmb{\beta})}\psi^{\pmb{\beta}}\psi^{\pmb{\alpha}} \\
    \cAd U_Q(\gamma_+\times \gamma_-)(\psi^{\pmb{\beta}}) \psi^{\pmb{\alpha}}
    &= \tau_Q\left(z_{p\pmb{\beta}h}(\gamma_+)\otimes z_{\bar p\pmb{\beta}h}(\gamma_-)\right) \psi^{\pmb{\beta}} \psi^{\pmb{\alpha}}.
    \end{align*}
    To conclude that $\psi^{\pmb{\alpha}} \cAd U_Q(\gamma_+\times \gamma_-)(\psi^{\pmb{\beta}}) = \cAd U_Q(\gamma_+\times \gamma_-)(\psi^{\pmb{\beta}}) \psi^{\pmb{\alpha}}$, it is enough to use \eqref{eq:braiding-z} and to observe that
    \begin{align*}
   (-1)^{(\pmb{\alpha}|\pmb{\beta})} &= 
    \rme^{\rmi\pi(\pmb{\alpha}|\pmb{\beta})} \\
    &= \rme^{\rmi\pi((p\pmb{\alpha},p\pmb{\beta})_\frh - (\bar p\pmb{\alpha},\bar p\pmb{\beta})_\frh) }\\
    &= \varepsilon^+_{p\pmb{\alpha}h, p\pmb{\beta}h} \varepsilon^-_{\bar p\pmb{\alpha}h, \bar p\pmb{\beta}h} \\
    &= \tau_Q\left(z_{p\pmb{\beta}h}(\gamma_+)\otimes z_{\bar p\pmb{\beta}h}(\gamma_-)\right)^* 
    \tau_Q\left(\sigma_{p\pmb{\alpha}h}((z_{p\pmb{\beta}h}(\gamma_+))\otimes \sigma_{\bar p\pmb{\alpha}h}(z_{\bar p\pmb{\beta}h}(\gamma_-))\right),
   \end{align*}
   using the definition of braiding between sectors in the last equality.
   As $Q$ is even, in particular it is an integral lattice,
   we have $((-1)^{(\pmb{\alpha}|\pmb{\beta})})^2 = 1$ for any pair $\pmb{\alpha}, \pmb{\beta}$.
   Therefore,
   \begin{align*}
   \psi^{\pmb{\alpha}} \cAd U_Q(\gamma_+\times \gamma_-)(\psi^{\pmb{\beta}})
   &=(-1)^{(\pmb{\alpha}|\pmb{\beta})}(-1)^{(\pmb{\alpha}|\pmb{\beta})} \cAd U_Q(\gamma_+\times \gamma_-)(\psi^{\pmb{\beta}}) \psi^{\pmb{\alpha}} \\
   &= \cAd U_Q(\gamma_+\times \gamma_-)(\psi^{\pmb{\beta}}) \psi^{\pmb{\alpha}}    
   \end{align*}
   as desired.
  \end{itemize}
  As we have $(\psi^{\pmb{\alpha}})^* = \epsilon(\pmb{\alpha}, \pmb{\alpha})\psi^{-\pmb{\alpha}}$
  and $\epsilon(\pmb{\alpha}, \pmb{\alpha}) \in \{1, -1\}$,
  the commutation of all other combinations follow from this.
  \item (2dCN\ref{2dcn:positiveenergy}) holds because each $U_{p\pmb{\lambda}h}\otimes U_{\bar p\pmb{\lambda}h}$ has positive energy.
  \item For (2dCN\ref{2dcn:vacuum}), $\Omega_{\pmb 0}\otimes\Omega_{\pmb 0} \in \cH_{\pmb 0}\otimes\cH_{\pmb 0} \subset \cH_Q$
  is cyclic for $\cA_Q$ because $\tau_Q(\cA_{p\frh}(I)\otimes \cA_{\bar p\frh}(I))\Omega_{\pmb 0}\otimes\Omega_{\pmb 0}$
  spans $\cH_{\pmb 0}\otimes\cH_{\pmb 0}$,
  while $\psi^{\pmb{\alpha}}\cdot \Omega_{\pmb 0} \otimes\Omega_{\pmb 0} \in \cH_{p\pmb{\alpha}h}\otimes\cH_{\bar p\pmb{\alpha}h}$
  and $\cA_{p\frh}\otimes \cA_{\bar p\frh}$ is irreducibly represented there.
  As for uniqueness, among $U_{p\pmb{\lambda}h}$
  (respectively $U_{\bar p\pmb{\lambda}h}$), only $p\pmb{\lambda} = \pmb 0$ (respectively $\bar p\pmb{\lambda} = \pmb 0$)
  has an invariant vector. They occur at the same time only for $\pmb{\lambda} = \pmb 0$,
  and $\Omega_{\pmb 0}\otimes\Omega_{\pmb 0}$ is the unique (up to a scalar) invariant vector for $U_{p\pmb{\lambda}h}\otimes U_{\bar p\pmb{\lambda}h}$.
  \item To see (2dCN\ref{2dcn:cylinder}), it is enough to show that the spectrum of $U_Q$ with respect to
  the spacelike rotations is contained in $\bbN$.
  On each sector $\sigma_{p\pmb{\lambda}h}\otimes \sigma_{\bar p\pmb{\lambda}h}$,
  they are generated by $L_0\otimes \bb1 - \bb1\otimes L_0$, where $L_0$ is given by \eqref{eq:sugawara},
  and their lowest weights are $\frac{\<p\pmb{\lambda}, p\pmb{\lambda}\>}2, \frac{\<\bar p\pmb{\lambda}, \bar p\pmb{\lambda}\>}2$.
  As $\pmb{\lambda} \in Q$, one has $\frac{\<p\pmb{\lambda}, p\pmb{\lambda}\>}2 - \frac{\<\bar p\pmb{\lambda}, \bar p\pmb{\lambda}\>}2
  = \frac{(\pmb{\lambda}|\pmb{\lambda})}2 \in \bbZ$.
 \end{itemize}
\end{proof}

\begin{example}\label{ex:RR2net}
The two-dimensional net $\cA_Q$ constructed above is an extension of $\cA_{p\frh}\otimes \cA_{\bar p\frh}$.
Let now $Q\subset \frh \cong \bbR^2$ as in Example \ref{ex:RR2example}, generated by $\frac{1}{\sqrt 2}(R \oplus R)$ and $\frac{1}{\sqrt 2}(R^{-1} \oplus (-R^{-1}))$, where $R\in\bbR$, $R\neq 0$.
The chiral conformal nets $\cA_{p\frh}$ and $\cA_{\bar p\frh}$ are not completely rational in the sense of \cite{KLM01},
in fact they are both isomorphic to the $\rmU(1)$-current net $\cA_\bbR$ \cite{BMT88},
where we denote $\cA_{p\frh}\cong \cA_\bbR \cong \cA_{\bar p\frh}$, see Section \ref{sec:U(1)-net}.

For any $R\in\bbR$, $R\neq 0$ the (infinite index) extension $\cA_\bbR \otimes \cA_\bbR \subset \cA_Q$ is expected to be maximal among the local irreducible extensions of $\cA_\bbR \otimes \cA_\bbR$, \cf \cite[Section 6.3]{Moriwaki23}.
Indeed, supporting this idea, if there is a conformal extension $\cA_Q \subset \cB$ such that the decomposition of $\cA_\bbR \otimes \cA_\bbR \subset \cB$
has discrete spectrum, then by Theorem \ref{th:classification} in Section \ref{classification}, it must correspond to an even lattice extending $Q$,
but we saw already that $Q$ is maximal in Example \ref{ex:RR2example}. We do not know whether there is an extension $\cB$ of $\cA_Q$ with
non-discrete spectrum. In general, extensions of $\cA_\bbR\otimes \cA_\bbR$ may have non-discrete spectrum, see Example \ref{ex:nondiscrete}.

Assume now that $R^2 \in \bbQ$, \ie $R^2 = \frac p q$ with $p, q \in \bbN$ coprime.
Then in $Q$ there is an element $(\frac{q}{\sqrt 2}R + \frac{p}{\sqrt2}R^{-1}) \oplus (\frac{q}{\sqrt 2}R - \frac{p}{\sqrt2}R^{-1}) = \sqrt{2pq} \oplus 0$,
and similarly $0 \oplus \sqrt{2pq}$.
Therefore, there is an intermediate two-dimensional extension $\cA_\bbR\otimes \cA_\bbR \subset \cA_{\bbR, \sqrt{2pq}}\otimes \cA_{\bbR, \sqrt{2pq}} \subset \cA_Q$,
where $\cA_{\bbR, \sqrt{2pq}}$ is the local extension of the conformal net $\cA_\bbR$ on $S^1$ as in \cite[Section IV]{BMT88}.
In particular, if $R = 1$, then $\cA_{\bbR, \sqrt{2}}$ is unitarily equivalent to the $\rmS\rmU(2)$-loop group net
at level $1$ \cite{Wassermann98}.
Corresponding full Vertex Operator Algebras have been obtained in \cite[Section 6.3]{Moriwaki23}.
\end{example}

\begin{example}\label{ex:heterotic}
Our construction depends only on the even lattice $Q \subset \bbR^N = \frh$ with respect to
the bilinear form $(\cdot|\cdot) = (p\cdot, p\cdot)_\frh - (\bar p\cdot, \bar p\cdot)_\frh$.
There are examples of such lattices where $\dim p \neq \dim \bar p$, see \cite[Section 7.5]{Moriwaki23}
and references therein. In particular, the central charges of the chiral and antichiral components may be different.
\end{example}

\begin{example}\label{ex:nondiscrete}
We fix an additive subgroup $G \subset \bbR$
and take $Q = \{(\alpha \oplus \alpha) \in \bbR^2: \alpha \in G\}$, with $p$ the projection onto the first component.
Note that for any $\pmb{\alpha}, \pmb{\beta} \in Q$, we have $(\pmb{\alpha}|\pmb{\beta}) = 0$.
Therefore, by the same construction (here $Q$ is not a lattice, but by the property $(\pmb{\alpha}|\pmb{\beta}) = 0$,
we do not need the cocycle $\epsilon$), we have a two-dimensional conformal net $\cA_Q$.
Clearly there are additive subgroups $G$ such that $Q \subset \bbR^2$ is not discrete (in the relative topology), \eg $G = \bbQ, \bbR$.
These examples have been studied in \cite[Section 6]{MTW18}.
\end{example}

\section{Braided equivalences of the chiral components}\label{sec:BE}
\subsection{Braided equivalences and extensions}

We compare our construction of Example \ref{ex:RR2net} with the results of \cite[Section 6.3]{BKL15}.
Below, we construct the \emph{braided equivalence} $\phi_{Q}$ corresponding to the extension $\cA_\bbR \otimes \cA_\bbR \subset \cA_Q$, \cf \cite[Proposition 6.7(6)]{BKL15}, \cite[Theorem 3.6, Corollary 3.8]{DNO13}, with due variations from the completely rational and finite index context. First, we recall the definition of braided tensor functor and equivalence, see \eg \cite[Chapter XI]{MacLane98}, \cite[Chapter 4]{EGNO15}, between braided tensor categories, and of natural transformation between braided tensor functors. The categories will be strictly tensor, unitary (i.e. $C^*$, see \eg \cite{LR97}), with simple tensor unit object and not necessarily fusion.

\begin{definition}\label{def:tensorfunctor}
Let $(\cC,\otimes_\cC,\id_\cC)$ and $(\cD,\otimes_\cD,\id_\cD)$ be two strict tensor categories, where we drop the subscripts in the tensor products $\otimes$ and tensor unit objects $\id$ if no confusion arises. A \textbf{tensor functor} is a triple $(\phi, \mu, \eta)$, where $\phi:\cC \to \cD$ is a functor, $\mu$ is an invertible (unitary if $\cC, \cD$ are $C^*$) natural transformation (see Definition \ref{def:tensornattrafo})
$\mu : \otimes \circ (\phi,\phi) \Rightarrow \phi \circ \otimes$ (\ie a collection of invertible or unitary arrows $\mu_{x,y} : \phi(x)\otimes \phi(y) \to \phi(x\otimes y)$ for each pair of objects $x,y$ in $\cC$), $\eta$ is an invertible (unitary in the $C^*$ case) arrow $\eta : \id \to \phi(\id)$ (the identity arrow $\eta = 1$ if $\phi(\id) = \id$), such that the following diagrams commute
\begin{center}
\begin{tikzcd}
	(\phi(x)\otimes \phi(y))\otimes \phi(z) \arrow [r, rightarrow, " "] \arrow[d, rightarrow, "="]  &  \phi(x\otimes y)\otimes \phi(z) \arrow[r, rightarrow, " "] & 
	\phi((x\otimes y)\otimes z) \arrow[d, rightarrow, "="]  \\
	\phi(x)\otimes (\phi(y)\otimes \phi(z)) \arrow[r, rightarrow, " "]  & \phi(x)\otimes \phi(y\otimes z) \arrow[r, rightarrow, " "] &  \phi(x\otimes (y\otimes z)),
\end{tikzcd}
\end{center}
\begin{center}
\begin{tikzcd}
	\id \otimes \phi(x) \arrow [r, rightarrow, " "] \arrow[d, rightarrow, "="]  &  \phi(\id) \otimes \phi(x) \arrow [d, rightarrow, " "] \\ 
	\phi(x) \arrow[r, rightarrow, "="]  &  \phi(\id \otimes x),
\end{tikzcd}
\quad
\begin{tikzcd}
	\phi(x) \otimes \id \arrow [r, rightarrow, " "] \arrow[d, rightarrow, "="]  &  \phi(x) \otimes \phi(\id) \arrow [d, rightarrow, " "] \\ 
	\phi(x) \arrow[r, rightarrow, "="]  &  \phi(x \otimes \id)
\end{tikzcd}

\end{center}
for every $x,y,z$ objects in $\cC$, where the arrows labelled by $=$ are identity arrows because $\cC$ and $\cD$ are assumed to be strict (\ie $\id \otimes x = x = x \otimes \id$ and $(x \otimes y) \otimes z = x \otimes (y \otimes z)$ both in $\cC$ and $\cD$). Note that the commutativity of the first diagram is a categorical 2-cocycle condition on $\mu = \{\mu_{x,y}\}_{x,y\in\cC}$. $\mu$ and $\eta$ are respectively called the tensorator and unitor of the tensor functor.

A tensor functor is a \textbf{tensor equivalence} \cite[Definition 2.4.1]{EGNO15} if it is, in addition, an equivalence of categories \cite[Section IV.4]{MacLane98}. 
A tensor functor is called strict if all $\mu_{x,y}$ and $\eta$ (hence all arrows in the diagrams) are identity arrows.
\end{definition}

\begin{definition}\label{def:braidedfunctor}
Let $(\cC,\otimes_\cC,\id_\cC,\varepsilon_\cC)$ and $(\cD,\otimes_\cD,\id_\cD,\varepsilon_\cD)$ be in addition braided with both braidings denoted by $\varepsilon$ (hence by $\varepsilon_{x,y} : x \otimes y \to y \otimes x$ for every $x,y$ in $\cC$ or $\cD$) if no confusion arises (and assumed to be unitary in the $C^*$ case). A braided tensor functor, or \textbf{braided functor} for short, is just a tensor functor such that, in addition, the following diagram commutes
$$
\begin{tikzcd}
	\phi(x) \otimes \phi(y) \arrow [r, rightarrow, " "] \arrow[d, rightarrow, " "]  &  \phi(x\otimes y) \arrow[d, rightarrow, " "] \\ 
	\phi(y) \otimes \phi(x) \arrow [r, rightarrow, " "]  &  \phi(y \otimes x)
\end{tikzcd}
$$
for every $x,y$ objects in $\cC$, \ie such that $\phi(\varepsilon_{x,y})\, \mu_{x,y} = \mu_{y,x}\, \varepsilon_{\phi(x),\phi(y)}$.

A braided functor is a \textbf{braided equivalence} \cite[Definition 8.1.7]{EGNO15} if it is, in addition, an equivalence of categories.
If $\cC = \cD$, braided/tensor equivalences are also called autoequivalences. 
\end{definition}

Note that being tensor for a functor is additional structure with constraints, while being braided for a tensor functor is just a constraint. 
Recall also the following

\begin{definition}\label{def:tensornattrafo}
Let $\phi_1,\phi_2:\cC\rightarrow\cD$ be two functors. A \textbf{natural transformation} $\nu$ from $\phi_1$ to $\phi_2$, denoted by $\nu : \phi_1 \Rightarrow \phi_2$, is a collection of arrows $\nu = \{\nu_x\}_{x\in\cC}$ in $\cD$, one for every object of $\cC$, such that $\nu_x : \phi_1(x) \to \phi_2(x)$ in $\cD$, which is natural in $x$ in the sense that the following diagram commutes
\begin{center}
\begin{tikzcd}
	\phi_1(x) \arrow [r, rightarrow, "\phi_1(t)"] \arrow[d, rightarrow, "\nu_x"]  &  \phi_1(y) \arrow[d, rightarrow, "\nu_y"] \\ 
	\phi_2(x) \arrow [r, rightarrow, "\phi_2(t)"]                                                      &  \phi_2(y)
\end{tikzcd}
\end{center}
for every arrow $t:x\rightarrow y$ in $\cC$, \ie such that $\nu_y\,\phi_1(t) = \phi_2(t)\, \nu_x$ (naturality). In words, $\nu$ intertwines $\phi_1$ and $\phi_2$ both on objects and on morphisms.

A natural transformation $\nu: \phi_1\Rightarrow \phi_2$ is called a \textbf{natural isomorphism} if all the arrows $\nu_x$ are isomorphisms (\ie invertible arrows, or unitary in the $C^*$ case) in $\cD$. 

Let now $(\phi_1, \mu_1,\eta_1) ,(\phi_2,\mu_2,\eta_2):\cC\rightarrow\cD$ be tensor functors as before. A natural transformation $\nu:\phi_1 \Rightarrow \phi_2$ is called a \textbf{tensor} natural transformation if, in addition, it makes the following diagrams commute
\begin{center}
\begin{tikzcd}
	\phi_1(x) \otimes \phi_1(y) \arrow [r, rightarrow, " "] \arrow[d, rightarrow, " "]  &  \phi_1(x\otimes y) \arrow[d, rightarrow, " "] \\ 
	\phi_2(x) \otimes \phi_2(y) \arrow [r, rightarrow, " "]                                            &  \phi_2(x \otimes y),
\end{tikzcd}
\begin{tikzcd}
	\id \arrow [r, rightarrow, " "] \arrow [dr, rightarrow, " "]                  &         \phi_1(\id) \arrow[d, rightarrow, " "]                                               \\
	                                                                                                                &  \phi_2(\id) 
\end{tikzcd}
\end{center}
for every $x,y$ objects in $\cC$, \ie such that $\nu_{x \otimes y}\, {\mu_1}_{x,y} = {\mu_2}_{x,y} \, \nu_{x} \otimes \nu_{y}$ and $\nu_{\id} \eta_1 = \eta_2$.

A tensor natural isomorphism is a tensor natural transformation which is also an isomorphism. If the tensor functors are also braided, adding the word \textbf{braided} to tensor natural transformations amounts to no additional constraint.
\end{definition}

To construct the braided equivalence $\phi_{Q}$ associated with Example \ref{ex:RR2net}, we assume that $R^2 \notin \bbQ$.
Let, as before, $h \in C^\infty(S^1)$ be chosen such that $\frac1{2\pi}\int_{-\pi}^{\pi} h(t)dt = 1$ and localized in some $I \in \cI$.

In this case,
\begin{align*}
 \cH_Q \cong \bigoplus_{a,b \in \bbZ} \cH_{\bbR, (\frac{a}{\sqrt 2}R + \frac{b}{\sqrt 2}R^{-1})h} \otimes \cH_{\bbR, (\frac{a}{\sqrt 2}R - \frac{b}{\sqrt 2}R^{-1})h}.
\end{align*}
The representation of $\cA_\bbR \otimes \cA_\bbR$ on $\cH_Q$ is $\tau_Q = \bigoplus_{a,b \in \bbZ} \sigma_{(\frac{a}{\sqrt 2}R + \frac{b}{\sqrt 2}R^{-1})h}\otimes \sigma_{(\frac{a}{\sqrt 2}R - \frac{b}{\sqrt 2}R^{-1})h}$.
We let $\cC$ be the full subcategory of $\Rep(\cA_\bbR)$ generated by the automorphisms (with categorical dimension one, thus irreducible) $\sigma_{\frac{1}{\sqrt 2}Rh}$ and $\sigma_{\frac{1}{\sqrt 2}R^{-1}h}$, closed under tensor products, conjugates, directs sums and subobjects in $\Rep(\cA_\bbR)$.
Recall that fullness means that the hom-spaces between objects in $\cC$ coincide with those in $\Rep(\cA_\bbR)$.
The unitary equivalence classes of irreducible objects in $\cC$ and $\Rep(\cA_\bbR)$, respectively, are in one-to-one correspondence with $\bbZ \frac{R}{\sqrt 2} + \bbZ \frac{R^{-1}}{\sqrt 2}$ and with $\bbR$. Moreover, by definition with $h$ fixed, the composition (the categorical tensor product in $\cC$, that we either denote by $\circ$ or by no symbol) of tensor powers of irreducibles in $\cC$ is given by $\sigma_{\frac{1}{\sqrt 2}Rh}^n \sigma_{\frac{1}{\sqrt 2}R^{-1}h}^m = \sigma_{(\frac{n}{\sqrt 2}R + \frac{m}{\sqrt 2}R^{-1})h}$ for every $n,m\in\bbZ$ (in particular the composition is commutative), and the inverse (categorical conjugate in the case of automorphisms) is $\bar\sigma_{\frac{1}{\sqrt 2}Rh} = \sigma_{\frac{-1}{\sqrt 2}Rh}$, $\bar\sigma_{\frac{1}{\sqrt 2}R^{-1}h} = \sigma_{\frac{-1}{\sqrt 2}R^{-1}h}$. On the irreducible objects of $\cC$, which we denote below by $\sigma_{n,m} := \sigma_{(\frac{n}{\sqrt 2}R + \frac{m}{\sqrt 2}R^{-1})h}$, we define $\phi_Q$ by
\begin{align*}
\phi_{Q}(\sigma_{n,m}) := \sigma_{-n,m}.
\end{align*}
On the tensor unit of $\cC$ (the identity automorphism $\id = \sigma_{0,0}$) it holds $\phi_Q(\id) = \id$.
On arrows between the irreducibles $\sigma_{n,m}$ and $\sigma_{n',m'}$ for $n,m,n',m'\in\bbZ$, which are either $\lambda 1_{\sigma_{n,m}}$ for some $\lambda \in \bbC$ if $n=n'$ and $m=m'$, or $0$ otherwise (indeed, $\frac{n}{\sqrt 2}R + \frac{m}{\sqrt 2}R^{-1} = 0$ with $n,m\neq 0$ if and only if $R^2 = -\frac{m}{n}\in\bbQ$), we define $\phi_Q$ by linearity, namely by setting $\phi_Q(\lambda 1_{\sigma_{n,m}}) := \lambda 1_{\sigma_{-n,m}}$. 

In this way, the representation $\tau_Q$ can be written as
\begin{align*}
\tau_Q = \bigoplus_{n,m \in \bbZ} \sigma_{n,m}\otimes \phi_Q(\bar\sigma_{n,m}),
\end{align*}
\cf \cite[Section 3.2]{DNO13}, \cite[Definition 4.1]{BKL15}. 

Note that $\sigma_{n,m}\sigma_{n',m'} = \sigma_{n',m'}\sigma_{n,m} = \sigma_{n+n',m+m'}$. We define $\phi_Q$ to be a non-strict tensor functor with tensorator $\mu$ given by the 2-cocycle $\epsilon : Q \times Q \to \{\pm 1\}$ from \eqref{eq:twococycle}, namely
\begin{align*}
\phi_Q(\sigma_{n,m}) \phi_Q(\sigma_{n',m'}) \xlongrightarrow{\mu_{(n,m),(n',m')} 1\quad} \phi_Q(\sigma_{n,m}\sigma_{n',m'}),
\end{align*}
where $1$ above is a short-hand notation for $1_{\sigma_{-n-n',m+m'}}$, the identity arrow between the common source and target automorphism $\sigma_{-n-n',m+m'}$, and the scalar with values in $\{\pm 1\}$ defined by
\begin{align*}
\mu_{(n,m),(n',m')} := \epsilon(\pmb\alpha_{n,m}, \pmb\alpha_{n',m'}),
\end{align*}
where $\pmb\alpha_{n,m} := \frac{n}{\sqrt 2} (R\oplus R) + \frac{m}{\sqrt 2} (R^{-1}\oplus (-R^{-1})) \in Q$ for $n,m\in\bbZ$. In this notation, note that $\pmb\alpha_{n,m} + \pmb\alpha_{n',m'} = \pmb\alpha_{n+n',m+m'}$. In order to exploit the definition of $\epsilon$, we choose $\pmb v_1 := \pmb \alpha_{1,0} = \frac{1}{\sqrt 2} (R\oplus R)$, $\pmb v_2 := \pmb \alpha_{0,1} = \frac{1}{\sqrt 2} (R^{-1}\oplus (-R^{-1}))$ as an ordered basis of $Q$, and compute, using the bimultiplicative extension of \eqref{eq:twococycle} together with \eqref{eq:semidefscalarprodinQ2d-1}, \eqref{eq:semidefscalarprodinQ2d-2}, its value $\epsilon(\pmb\alpha_{n,m}, \pmb\alpha_{n',m'}) = \epsilon(\pmb\alpha_{1,0}, \pmb\alpha_{0,1})^{nm'} = (-1)^{nm'}$. The scalars $\mu_{(n,m),(n',m')}$ clearly define a tensorator (see Definition \ref{def:tensorfunctor} and \cite[Section XI.2, Equation (3)]{MacLane98}) for the (non-strict) tensor functor $\phi_Q : \cC \to \cC$ 
between strict tensor categories (after extending $\phi_Q$ to direct sums of irreducible objects in $\cC$), 
because $\epsilon$ is a 2-cocycle on $Q$. 
Note that $\phi_Q$ is a tensor autoequivalence $\cC \to \cC$ with inverse $\phi_Q^{-1} = \phi_Q$. 
Furthermore, $\phi_Q$ is also braided (see Definition \ref{def:braidedfunctor} and \cite[Section XI.2, Equation (10)]{MacLane98}), namely the following diagram commutes for every $n,m,n',m'\in\bbZ$
\begin{center}
\begin{tikzcd}
	\phi_Q(\sigma_{n,m}) \phi_Q(\sigma_{n',m'}) \arrow [r, rightarrow, " "] \arrow[d, rightarrow, " "]  &  
	\phi_Q(\sigma_{n,m}\sigma_{n',m'}) \arrow[d, rightarrow, " "] \\ 
	\phi_Q(\sigma_{n',m'})\phi_Q(\sigma_{n,m}) \arrow [r, rightarrow, " "]  &  
	\phi_Q(\sigma_{n',m'}\sigma_{n,m}),
\end{tikzcd}
\end{center}
where the horizontal arrows are $\mu_{(n,m),(n',m')} 1 = (-1)^{nm'} 1$ above and $\mu_{(n',m'),(n,m)} 1= (-1)^{n'm} 1$ below, and the vertical arrows are given by the braidings \eqref{eq:braidingNdim}, namely on the left
\begin{align*}
\varepsilon^+_{\phi_Q(\sigma_{n,m}),\phi_Q(\sigma_{n',m'})} = \varepsilon^+_{\sigma_{-n,m},\sigma_{-n',m'}} = \rme^{\rmi\pi(\frac{nn'}{2}R^2 + \frac{mm'}{2}R^{-2} - \frac{nm'}{2} - \frac{mn'}{2})} 1,
\end{align*}
and on the right
\begin{align*}
\phi_Q(\varepsilon^+_{\sigma_{n,m},\sigma_{n',m'}}) = \rme^{\rmi\pi(\frac{nn'}{2}R^2 + \frac{mm'}{2}R^{-2} + \frac{nm'}{2} + \frac{mn'}{2})} 1.
\end{align*}

Thus, the commutativity of the previous diagram, namely the following equality of scalars
\begin{align*}
(-1)^{nm'} \rme^{\rmi\pi(\frac{nn'}{2}R^2 + \frac{mm'}{2}R^{-2} + \frac{nm'}{2} + \frac{mn'}{2})} = (-1)^{n'm} \rme^{\rmi\pi(\frac{nn'}{2}R^2 + \frac{mm'}{2}R^{-2} - \frac{nm'}{2} - \frac{mn'}{2})}
\end{align*}
is readily checked by using \eg $(-1) = \rme^{\rmi \pi}$ and $\rme^{\rmi\frac{3\pi}{2}} = \rme^{-\rmi\frac{\pi}{2}}$, indeed
\begin{align*}
\rme^{\rmi\pi(\frac{nn'}{2}R^2 + \frac{mm'}{2}R^{-2} + \frac{3nm'}{2} + \frac{mn'}{2})} = \rme^{\rmi\pi(\frac{nn'}{2}R^2 + \frac{mm'}{2}R^{-2} - \frac{nm'}{2} + \frac{mn'}{2})},
\end{align*}
Thus $\phi_Q$ is a braided functor.

\begin{remark}
The braided autoequivalence $\phi_Q$ extends to a braided autoequivalence of the replete completion of $\cC$ in $\Rep(\cA_\bbR)$, denoted by $\tilde\cC$, obtained by including all objects in $\Rep(\cA_\bbR)$ that are (unitarily) equivalent to some object in $\cC$. If $\sigma$ is irreducible in $\tilde\cC$, choose a unitary arrow (intertwining operator, unique up to scalar factors) $u : \sigma \to \sigma_{n,m}$ for a unique pair $n,m\in\bbZ$. We set $\tilde\phi_Q(\sigma) := \sigma_{-n,m}$. If $\sigma_1$ and $\sigma_2$ are two irreducibles in $\tilde\cC$, then they are either mutually inequivalent and the only intertwiner is $0$, or every arrow $t:\sigma_1 \to \sigma_2$ equals $\lambda u_2^* u_1$ where $\lambda\in\bbC$ is some scalar depending on $t$ and $u_1 : \sigma_1 \to \sigma_{n,m}$, $u_2 : \sigma_2 \to \sigma_{n,m}$ are the unitaries chosen as before. In the latter case, we set $\tilde\phi_Q(t) := \lambda 1_{\sigma_{-n,m}}$. One can check that functoriality holds, \ie $\tilde\phi_Q(st) = \tilde\phi_Q(s)\tilde\phi_Q(t)$ if $t:\sigma_1 \to \sigma_2$ and $s:\sigma_2 \to \sigma_3$.
Extending to direct sums of irreducible objects gives a braided autoequivalence $\tilde\phi_Q:\tilde \cC \to \tilde \cC$, whose  
isomorphism class among the braided autoequivalences of $\tilde\cC$ does not depend on the choice of function $h\in\cC^\infty(S^1)$ made at the beginning.

Indeed, let $g\in\cC^\infty(S^1)$ be another function with $\frac1{2\pi}\int_{-\pi}^{\pi} g(t)dt = 1$ and localized in some other proper interval of $S^1$. There is a unitary tensor (\ie braided tensor) natural transformation (see Definition \ref{def:tensornattrafo}) from $\tilde\phi_Q$ to the functor $\tilde\phi_{Q,g} : \tilde\cC \to \tilde\cC$ defined in the same way using $g$ instead of $h$. Let $M\in\cC^\infty(S^1)$ be a primitive of $g-h$, namely $M(t) := \frac1{2\pi}\int_{-\pi}^{t} (g(s)-h(s)) ds$ so that $M(-\pi) = M(\pi) = 0$.
Let $u:= W(\frac{-1}{\sqrt 2}RM)$ be the Weyl unitary that intertwines $\sigma_{\frac{-1}{\sqrt 2}Rh} (= \sigma_{-1,0}$ in our previous notation) with $\sigma_{\frac{-1}{\sqrt 2}Rg}$, and $v:= W(\frac{1}{\sqrt 2}R^{-1}M)$ between $\sigma_{\frac{1}{\sqrt 2}R^{-1}h} (= \sigma_{0,1}$) with $\sigma_{\frac{1}{\sqrt 2}R^{-1}g}$. Observe that $u$ and $v$ commute because $\int_{-\pi}^{\pi} M(t)M'(t)dt = \frac{1}{2} [M^2]_{-\pi}^{\pi} = 0$, hence $uv = vu = W((\frac{-1}{\sqrt 2}R + \frac{1}{\sqrt 2}R^{-1})M)$. 
For $\sigma$ irreducible in $\tilde\cC$, let $\nu_{\sigma} := u \otimes \stackrel{(n)}{\ldots} \otimes u \otimes v \stackrel{(m)}{\ldots} \otimes v$ be the unitary intertwiner between $\tilde\phi_Q(\sigma) = \sigma_{-n,m}$ and $\tilde\phi_{Q,g}(\sigma) = \sigma_{-n,m,g} (:= \sigma_{(\frac{-n}{\sqrt 2} R + \frac{m}{\sqrt 2}R^{-1})g})$, where the symbol $\otimes$ denotes the tensor product of intertwiners in the category $\Rep(\cA_\bbR)$. Naturality of the collection of arrows $\{\nu_\sigma\}$ can be checked as for functoriality, using that all arrows in the naturality diagram are either zero or scalars and that $\sigma_{n,m}$ and $\sigma_{n,m,g}$ are unitarily equivalent. 
To check tensoriality of the natural transformation $\nu$, observe that the tensorators $\mu_{(n,m),(n',m')}$ only depend on $n,m,n',m'$ (they are the 2-cocycle $\varepsilon$) and not on $h,g$, hence they are the same for the two functors. Thus, we only have to check that 
$\nu_{\sigma} \otimes \nu_{\sigma'} = \nu_{\sigma \otimes \sigma'}$, 
where recall that the tensor product of automorphisms in $\Rep(\cA_\bbR)$ is just composition.
Observe that $v \otimes u = v \sigma_{0,1}(u) = \rme^{\frac{\rmi}{2\pi} \int \frac{1}{\sqrt 2}R^{-1} h(t) \frac{-1}{\sqrt 2}R M(t)dt} vu$ 
and $u \otimes v = u \sigma_{-1,0}(v) = \rme^{\frac{\rmi}{2\pi} \int \frac{-1}{\sqrt 2}R h(t) \frac{1}{\sqrt 2}R^{-1} M(t)dt} uv$, thus the phase factor is the same and we conclude $v \otimes u = u \otimes v$. Consequently, $\nu_{\sigma} \otimes \nu_{\sigma'} = \nu_{\sigma \otimes \sigma'}$ as unitary intertwiners between $\sigma_{-n-n',m+m'}$ and $\sigma_{-n-n',m+m',g}$, showing tensoriality of $\nu : \tilde\phi_Q \Rightarrow \tilde\phi_{Q,g}$.
\end{remark}

\subsection{Braided equivalences and unitary (non-)equivalence of extensions}\label{unitaryequiv}
When one considers the problem of classifying extensions $\cA_+\otimes\cA_- \subset \cB$ up to unitary equivalence,
there are two possibilities: either one requires that the ``unitary equivalence'' intertwines
also the representation $\tau_{\cB}$ of $\cA_+\otimes\cA_-$ on the Hilbert space of $\cB$ or not.
The stronger notion of equivalence for extensions is implicitly used in \cite[Proposition 6.7(1)]{BKL15}, \cf \cite[Definition 3.4]{DR90}.
In \cite[Section 6]{BKL15}, two extensions $\cA_+\otimes \cA_- \subset \cB_1, \cB_2$ are called unitarily equivalent if the nets $\cB_1, \cB_2$ are unitarily equivalent via some unitary operator $U$, and the same operator intertwines the representations $\tau_1, \tau_2$
of $\cA_+\otimes \cA_-$ obtained by restricting the vacuum representations of $\cB_1, \cB_2$, respectively, \ie $\cAd U \circ \tau_1 = \tau_2$.
It may happen that, even if two extensions $\cA_+\otimes\cA_- \subset \cB_j, j = 1,2$ are not unitarily
equivalent, two nets $\cB_i$ are unitarily equivalent. We exhibit this in examples.
In order to stress that we consider the stronger notion, we say that the \textit{extensions} $\cA_+\otimes\cA_- \subset \cB_j$
\textit{are unitarily equivalent}.

Let us consider the extensions of Example \ref{ex:nondiscrete} with $G = \bbZ$:
$\frh = \bbR^2$, $p$ is the projection onto the first component, $Q = \{(\alpha \oplus \alpha) \in \bbR^2: \alpha \in \bbZ\}$, and we obtain a two-dimensional conformal net $\cA_Q$ by Theorem \ref{th:2dnetQ}.
On the other hand, we can also take $Q' = \{(\alpha \oplus -\alpha) \in \bbR^2: \alpha \in \bbZ\}$, and we obtain another two-dimensional conformal net $\cA_{Q'}$ (the same net constructed in \cite[Section 6]{AGT23Pointed}). Both nets extend the tensor product of $\rmU(1)$-current nets $\cA_\bbR \otimes \cA_\bbR$. We discuss the equivalence of these two-dimensional conformal extensions at the end of the section.

Note that on $\cA_\bbR(I)$ there is an automorphism $\theta(W(f)) = W(-f)$,
This is implemented by the unitary operator $V$ which acts as $1$ on the even particle number space
and $-1$ on the odd particle number space, in particular, $V^2 = 1$.
Fix $h \in C^\infty(S^1)$ localized in some $I \in \cI$ as before.

We have $\cAd V \circ \sigma_{\alpha h} = \sigma_{-\alpha h} \circ \cAd V$ elementwise on $\cA_\bbR$, indeed
\begin{align*}
 &\cAd V\circ \sigma_{\alpha h}(W(f)) = \rme^{\frac{\rmi\alpha}{2\pi}\int f(t)h(t)dt}\cAd V(W(f)) = \rme^{\frac{\rmi\alpha}{2\pi}\int f(t)h(t)dt}(W(-f)) \\
 &= \rme^{-\frac{\rmi\alpha}{2\pi}\int (-f(t))h(t)dt}(W(-f)) = \sigma_{-\alpha h}(W(-f)) = \sigma_{-\alpha h}\circ \cAd V(W(f)).
\end{align*}

As $\cAd V$ is a vacuum-preserving automorphism of the net $\cA_\bbR$, it commutes with its implementation of the diffeomorphisms $U(\gamma)$
\cite{CW05}, \cite[Proposition 4.2]{CLTW12-1}, \cite[Theorem 6.10]{CKLW18}
(one can also see this from $\cAd (J_n) = -J_n$ and the fact that $L_n$ in \eqref{eq:sugawara} are quadratic in $J$).
As the representations $U_{\alpha h}$ on $\cH_{\alpha h}$ is (locally) given by $U_{\alpha h}(\gamma) = \sigma_{\alpha h}(U(\gamma))$,
we have
\begin{align*}
 \cAd V(U_{\alpha h}(\gamma)) = \cAd V\circ \sigma_{\alpha h}(U(\gamma)) = \sigma_{-\alpha h}\circ \cAd V(U(\gamma)) = \sigma_{-\alpha h}(U(\gamma))
 = U_{-\alpha h}(\gamma).
\end{align*}

Let $\tilde V := \bigoplus_{\alpha\in \bbZ} \bb1\otimes V$, we clearly have
$\tilde V (\bigoplus_{\alpha\in \bbZ} \cH_{\alpha h}\otimes \cH_{\alpha h}) = \bigoplus_{\alpha\in \bbZ} \cH_{\alpha h}\otimes \cH_{-\alpha h}$
\ie $\tilde V: \cH_Q \to \cH_{Q'}$, and recall the notation for the representations of $\cA_\bbR\otimes\cA_\bbR$ respectively
$\tau_Q = \bigoplus_{\alpha \in \bbZ} \sigma_{\alpha h}\otimes \sigma_{\alpha h}$ on $\cH_Q$ and 
$\tau_{Q'} = \bigoplus_{\alpha \in \bbZ} \sigma_{\alpha h}\otimes \sigma_{-\alpha h}$ on $\cH_{Q'}$.
Furthermore, let us denote by $\psi^{\pmb{\alpha}}_Q$, $\psi^{\pmb{\alpha}}_{Q'}$, where $\pmb \alpha \in Q$ or $Q'$, the corresponding shift operators (twisted by the 2-cocycle of $Q$ or $Q'$) as in Section \ref{sec:2d-chiral}.
Then we have
\begin{itemize}
 \item $\cAd \tilde V \circ \tau_Q = \tau_{Q'} \circ \cAd \tilde V$ elementwise on $\cA_\bbR\otimes\cA_\bbR$
 (but \emph{not} $\cAd \tilde V \circ \tau_Q = \tau_{Q'}$, i.e. $\tilde V$ is \emph{not} a unitary intertwiner between the two representations of $\cA_\bbR\otimes\cA_\bbR$, as $V$ is clearly not a unitary intertwiner between the irreducibles $\rho_{\alpha h}$ and $\rho_{-\alpha h}$, which is impossible unless $\alpha = 0$).
 \item $\cAd \tilde V(U_Q(\gamma_+\times \gamma_-)) = U_{Q'}(\gamma_+\times \gamma_-)$.
 \item $\tilde V \Omega_Q = \Omega_{Q'}$.
 \item $\cAd \tilde V (\psi^{\pmb{\alpha}}_Q) = \psi^{\pmb{\alpha}}_{Q'}$
 (to show this, we identify $\cH_{\alpha h} = \cH_0$ as Hilbert spaces,
 and note that $V$ acts on the same unitary on each of them and $\psi^{\pmb{\alpha}}_Q, \psi^{\pmb{\alpha}}_{Q'}$ do only the shift).
 \item For $I\times I$, where $I$ is the distinguished interval containing $\supp h$, we have
 \begin{align*}
  \cAd \tilde V(\cA_Q(I\times I)) &= \cAd \tilde V(\{\tau_Q(W(f_+)\otimes W(f_-)): \supp f_\pm \subset I, \psi^{\pmb{\alpha}}_Q\}'') \\
 &= \{\tau_{Q'}(W(f_+)\otimes W(-f_-)): \supp f_\pm \subset I, \psi^{\pmb{\alpha}}_{Q'}\}'' \\
 &= \{\tau_{Q'}(W(f_+)\otimes W(f_-)): \supp f_\pm \subset I, \psi^{\pmb{\alpha}}_{Q'}\}'' \\
 &= \cA_{Q'}(I\times I),
 \end{align*}
 where the equality is global on the local algebras over $I \times I$ but not elementwise.
 \item For any other double cone, we have the equality by covariance, hence $\cA_Q$ and $\cA_{Q'}$ are unitarily equivalent via $\tilde V$ as two-dimensional conformal nets \emph{per se}.
 \item Moreover, the two copies of the two-dimensional Heisenberg net (two subnets of $\cA_Q$ and $\cA_{Q'}$) are also unitarily equivalent via $\tilde V$, 
 globally on the local algebras over $I \times I$ hence for every double cone, indeed
 \begin{align*}
  \cAd \tilde V(\tau_Q(\cA_\bbR(I)\otimes \cA_\bbR(I)) &= \cAd \tilde V(\{\tau_Q(W(f_+)\otimes W(f_-)): \supp f_\pm \subset I\}'') \\
 &= \{\tau_{Q'}(W(f_+)\otimes W(-f_-)): \supp f_\pm \subset I\}'') \\
 &= \{\tau_{Q'}(W(f_+)\otimes W(f_-)): \supp f_\pm \subset I\}'') \\
 &= \tau_{Q'}(\cA_\bbR(I)\otimes\cA_\bbR(I)).
 \end{align*}
\end{itemize}
Hence $\cA_Q \supset \tau_Q(\cA_\bbR \otimes \cA_\bbR)$ and $\cA_{Q'} \supset \tau_{Q'}(\cA_\bbR \otimes \cA_\bbR)$ are \emph{not} unitarily equivalent
\textit{as extensions of nets}, as $\tau_Q$ and $\tau_{Q'}$ are not unitarily equivalent representations of $\cA_\bbR \otimes \cA_\bbR$.
However, two nets $\cA_Q, \cA_{Q'}$ are unitarily equivalent as two-dimensional conformal nets.

\begin{remark}
The braided autoequivalences (\cf \cite[Proposition 6.7(1)]{BKL15} in the completely rational and finite index context) corresponding to the extensions are, respectively, $\phi_Q(\sigma_{\alpha h}) = \sigma_{-\alpha h}$ and $\phi_{Q'}(\sigma_{\alpha h}) = \sigma_{\alpha h}$ (the identity autoequivalence). In particular, there is no natural transformation $\nu : \phi_Q \Rightarrow \phi_{Q'}$ (see Definition \ref{def:tensornattrafo}) as there is no unitary intertwiner between $\sigma_{-\alpha h} (= \bar \sigma_{\alpha h})$ and $\sigma_{\alpha h}$ for $\alpha \in \bbR$, unless $\alpha = 0$. 
\end{remark}

Let again $\cA_+\otimes \cA_- \subset \cB_1, \cB_2$ be two extensions realized on the Hilbert spaces $\cH_1, \cH_2$. 
Assume that there is a unitary operator $V: \cH_1 \to \cH_2$
such that $\cAd V(\cB_1(D)) = \cB_2(D)$ for any double cone $D$, $\cAd V (U_1(\gamma)) = U_2(\gamma)$
for any $\gamma \in \uDiffSone\times \uDiffSone$ and $V\Omega_1 = \Omega_2$.
By the second property, $\cAd V$ intertwines the actions of diffeomorphisms, thus the maximal chiral components are intertwined as well, in particular
$\cAd V (\tau_1(\cA_+(I)\otimes \bb1)) = \tau_2(\cA_+(I)\otimes \bb1),
\cAd V (\tau_1(\bb1\otimes \cA_-(I))) = \tau_2(\bb1\otimes \cA_-(I))$, where
$\tau_1, \tau_2$ are the representations of $\cA_+\otimes \cA_-$ on $\cH_1, \cH_2$, respectively.
By restricting the latter to their vacuum parts respectively denoted by $\tau_{1,0}, \tau_{2,0}$,
we have that $\cAd V (\tau_{1,0}(\cA_+(I)\otimes \bb1)) = \tau_{2,0}(\cA_+(I)\otimes\bb1)$
and $\cAd V (\tau_{1,0}(\bb1\otimes \cA_-(I))) = \tau_{2,0}(\bb1\otimes \cA_-(I))$.
If we identify $\tau_{1,0} = \tau_{2,0}$, this means that $\cAd V$ gives a vacuum-preserving automorphism of $\cA_+\otimes \cA_-$.
And this should intertwine all the other representations as above.

\section{Classification of two-dimensional extensions}\label{classification}
We classify two-dimensional extensions $\cA_{\frh_p} \subset \cB$ of Heisenberg nets.
More precisely, we consider the following situation:
let $\frh$ be an $N$-dimensional real Hilbert space, $p$ an orthogonal projection on $\frh$, and
$\cA_{\frh_p} := \cA_{p\frh}\otimes \cA_{\bar p\frh}$ the two-dimensional conformal with chiral components both Heisenberg nets as in Section \ref{sec:2d-chiral}.
Furthermore, we assume that $U(\gamma_+\times\iota) \in \cA_{p\frh}(I_+)$ if $\supp \gamma_+ \subset I_+$
and $U(\iota\times\gamma_-) \in \cA_{\bar p\frh}(I_-)$ if $\supp \gamma_- \subset I_-$.
In this case, we say that $\cB$ is a \textbf{conformal extension} of $\cA_{\frh_p}$.
This implies that, for (lightlike) translations $T_t$,
we have $U(T_t\times \iota), U(\iota\times T_t) \in \overline{\bigcup_{I_\pm \in \cI}\tau_\cB(\cA_{p\frh}(I_+)\otimes \cA_{\bar p\frh}(I_-))}$,
where $\tau_\cB$ is the representation of $\cA_{p\frh}\otimes\cA_{\bar p\frh}$ on the Hilbert space $\cH$ of the net $\cB$.
We wish to classify two-dimensional conformal extensions $\cA_{p\frh}\otimes \cA_{\bar p\frh} \subset \cB$, in the sense of Section \ref{1d2dconformalnet},
up to unitary equivalence (see Section \ref{unitaryequiv}).

In full generality, the problem looks rather difficult. Indeed, by \cite[Section 3.3]{Longo03}, \cite[Section 2.3]{Carpi04}, $\cH$ can be seen as a representation
of $\cA_{\frh_p}$ and assuming that it decomposes into irreducible representations, their unitary equivalence classes should form a subgroup of $(\bbR^N,+)$
with respect to the fusion rule. Thus we would need to know all possible
additive subgroups of $\bbR^N$, but we are unaware of a nice such parametrization.

Instead, if we require that such a subgroup forms a discrete subgroup of $(\bbR^N,+)$ (with the induced topology), then it is a (not necessarily $N$-dimensional)
lattice in $\bbR^N$ \cite[(4.2) Proposition]{Neukirch99AlgebraicNumberTheory}.
Under this assumption, we classify all such conformal extensions $\cA_{p\frh}\otimes \cA_{\bar p\frh} \subset \cB$. 
The result (see Theorem \ref{th:classification}) is that any such extension corresponds to
a (not necessarily $N$-dimensional) lattice which is even (thus necessarily integral, \ie $\bbZ$-valued)
with respect to $(\cdot|\cdot)$ defined as in Section \ref{sec:2d-twist},
and to any such even lattice $Q$, there is a unique extension $\cA_Q$ from Section \ref{2dlatticenet}.

The discreteness assumption has appeared also in \cite{BMT88} in their classification of relatively local extensions
of the $\rmU(1)$-current net. Here, for some basic parts, our arguments closely follow the one-dimensional case of \cite[Proposition 3.1]{BMT88}.

\begin{lemma}\label{lm:relativelocal}
 Let $\cA_+$, $\cA_-$ be conformal nets on $S^1$, $\cA_+\otimes\cA_-$ be the conformal net on $\bbR^{1+1}$ as in Section \ref{1d2dconformalnet}
 and $\cB$ be a conformal extension of $\cA_+\otimes \cA_-$ in the sense specified above.
 Let $I_+\times I_- \subset \bbR^{1+1}$ (in the sense of Section \ref{2dconformalnet}). Then $\cA_+(I_+')\otimes \bb1$ commutes with $\cB(I_+\times I_-)$.
\end{lemma}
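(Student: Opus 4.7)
The plan is to reduce the statement to a locality check between two spacelike-separated double cones in $\cB$, using the fact that $\underline{\cA}_+$ embeds as a subnet of $\cB$ and, being chiral, satisfies $\underline{\cA}_+(J\times K) = \underline{\cA}_+(J\times K')$ for all choices of $K, K'$. First, by additivity of $\cA_+$ on $S^1$ (property stated in Section \ref{S1conformalnet}), we have $\cA_+(I_+') = \bigvee_{J\Subset I_+'} \cA_+(J)$, so it suffices to show that $\cA_+(J)\otimes\bb1$ commutes with $\cB(I_+\times I_-)$ for every open interval $J\subset S^1$ with $\bar J\subset I_+'$.

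Next, I will perform a geometric normalization. Since $J$ and $I_+$ are disjoint intervals in $S^1$, we have $J\cup I_+\subsetneq S^1$, so we may choose a rotation $\gamma_+$ of $S^1$ such that the point $-1$ lies outside $\gamma_+(J\cup I_+)$. Applying $\cAd U(\gamma_+\times\iota)$ transports $\cA_+(J)\otimes\bb1$ to $\cA_+(\gamma_+J)\otimes\bb1$ and $\cB(I_+\times I_-)$ to $\cB(\gamma_+I_+\times I_-)$ (using diffeomorphism covariance of $\cB$ and of the chiral subnet $\underline{\cA}_+$). Thus we may assume from the outset that $J$ and $I_+$ are disjoint open intervals in $(-\pi,\pi)$, and without loss of generality $J>I_+$ in this linear order (the other case is symmetric). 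Choose any nonempty open interval $I_-^{(0)}\subset(-\pi,\inf I_-)$; then $J\times I_-^{(0)}$ is a double cone, and the conditions $J>I_+$, $I_-^{(0)}<I_-$ guarantee that it is spacelike-separated from $I_+\times I_-$ (as described in Section \ref{2dconformalnet}).

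To conclude, recall that the chiral subnet $\underline{\cA}_+\subset\cB$ realizes the embedding $\tau_\cB$ of $\cA_+\otimes\bb1$, and its defining property $\cAd U(\iota\times\gamma_-)(x)=x$ for $x\in\underline{\cA}_+(J\times K)$ makes $\underline{\cA}_+(J\times K)$ independent of $K$; thus
\[
\cA_+(J)\otimes\bb1 \;=\; \underline{\cA}_+(J\times I_-^{(0)}) \;\subset\; \cB(J\times I_-^{(0)}).
\]
By 2d locality (2dCN\ref{2dcn:locality}) of $\cB$, this commutes with $\cB(I_+\times I_-)$. Undoing the rotation $\cAd U(\gamma_+^{-1}\times\iota)$ yields the desired commutation of $\cA_+(J)\otimes\bb1$ with $\cB(I_+\times I_-)$, and additivity finishes the proof.

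The only mildly delicate step is the geometric normalization in the second paragraph: one must verify that $J\subset I_+'$ can always be brought, after a chiral rotation of the $+$ factor alone, into a configuration where both $J$ and $I_+$ lie in a single copy $(-\pi,\pi)$ of the Minkowski half-line, so that a spacelike partner $J\times I_-^{(0)}\subset\bbR^{1+1}$ to $I_+\times I_-$ exists; everything else is a direct application of the chiral structure together with 2d locality of $\cB$.
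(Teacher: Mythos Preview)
Your proof is correct and follows essentially the same strategy as the paper: reduce to $J\Subset I_+'$ by additivity, use a M\"obius transformation on the $+$ lightray to bring $J$ into $(-\pi,\pi)$, then invoke the chiral independence of $\underline\cA_+(J\times K)$ from $K$ together with 2d locality of $\cB$. The only stylistic difference is that the paper picks $\gamma\in\uMob$ so that $\gamma I_+ = I_+$ setwise (while $\gamma J\subset(-\pi,\pi)$); this leaves $\cB(I_+\times I_-)$ fixed under $\cAd U(\gamma\times\iota)$ and avoids your final ``undoing'' step, but the argument is otherwise the same.
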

\begin{proof}
 We take $I \in \cI$ such that $\overline I \subset I_+'$.
 There is $\gamma \in \uMob$ (acting on $\bbR$) that maps $I_+$ onto itself
 and $\gamma I\subset \bbR$. With such $\gamma$, it holds that $\cAd U(\gamma \times \iota)(\cB(I_+\times I_-)) = \cB(I_+\times I_-)$,
 while $\cAd U(\gamma \times \iota)(\cA(I)\otimes\bb1) = \cA_+(\gamma I)\otimes \bb1$.
 Hence, as $\cB(I_+\times I_-)$ and $\cA_+(\gamma I)\otimes \bb1$ commute,
 so do $\cB(I_+\times I_-)$ and $\cA_+(I)\otimes \bb1$.
 As $I \subset I_+$ was arbitrary under the condition $\overline I \subset I_+'$,
 we conclude that $\cB(I_+\times I_-)$ and $\cA_+(I_+')\otimes \bb1$ commute.
\end{proof}

Recall that a representation $\rho$ of a conformal net $\cA_\kappa$ on $S^1$ is said to be \textbf{factorial}
if $\rho(\cA_\kappa)'' := \overline{\bigcup_{I \in \cI}\rho_I(\cA_\kappa(I)))}$ is a factor.

\begin{proposition}\label{pr:tensorproduct2dsector}
 Assume that $\cA_\pm$ are type I conformal nets \cite[Section 3.2]{LT18},
 that is, all their factorial representations $\rho$ are irreducible, thus $\rho(\cA_\kappa)''$ is a type $I$ factor.
 Then any factorial representation (in particular, irreducible) of the two-dimensional conformal net $\cA_+\otimes \cA_-$ is of the form $\rho_+\otimes\rho_-$
 for some representations $\rho_+, \rho_-$ of $\cA_+, \cA_-$, respectively.
\end{proposition}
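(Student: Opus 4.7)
My plan is to reduce the statement to the classical spatial decomposition of two commuting type I factors, and then to transport the resulting tensor factorization back to the net-representation level.

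Given a factorial representation $\pi$ of $\cA_+\otimes\cA_-$ on a Hilbert space $\cH$, restriction defines chiral components $\pi_\pm$: for each $I_+\in\cI$, set $\pi_+|_{\cA_+(I_+)} := \pi_{I_+\times I_-}|_{\cA_+(I_+)\otimes\bb1}$ with $I_-\in\cI$ arbitrary; independence of the choice of $I_-$ follows from the compatibility axiom via a common-refinement argument, and symmetrically one obtains $\pi_-$. Set $\cM_\pm := \pi_\pm(\cA_\pm)''$. Since $\cA_+(I_+)\otimes\bb1$ and $\bb1\otimes\cA_-(I_-)$ commute in the spatial tensor product, $\cM_+$ and $\cM_-$ commute in $B(\cH)$; moreover, as each $\cA_+(I_+)\otimes\cA_-(I_-)$ is generated as a von Neumann algebra by these two commuting subalgebras, one has $\cM_+\vee\cM_- = \pi(\cA_+\otimes\cA_-)'' =: \cM$, which is a factor by hypothesis.

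Next I show that $\cM_\pm$ are themselves factors. For $z\in Z(\cM_+)=\cM_+\cap\cM_+'$, the element $z$ commutes with $\cM_+$ by definition and with $\cM_-\subset\cM_+'$ by construction, hence with $\cM_+\vee\cM_-=\cM$; combined with $z\in\cM_+\subset\cM$, this gives $z\in Z(\cM)=\bbC$, and the symmetric argument handles $\cM_-$. Consequently $\pi_\pm$ are factorial representations of the type I nets $\cA_\pm$, so $\cM_\pm$ are type I factors on $\cH$.

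The core technical input is the classical spatial decomposition for a pair of commuting type I factors: there exists a unitary $W:\cH\to K_+\otimes K_-\otimes K_0$ under which $\cM_+$ becomes $B(K_+)\otimes\bb1\otimes\bb1$ and $\cM_-$ becomes $\bb1\otimes B(K_-)\otimes\bb1$. Absorbing the residual multiplicity space by setting $\cH_+:=K_+$ and $\cH_-:=K_-\otimes K_0$ yields a unitary $W:\cH\to\cH_+\otimes\cH_-$ with $W\cM_+W^* = B(\cH_+)\otimes\bb1$ and $W\cM_-W^*\subset\bb1\otimes B(\cH_-)$. Composing $\pi_+$ with the canonical identification $B(\cH_+)\otimes\bb1\cong B(\cH_+)$ yields a representation $\rho_+$ of $\cA_+$ on $\cH_+$, and analogously one defines $\rho_-$ on $\cH_-$, so that $W\pi_+W^*=\rho_+\otimes\bb1$ and $W\pi_-W^*=\bb1\otimes\rho_-$.

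Finally, since $\cA_+(I_+)\otimes\cA_-(I_-)$ is generated as a von Neumann algebra by the commuting subalgebras $\cA_+(I_+)\otimes\bb1$ and $\bb1\otimes\cA_-(I_-)$, and $W\pi W^*$ sends these onto $\rho_+(\cA_+(I_+))\otimes\bb1$ and $\bb1\otimes\rho_-(\cA_-(I_-))$, one concludes $W\pi W^*=\rho_+\otimes\rho_-$ on each double cone. Hence $\pi$ is unitarily equivalent to the tensor product representation $\rho_+\otimes\rho_-$. The main obstacle I expect is the bookkeeping around the multiplicity factor $K_0$: one must confirm that absorbing $K_0$ into $\cH_-$ still yields a genuine net representation of $\cA_-$, but since the definition of net representation in Section \ref{S1conformalnet} permits reducible representations, this reduces to checking compatibility of the resulting family of local representations, which is immediate from the construction.
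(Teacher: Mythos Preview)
Your proof is correct and follows essentially the same strategy as the paper: restrict to the chiral components, show these restrictions are factorial by embedding their centers into the center of the global factor, invoke the type~I hypothesis, and then use the standard spatial tensor splitting of commuting type~I factors. The paper's version is terser in the final step (it simply states that two commuting type~I factors yield a tensor product representation, without spelling out the decomposition $K_+\otimes K_-\otimes K_0$), and slightly different in how it sets up the chiral restrictions: rather than defining $\pi_+$ via $\pi_{I_+\times I_-}|_{\cA_+(I_+)\otimes\bb1}$ and arguing independence of $I_-$, the paper passes through the Einstein cylinder and restricts to the time-zero circle (see Section~\ref{1d2dconformalnet}), which forces the extra step of first proving commutativity only for $|I_\pm|<\pi$ and then extending by additivity. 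Your direct double-cone restriction is arguably cleaner on this point.
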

\begin{proof}
 A representation $\rho$ of $\cA_+\otimes \cA_-$ can be restricted to $\cA_+ \otimes \bb1$ and $\bb1 \otimes \cA_-$, respectively,
 giving representations of $\cA_+, \cA_-$ as conformal nets on $S^1$ (see Section \ref{1d2dconformalnet}).
 For any pair $I_+, I_-$ such that $|I_\pm| < \pi$, so that $I_+ \cup I_- \subset I$ for some $I \in \cI$,
 we see that the images of $\cA_+(I_+)$ and $\cA_-(I_-)$ under the respective restrictions of $\rho$ commute.
 By additivity, $\rho(\cA_+(I_+) \otimes \bb1)$ and $\rho(\bb1 \otimes \cA_-(I_-))$ commute for arbitrary $I_\pm \in \cI$.

 The center $\rho(\cA_+\otimes \bb1)'' \cap \rho(\cA_+\otimes \bb1)'$
 is included in the center
 \begin{align*}
  \rho(\cA_+\otimes \cA_-)'' \cap \rho(\cA_+\otimes \cA_-)' = \rho(\cA_+\otimes \cA_-)'' \cap \rho(\cA_+\otimes \bb1)' \cap \rho(\bb1\otimes \cA_-)',
 \end{align*}
 as we saw that $\rho(\cA_+\otimes \bb1)'' \cap \rho(\cA_+\otimes \bb1)' \subset \rho(\cA_+\otimes \bb1)''
 \subset \rho(\bb1\otimes \cA_-)'$.
 As $\rho$ is factorial, both centers must be trivial. That is, the restriction of $\rho$ to $\cA_+$
 is factorial. The same holds for $\cA_-$.

 By assumption, $\cA_+, \cA_-$ are of type I. Thus $\rho(\cA_+\otimes\bb1)''$ and $\rho(\bb1\otimes\cA_-)''$ are type I
 factors and, as they commute, the representation $\rho$ is of the tensor product form.

\end{proof}

Let $\cA_{\frh_p} \subset \cB$ be a two-dimensional conformal extension, with $\cB$ not necessarily local.
Assume that the restriction $\tau_\cB$ of the vacuum representation of $\cB$ on the Hilbert space $\cH$ to the subnet $\cA_{\frh_p}$
is a \emph{direct sum of irreducible representations} of $\cA_{\frh_p}$.
Combining Proposition \ref{pr:tensorproduct2dsector} and \cite[Example 3.9, Proposition 3.11]{LT18} with the classification of 
irreducible representations of $\cA_\bbR$ in \cite[Section 2B, 3B]{BMT88}, each such irreducible representation is of the form $\sigma_{p\pmb{\lambda}h}\otimes \sigma_{\bar p\pmb{\lambda}h}$
up to unitary equivalence. Thus we may and do assume that
$\cH = \bigoplus_{\pmb{\lambda} \in Q} \cH_{p\pmb{\lambda}h}\otimes \cH_{\bar p\pmb{\lambda}h}$
as the representation $\tau_{\cB} = \bigoplus_{\pmb{\lambda} \in Q} \sigma_{p\pmb{\lambda}h}\otimes \sigma_{\bar p\pmb{\lambda}h}$
of $\cA_{\frh_p} = \cA_{p\frh}\otimes \cA_{\bar p\frh}$,
and $Q \subset \frh = \bbR^N$ is a collection of points $\pmb{\lambda} \in Q$, possibly counted with multiplicity, for which $\sigma_{p\pmb{\lambda}h}\otimes \sigma_{\bar p\pmb{\lambda}h}$ appears in the decomposition.
Note that $\pmb 0 \in Q$, as $\cA_{\frh_p}$ and $\cB$ have the same vacuum vector (cyclic for $\cA_{\frh_p}$ on the subspace $\cH_{\pmb{0}}\otimes \cH_{\pmb{0}}$).
We take $h \in \cC^\infty(S^1)$ as in Section \ref{sec:Heisenberg-net} in such a way that $\supp h \subset I$, for some fixed $I\in \cI$.

The following is an adaptation of parts of \cite[Proposition 3.1]{BMT88} to the two-dimensional case
(locality of $\cB$ is not used here, as in the original proof).
\begin{lemma}\label{lm:BMT3.1'}
 Let $\cA_{\frh_p} \subset \cB$ be as above, and assume that the set of points in $Q$ is discrete in the induced topology from $\bbR^N$.
 Then  for each $\pmb{\alpha} \in Q$ and a double cone $O = I_+\times I_-$,
 there is $\phi^{\pmb{\alpha}}\in \cB(O)$ such that $\phi^{\pmb{\alpha}}\Omega \in \cH_{p\pmb{\alpha}h}\otimes \cH_{\bar p\pmb{\alpha}h}$.
 Moreover, each $\pmb{\lambda} \in Q$ comes with multiplicity $1$ (\ie $\sigma_{p\pmb{\alpha}h}\otimes \sigma_{\bar p\pmb{\alpha}h}$ appears with multiplicity 1 in $\tau_\cB$),
 and $Q \subset \bbR^N$ is an additive subgroup.
\end{lemma}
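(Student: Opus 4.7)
The plan is to adapt the one-dimensional classification of \cite[Proposition 3.1]{BMT88} to the two-dimensional setting, using Lemma \ref{lm:relativelocal} as the substitute for chiral Haag duality that controls how chiral observables localized in the causal complement of $O = I_+\times I_-$ sit inside $\cB(O)'$. Fix $O$ and $\pmb{\alpha}\in Q$. By diffeomorphism covariance we may assume that the reference function $h$ has support $I_h\subset I_+'\cap I_-'$, which does not change the sector decomposition up to unitary equivalence. For each $\pmb{v}\in\frh$, pick $f_\pm\in C^\infty(S^1)$ with $\mathrm{supp}(f_\pm)\subset I_\pm'$ and $\frac{1}{2\pi}\int f_\pm h = 1$, and set $W_s^{\pmb{v}} := \tau_\cB(e^{isp\pmb{v}(f_+)}\otimes e^{is\bar p\pmb{v}(f_-)})$. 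By Lemma \ref{lm:relativelocal}, $W_s^{\pmb{v}}\in\cB(O)'$. Reading off the explicit action of $\sigma_{p\pmb{\lambda}h}$ and $\sigma_{\bar p\pmb{\lambda}h}$ on Weyl operators, $W_s^{\pmb{v}}$ acts on the sector $\cH_{p\pmb{\lambda}h}\otimes\cH_{\bar p\pmb{\lambda}h}$ as $e^{is(\pmb{v},\pmb{\lambda})_\frh}$ times a vacuum-type Weyl operator $V_{\pmb{v},s}$ that is identical on every sector; in other words $W_s^{\pmb{v}} = V_{\pmb{v},s}\cdot D_{\pmb{v},s}$, where $D_{\pmb{v},s}:=\bigoplus_{\pmb{\lambda}\in Q} e^{is(\pmb{v},\pmb{\lambda})_\frh}P_{\pmb{\lambda}}$ realizes the charge-detecting phase.

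Existence of $\phi^{\pmb{\alpha}}$ will then follow from Fourier inversion on the dual of the discrete set of charges. By Reeh-Schlieder for $\cB$, choose $B\in\cB(O)$ with nonzero component $P_{\pmb{\alpha}}B\Omega$ in the $\pmb{\alpha}$-sector. Because $Q\subset \bbR^N$ is discrete, there is a tempered weight $\chi:\frh\to\bbC$ such that $\int_\frh \chi(\pmb{v}) e^{i(\pmb{v},\pmb{\lambda})_\frh}\,d\pmb{v} = \delta_{\pmb{\lambda},\pmb{\alpha}}$ for $\pmb{\lambda}\in Q$. I would then average $B$ against the operators $W_s^{\pmb{v}}$ with weight $\chi$, compensating the vacuum-Weyl part $V_{\pmb{v},1}$ by multiplying on the right by a matching element of $\tau_\cB(\cA_{p\frh}(I_+)\otimes\cA_{\bar p\frh}(I_-))$ so that the integrand stays in $\cB(O)$; this yields
\[
\phi^{\pmb{\alpha}} := \int_\frh \chi(\pmb{v})\, W_1^{\pmb{v}}\cdot B\cdot R(\pmb{v})\,d\pmb{v}\ \in\ \cB(O),
\]
with $\phi^{\pmb{\alpha}}\Omega = P_{\pmb{\alpha}}B\Omega\in\cH_{p\pmb{\alpha}h}\otimes\cH_{\bar p\pmb{\alpha}h}$.

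Multiplicity one and the group property will then follow by standard manipulations. The adjoint $(\phi^{\pmb{\alpha}})^*\in\cB(O)$ carries charge $-\pmb{\alpha}$, and the product $\phi^{\pmb{\alpha}}\phi^{\pmb{\beta}}\in\cB(O)$ carries charge $\pmb{\alpha}+\pmb{\beta}$; nontriviality of these products will come from $(\phi^{\pmb{\alpha}})^*\phi^{\pmb{\alpha}}\Omega = c\Omega$ with $c>0$ (since the vacuum sector has multiplicity one and $\phi^{\pmb{\alpha}}\Omega\neq 0$), showing simultaneously that $\phi^{\pmb{\alpha}}\Omega$ is cyclic and separating for the sector representation $\sigma_{p\pmb{\alpha}h}\otimes\sigma_{\bar p\pmb{\alpha}h}$ acting on $P_{\pmb{\alpha}}\cH$. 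An extra copy of the $\pmb{\alpha}$-sector would then produce, via Reeh-Schlieder and Haag duality for $\cB$, a nonscalar element of $\cB(O)\cap\cB(O)'$, contradicting factoriality of local algebras, and so force multiplicity one. The delicate step, where the two-dimensional argument diverges most significantly from the chiral one of \cite{BMT88}, is engineering the right-compensator $R(\pmb{v})$ in the averaging integral so that each term truly lies in $\cB(O)$, essentially by splitting $W_1^{\pmb{v}} = V_{\pmb{v},1}D_{\pmb{v},1}$ and re-absorbing $V_{\pmb{v},1}$ into a local Weyl operator in $\tau_\cB(\cA_{p\frh}(I_+)\otimes\cA_{\bar p\frh}(I_-))$ that acts trivially on $\Omega$, and then checking strong convergence of the weak integral on the vacuum.
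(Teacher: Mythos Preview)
Your averaging construction has a genuine gap. You place $W_s^{\pmb{v}} = \tau_\cB(e^{isp\pmb{v}(f_+)}\otimes e^{is\bar p\pmb{v}(f_-)})$ in $\cB(O)'$ via Lemma \ref{lm:relativelocal}, and then form the integrand $W_1^{\pmb{v}} B R(\pmb{v})$ with $B, R(\pmb{v})\in\cB(O)$. But since $W_1^{\pmb{v}}$ commutes with $\cB(O)$, this equals $B R(\pmb{v}) W_1^{\pmb{v}}$, which lies in $\cB(O)$ only if $W_1^{\pmb{v}}$ itself does; as $\cB(O)$ is a factor and $W_1^{\pmb{v}}$ is not scalar, it does not. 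Your ``compensation'' cannot help: $V_{\pmb{v},1}$ is the diagonal action of a Weyl operator supported in $I_\pm'$, and no element of $\tau_\cB(\cA_{p\frh}(I_+)\otimes\cA_{\bar p\frh}(I_-))$ can cancel it, since the supports are disjoint. Even on the vacuum your formula fails: $W_1^{\pmb{v}}B\Omega$ carries the nontrivial Weyl factor $V_{\pmb{v},1}$ on every sector, so integrating against $\chi$ does not produce $P_{\pmb{\alpha}}B\Omega$.

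The paper avoids this by working with the \emph{pure} charge operators $e^{\rmi\pmb{s}\cdot\frq}$ rather than Weyl operators in $\cB(O)'$. The key computation writes $e^{\rmi s\frq_j}$ as a phase times $\tau_\cB(e^{\rmi\pmb{e}_j(sf_1)})\tau_\cB(e^{\rmi\pmb{e}_j(sf_2)})$ for any partition of unity $f_1+f_2=1$ with $\supp f_2\subset I_+'$. The $f_2$-factor lies in $\cB(O)'$ by relative locality, so $\cAd e^{\rmi s\frq_j}$ on $\cB(O)$ equals $\cAd \tau_\cB(e^{\rmi\pmb{e}_j(sf_1)})$, which maps $\cB(O)$ into $\cB(\supp f_1\times I_-)$; shrinking $\supp f_1$ down to $\overline{I_+}$ gives $\cAd e^{\rmi\pmb{s}\cdot\frq}(\cB(O))=\cB(O)$. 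One then invokes Arveson spectral subspaces for this automorphic $\bbR^N$-action (discreteness of $Q$ enters here) to produce $\phi^{\pmb{\alpha}}\in\cB(O)$ with $\phi^{\pmb{\alpha}}\Omega\in\cH_{\pmb{\alpha}h}$.

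Two further points. Your claim $(\phi^{\pmb{\alpha}})^*\phi^{\pmb{\alpha}}\Omega=c\Omega$ is not justified: $(\phi^{\pmb{\alpha}})^*\phi^{\pmb{\alpha}}$ has charge zero and hence equals $\tau_\cB(y)$ for some $y\in\cA_{\frh_p}(O)$, but $y\Omega$ need not be proportional to $\Omega$. The paper instead takes the polar decomposition and uses that $\cA_{\frh_p}(O)$ is type III to upgrade $\phi^{\pmb{\alpha}}$ to an isometry, after which $\phi^{\pmb{\alpha}}\phi^{\pmb{\beta}}\Omega\neq 0$ is immediate. Finally, your multiplicity-one argument via ``factoriality of $\cB(O)$'' is too sketchy; the paper uses a cluster-type argument: charge-zero products $\phi_1 U(T_t\times\iota)\phi_2^*$ lie in the weak closure of $\tau_\cB(\cA_{\frh_p})$, and positivity of energy plus analytic continuation in $t$ forces the limit $\langle\Psi,\phi_1\Omega\rangle\langle\phi_2\Omega,\Phi\rangle=0$, whence $\Psi=0$.
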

\begin{proof}
 By taking a basis on $p\frh, \bar p\frh$ we identify $\frh$ with $\bbR^N$ and
 define the operators $\frq_j$ on $\cH$, $j = 1, \cdots, N$, that act by multiplication with $\lambda_j$ on each direct summand $\cH_{p\pmb{\lambda}h}\otimes \cH_{\bar p\pmb{\lambda}h}$ of $\cH$, $\pmb{\lambda} \in Q$.

 For $\pmb{s} \in \bbR^N$, let $\pmb{s}\cdot \frq := \sum_j s_j \frq_j$ and take the unitary $\rme^{\rmi \pmb{s}\cdot \frq}$ 
 on $\cH$.
 We claim that it holds that $\cAd \rme^{\rmi s\cdot \frq}(\cB(O)) = \cB(O)$ for any double cone $O = I_+\times I_-$.
 Indeed, let us pick a component $j$, the $j$-th canonical basis vector $\pmb{e}_j = (0,\cdots, 0, \underset{j\text{-th}}{1}, 0, \cdots, 0)$
 and $f_1, f_2 \in C^\infty(S^1)$ such that $f_1 + f_2 = 1$, $\supp f_i \subset I_i, i=1,2$
 and $\{I_1, I_2\}$ is a cover of $S^1$, and $s\in\bbR$.
 
 In the vacuum direct summand $\cH_{p\pmb{0}h}\otimes \cH_{\bar p\pmb{0}h}$, recalling the notation from Section \ref{sec:Heisenberg-net}, it holds that
 $\rme^{\frac{1}{4\pi\rmi}\int s f_1'(t) sf_2(t) dt}\rme^{\rmi\pmb{e}_j(sf_1)}\rme^{\rmi\pmb{e}_j(sf_2)} = \rme^{\rmi\pmb{e}_j(s)} = \bb1$,
 where $s$ is the constant function, hence $\pmb{e}_j(s) = J_0$ which is the zero operator on $\cH_{p\pmb{0}h}\otimes \cH_{\bar p\pmb{0}h}$.
 Therefore, in the representation (automorphism) denoted for simplicity $\sigma_{\pmb{\lambda}h} := \sigma_{p\pmb{\lambda}h}\otimes \sigma_{\bar p\pmb{\lambda}h}$, for $s \in \bbR$, it holds that
 \begin{align*}
  &\rme^{\frac{1}{4\pi\rmi}\int sf_1'(t) sf_2(t)dt}\sigma_{\pmb{\lambda} h}(\rme^{\rmi\pmb{e}_j(sf_1)})\sigma_{\pmb{\lambda} h}(\rme^{\rmi\pmb{e}_j(sf_2)}) \\
  &= \rme^{\frac{1}{4\pi\rmi}\int sf_1'(t) sf_2(t)dt}\rme^{\frac{\rmi}{2\pi}\int sf_1(t) \lambda_j h(t)dt}\rme^{\rmi\pmb{e}_j(sf_1)}
  \rme^{\frac{\rmi}{2\pi}\int sf_2(t) \lambda_j h(t)dt}\rme^{\rmi\pmb{e}_j(sf_2)} \\
  &= \rme^{\frac{\rmi s \lambda_j}{2\pi}\int (f_1(t) + f_2(t))h(t)dt}\bb1 \\
  &=  \sigma_{\pmb{\lambda} h}(\rme^{\rmi s\lambda_j}\bb1).
 \end{align*}
 This shows that, on $\cH$, for $s_j \in \bbR$ it holds that
 \begin{align*}
  \rme^{\frac{1}{4\pi\rmi}\int s_j f_1'(t) s_jf_2(t)dt}\tau_{\cB}(\rme^{\rmi\pmb{e}_j(s_jf_1)})\tau_{\cB}(\rme^{\rmi\pmb{e}_j(s_jf_2)})
  &= \rme^{\rmi s_j \frq_j}.
 \end{align*}
 Note that this holds for any decomposition $1 = f_1 + f_2$ under the above conditions.

 As $\cB$ is an extension of $\cA_{\frh_p}$, for any $\pmb{s} \in \bbR^N$,
 it holds that $\cAd \rme^{\rmi \pmb{s}\cdot \frq}(\cB(I_+\times I_-)) \subset \cB(I_+ \times I_-)$.
 Indeed, we may assume that $\pmb{s} = s\pmb{e}_j$ for some $j$ in the  first $\dim p\frh$ entries (the other case is similar).
 Then we can write $e^{\rmi \pmb{s}\cdot \frq}$, where then $\pmb{s}\cdot \frq = s \frq_j$, as a product of two Weyl operators as above.
 We can take $f_1, f_2$ in such a way that $\supp f_2 \subset I_+'$,
 and $f_1 + f_2 = 1$ on $S^1$. By Lemma \ref{lm:relativelocal},
 $\cAd\rme^{\rmi \pmb{s}\cdot \frq}(\cB(I_+\times I_-)) = \cAd \tau_{\cB}(\rme^{\rmi s\pmb{e}_j(f_1)})(\cB(I_+\times I_-)) \subset \cB(\supp f_1 \times I_-)$.
 As the decomposition $f_1 + f_2 = 1$ is arbitrary under the condition that $\supp f_2 \subset I_+'$,
 $\supp f_1$ can be any closed interval containing $I_+$. In particular,
 $\cAd\rme^{\rmi \pmb{s}\cdot \frq}(\cB(I_+\times I_-)) = \cB(I_+ \times I_-)$.
 
 We claim that the set $Q$ forms an additive subgroup in $\bbR^N$. Let us fix $O = I_+\times I_-$.
 Now we have the situation where the locally compact group $\bbR^N$ has a unitary representation $V(\pmb{s}) := \rme^{\rmi \pmb{s}\cdot \frq}$,
 $\cAd V(\pmb{s})$ is an automorphism of $\cB(O)$, $V(\pmb{s})\Omega = \Omega$ and $\Omega$ is cyclic and separating for $\cB(O)$.
 As in \cite[Section 1.8]{Baumgaertel95}, we can define the Arveson spectrum of $\cAd V$, 
 and under these conditions,
 it coincides with the joint spectrum of $\frq_1, \ldots, \frq_N$. By assumption, the joint spectrum is discrete,
 and we can define the Arveson spectrum $\cB(O, \cAd V, \pmb{\lambda})$ of $\cAd V$ for one point $\pmb{\lambda}$,
 and by (an analogue of) \cite[Theorem 1.8.4]{Baumgaertel95}, we have
 \begin{align*}
  \cB(O, \cAd V, \pmb{\lambda}) = \{x \in \cB(O), x\Omega \in \cH_{\pmb \lambda h}\},
 \end{align*}
 where we denote for simplicity $\cH_{\pmb \lambda h} := \cH_{p\pmb{\lambda}h} \otimes \cH_{\bar p\pmb{\lambda}h}$.
 In particular, for each $\pmb{\alpha} \in Q$, we can choose $\phi^{\pmb{\alpha}} \in \cB(O, \cAd V, \pmb{\alpha})$, $\phi^{\pmb{\alpha}} \neq 0$, thus $\phi^{\pmb{\alpha}} \Omega \in \cH_{\pmb{\alpha}h}$.

 We claim that $x = \phi^{\pmb{\alpha}*}\phi^{\pmb{\alpha}}$ belongs to $\tau_{\cB}(\cA_{\frh_p}(O))$, $O = I_+\times I_-$.
 Indeed, it is invariant under $\cAd V(\pmb{s})$, and it can be restricted to $\cH_{\pmb 0 h}$.
 The restriction $x|_{\cH_{\pmb 0 h}}$ commutes with $\cA_{\frh_p}(I_+'\times I_-')$, thus by Haag duality on $\cH_{\pmb 0 h}$,
 there is an element $y$ of $\cA_{\frh_p}(I_+\times I_-)$ which coincides with the restriction of $x|_{\cH_{\pmb 0 h}}$
 to $\cH_{\pmb 0 h}$. By the separating property of $\Omega$ for $\cB(I_+\times I_-)$,
 $x = \tau_{\cB}(y)$, 
 proving the claim.
 
 Next, we claim that we can take $\phi^{\pmb{\alpha}}$ an isometry.
 Let $\phi^{\pmb{\alpha}} = v^{\pmb{\alpha}}(\phi^{\pmb{\alpha}*}\phi^{\pmb{\alpha}})^\frac12$ be the polar decomposition.
 We may assume that the phase $v^{\pmb{\alpha}}$ 
 is an isometry.
 Indeed, if it is only a partial isometry,
 then $v^{\pmb{\alpha}*} v^{\pmb{\alpha}}$ is a projection and it belongs to $\tau_{\cB}(\cA_{\frh_p}(I_+\times I_-))$ by the same argument as above.
 As $\cA_{\frh_p}(I_+\times I_-)$ is a type III factor, 
 we can find an isometry $u \in \cA_{\frh_p}(I_+\times I_-)$
 such that $\tau_{\cB}(uu^*) = v_{\pmb{\alpha}}^* v_{\pmb{\alpha}}$,
 and $v_{\pmb{\alpha}} \tau_{\cB}(u) \in \cB(O)$ satisfies the desired property.

 We show that $Q$ is an additive subgroup of $\bbR^N$.
 For this purpose, let $\pmb{\alpha}, \pmb{\beta} \in Q$ and take $\phi^{\pmb{\alpha}}, \phi^{\pmb{\beta}} \in \cB(O)$ isometries as above.
 As $\phi^{\pmb{\beta}*}\phi^{\pmb{\alpha}*}\phi^{\pmb{\alpha}}\phi^{\pmb{\beta}} = \bb1$,
 we have $\phi^{\pmb{\alpha}}\phi^{\pmb{\beta}} \Omega \neq 0$,
 and it is in $\cH_{(\pmb{\alpha} + \pmb{\beta})h}$.
 Next, $\Omega$ is separating for $\cB(I)$, $\phi^{\pmb{\alpha}*}\Omega \neq 0$,
 and it is in $\cH_{-\pmb{\alpha}h}$. Clearly, $\pmb 0 \in Q$.
 This shows that $Q$ is an additive subgroup of $\bbR^N$.

 Finally, we show that
 $\pmb{\lambda}$ has multiplicity $1$.
 To see this, let $\Psi, \Phi$, with $\Phi \neq 0$, be vectors that belong to $\cH_{\pmb{\lambda} h} \oplus \ldots \oplus \cH_{\pmb{\lambda} h}$,
 where the direct sum is taken over a single value $\pmb{\lambda} \in Q$ possibly with multiplicity
 (that is, $\pmb{\lambda} \in Q$ may be counted with multiplicity). Assume that
 $\<\Psi, \tilde\sigma_{\pmb{\lambda}h}(x)\Phi\> = 0$ for all $x \in \cA_{\frh_p}(I_+\times I_-)$ and $I_+,I_-\in\cI$,
 where $\tilde\sigma_{\pmb{\lambda}h}$ is the direct sum of copies of $\sigma_{\pmb{\lambda}h}$ over the single value $\pmb{\lambda}$ repeated in $Q$
 in the sense above.
 Then $\<\Psi, y\Phi\> = 0$ holds for all $y \in \overline{\bigcup_{I_\pm \in \cI}\tau_{\cB}(\cA_{\frh_p}(I_+\times I_-))}$,
 where the closure is taken in the weak operator topology on $\cH$.

 Let $\phi_1, \phi_2 \in \cB(O)$ with charge $\pmb{\lambda}$ as in the previous paragraph.
 As lightlike translations $U(T_t\times \iota)$ belong to $\overline{\bigcup_{I_\pm \in \cI}\tau_{\cB}(\cA_{\frh_p}(I_+\times I_-))}$
 as $\cB$ is a conformal extension of $\cA_{\frh_p}$ (see the beginning of this Section),
 and
 \begin{align*}
  \phi_1 U(T_t\times \iota)\phi_2^* = U(T_t\times \iota)\cdot U(T_t\times \iota)^*\phi_1 U(T_t\times \iota)\phi_2^*
  \in \overline{\bigcup_{I_\pm \in \cI}\tau_\cB(\cA_{\frh_p}(I_+\times I_-))},
 \end{align*}
 thus $\<\Psi, \phi_1 U(T_t\times \iota)\phi_2^*\Phi\> = 0$ for all $t$.
 By positivity of energy, this extends to the upper half plane, and then by Schwarz reflection principle, it is $0$ for all $t \in \bbR$.
 Apply this also for $U(\iota\times T_t)$, and by the uniqueness of $\Omega$, it must hold that
 $\<\Psi, \phi_1\Omega\>\<\Omega\phi_2^*, \Phi\> = 0$.
 As $\Omega$ is cyclic for $\cB(O)$ and $\phi_1, \phi_2$ were arbitrary operators with charge $\pmb{\lambda}$,
 we conclude that $\Psi = 0$.
\end{proof}

By Lemma \ref{lm:BMT3.1'}, $Q$ is an additive subgroup of $\bbR^N$.
If we assume that it is discrete, then by \cite[(4.2) Proposition]{Neukirch99AlgebraicNumberTheory}, it is a (not necessarily $N$-dimensional) lattice in $\frh$.
Recall that we defined simple (not twisted by the 2-cocycle) shift operators $\underline \psi^{\pmb{\alpha}}$ for any lattice $Q$ in \eqref{eq:simple-shift}.
For any $x \in \cA_{\frh_p}(I_+\times I_-)$, it holds that
$\underline \psi^{\pmb{\alpha}} \cdot \tau_{\cB} (x) \cdot \underline \psi^{\pmb{\alpha}*}
= \tau_{\cB}(\sigma_{-\pmb{\alpha}h}(x)) \in \tau_{\cB}(\cA_{\frh_p}(I_+\times I_-))$.

\begin{lemma}\label{lm:BMT3.1''}
 Let $Q$ be a lattice in $\bbR^N$ and let $\cB$ be a (local) conformal extensions of $\tau_\cB(\cA_{\frh_p})$ on $\cH_Q$ defined in \eqref{eq:H_Q},
 where $\tau_{\cB} = \bigoplus_{\pmb{\lambda} \in Q} \sigma_{p\pmb{\lambda}h}\otimes \sigma_{\bar p\pmb{\lambda}h}$ and $h\in\cC^\infty(S^1)$ has $\supp h \subset I$ as before.
 Then for each $\pmb{\alpha} \in Q$, there is $\psi^{\pmb{\alpha}}_\cB \in \cB(I\times I)$ such that
 $\psi^{\pmb{\alpha}*}_\cB \underline \psi^{\pmb{\alpha}} = \bigoplus_{\pmb{\lambda} \in Q} \epsilon_{\cB}(\pmb{\alpha}, \pmb{\lambda})$
 for some $\epsilon_{\cB}(\pmb{\alpha}, \pmb{\lambda}) \in S^1$.
 Moreover, $Q$ is an even lattice with respect to $(\cdot|\cdot)$.
\end{lemma}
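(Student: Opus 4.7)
By Lemma \ref{lm:BMT3.1'}, for each $\pmb{\alpha}\in Q$ I pick an isometry $\phi^{\pmb{\alpha}}\in\cB(I\times I)$ with $\phi^{\pmb{\alpha}}\Omega\in\cH_{\pmb{\alpha}h}$; by irreducibility of the sector and Reeh--Schlieder applied to $\tau_\cB(\cA_{\frh_p}((I\times I)'))$, $\phi^{\pmb{\alpha}}$ is actually unitary on $\cH_Q$. The central object is $T^{\pmb{\alpha}}:=\phi^{\pmb{\alpha}*}\underline\psi^{\pmb{\alpha}}$. Since $\phi^{\pmb{\alpha}*}$ shifts sectors by $-\pmb{\alpha}$ and $\underline\psi^{\pmb{\alpha}}$ shifts by $+\pmb{\alpha}$, $T^{\pmb{\alpha}}$ preserves every summand $\cH_{\pmb{\mu}h}$. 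Both $\phi^{\pmb{\alpha}}$ and $\underline\psi^{\pmb{\alpha}}$ commute with $\tau_\cB(x)$ for every $x\in\cA_{\frh_p}((I\times I)')$: for $\phi^{\pmb{\alpha}}$ by locality of $\cB$, and for $\underline\psi^{\pmb{\alpha}}$ because the localized automorphisms $\sigma_{\pmb{\mu}h}$ and $\sigma_{(\pmb{\mu}+\pmb{\alpha})h}$ agree on the spacelike complement of $I\times I$ (where $h$ is not supported). Applying Haag duality inside the irreducible sector $\cH_{\pmb{\mu}h}$ places each block $T^{\pmb{\alpha}}|_{\cH_{\pmb{\mu}h}}$ in $\sigma_{\pmb{\mu}h}(\cA_{\frh_p}(I\times I))$, so it is of the form $\sigma_{\pmb{\mu}h}(v_{\pmb{\mu}})$ for a unitary $v_{\pmb{\mu}}\in\cA_{\frh_p}(I\times I)$.

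By Lemma \ref{lm:BMT3.1'} the charge-zero part of $\cB(I\times I)$ is exactly $\tau_\cB(\cA_{\frh_p}(I\times I))$, so the only freedom in $\phi^{\pmb{\alpha}}$ is right-multiplication by $\tau_\cB(u)$ with $u\in\cA_{\frh_p}(I\times I)$ unitary; this replaces each $v_{\pmb{\mu}}$ by $u^*v_{\pmb{\mu}}$. The crucial step is to show that a single $u$ simultaneously trivializes all $v_{\pmb{\mu}}$ to scalars, i.e.\ that $\{v_{\pmb{\mu}}\}_{\pmb{\mu}\in Q}$ is proportional to a common element of $\cA_{\frh_p}(I\times I)$. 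This coherence is forced by $\phi^{\pmb{\alpha}}$ being a \emph{single} operator on $\cH_Q$ (not an independent family of sector intertwiners), together with covariance under the zero-mode subgroup $V(\pmb{s})=\rme^{\rmi\pmb{s}\cdot\frq}$ used in the proof of Lemma \ref{lm:BMT3.1'}. Concretely, I would normalize $\phi^{\pmb{\alpha}}\Omega=\underline\psi^{\pmb{\alpha}}\Omega$, so that Reeh--Schlieder for $\cA_{\frh_p}((I\times I)')$ on $\cH_{\pmb{0}h}$ forces $\phi^{\pmb{\alpha}}|_{\cH_{\pmb{0}h}}=\underline\psi^{\pmb{\alpha}}|_{\cH_{\pmb{0}h}}$, and propagate this normalization to the other sectors. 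Setting $\psi^{\pmb{\alpha}}_\cB$ equal to $\phi^{\pmb{\alpha}}$ modified by the resulting $\tau_\cB(u)$, one obtains $\psi^{\pmb{\alpha}*}_\cB\underline\psi^{\pmb{\alpha}}=\bigoplus_{\pmb{\lambda}\in Q}\epsilon_\cB(\pmb{\alpha},\pmb{\lambda})\bb1$, and $|\epsilon_\cB(\pmb{\alpha},\pmb{\lambda})|=1$ is automatic since both operators in the product are unitary. This coherent normalization is the main obstacle; it is the two-dimensional analogue of the canonical form used in \cite[Proposition 3.1]{BMT88}.

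For evenness of $Q$, I split into integrality and the diagonal condition. Given $\pmb{\alpha},\pmb{\beta}\in Q$ and $\gamma_+\times\gamma_-\in\uDiffSone\times\uDiffSone$ such that $\gamma_+I\times\gamma_-I$ is spacelike separated from $I\times I$, by (2dCN\ref{2dcn:locality}) the operators $\psi^{\pmb{\alpha}}_\cB$ and $\cAd U_Q(\gamma_+\times\gamma_-)(\psi^{\pmb{\beta}}_\cB)$ commute. Using a covariance formula analogous to \eqref{eq:field-cov} for $\psi^{\pmb{\beta}}_\cB$ together with \eqref{eq:braiding-z} and \eqref{eq:braidingNdim}, this commutator identity rewrites as $\psi^{\pmb{\alpha}}_\cB\psi^{\pmb{\beta}}_\cB=\overline{\varepsilon^{\pm}_{\pmb{\alpha}h,\pmb{\beta}h}}\,\psi^{\pmb{\beta}}_\cB\psi^{\pmb{\alpha}}_\cB=\rme^{\mp\rmi\pi(\pmb{\alpha}|\pmb{\beta})}\psi^{\pmb{\beta}}_\cB\psi^{\pmb{\alpha}}_\cB$, with the sign depending on which of $\gamma_+I$ or $I$ lies in the past. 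Since the left-hand side is independent of this orientation, both signs must yield the same phase, so $\rme^{2\rmi\pi(\pmb{\alpha}|\pmb{\beta})}=1$ and $(\pmb{\alpha}|\pmb{\beta})\in\bbZ$. For $(\pmb{\lambda}|\pmb{\lambda})\in 2\bbZ$, I invoke the cylinder axiom (2dCN\ref{2dcn:cylinder}): $U_Q(R_{2\pi}\times R_{-2\pi})=\bb1$, which on $\cH_{\pmb{\lambda}h}$ reads $\rme^{2\pi\rmi(L_0-\bar L_0)}=\bb1$. The lowest eigenvalue of $L_0-\bar L_0$ on this sector equals $\frac{(p\pmb{\lambda},p\pmb{\lambda})_\frh-(\bar p\pmb{\lambda},\bar p\pmb{\lambda})_\frh}{2}=\frac{(\pmb{\lambda}|\pmb{\lambda})}{2}$, which must therefore be an integer.
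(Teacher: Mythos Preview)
Your overall structure matches the paper's, but the step you yourself flag as ``the main obstacle'' is a genuine gap, and neither of the two mechanisms you invoke actually closes it. Covariance under $V(\pmb{s})=\rme^{\rmi\pmb{s}\cdot\frq}$ gives nothing new: since $\phi^{\pmb{\alpha}}$ and $\underline\psi^{\pmb{\alpha}}$ both carry charge $\pmb{\alpha}$, the operator $T^{\pmb{\alpha}}$ automatically commutes with $V(\pmb{s})$, which is just a restatement of the fact that it preserves each sector. And the phrase ``$\phi^{\pmb{\alpha}}$ is a single operator'' does not by itself link the blocks $v_{\pmb{\mu}}$: your Haag duality argument only sees each sector separately. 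After your normalization you have $v_{\pmb 0}=\bb1$, but no argument forces $v_{\pmb{\mu}}\in\bbC\bb1$ for $\pmb{\mu}\neq\pmb 0$.

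The paper's mechanism is different and worth knowing. After normalizing so that $P_{\pmb 0}\psi^{\pmb{\alpha}}_\cB = P_{\pmb 0}\underline\psi^{\pmb{\alpha}}$, one does \emph{not} try to compare the blocks directly. Instead one observes that both $\cAd\psi^{\pmb{\alpha}}_\cB$ and $\cAd\underline\psi^{\pmb{\alpha}}$ map $\tau_\cB(\cA_{\frh_p}(\tilde O))$ into itself for any double cone $\tilde O\supset \overline{I\times I}$; on the vacuum sector $\cH_{\pmb 0 h}$ they agree, and since an element of $\tau_\cB(\cA_{\frh_p}(\tilde O))$ is determined by its restriction to $\cH_{\pmb 0 h}$ (separating property of $\Omega$), the two automorphisms agree on all of $\tau_\cB(\cA_{\frh_p}(\tilde O))$. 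Thus $\psi^{\pmb{\alpha}*}_\cB\underline\psi^{\pmb{\alpha}}$ commutes with $\tau_\cB(\cA_{\frh_p}(\tilde O))$, and combined with relative locality on the spacelike complement it commutes with the whole net, hence is a scalar on each irreducible sector. This ``compare automorphisms, not operators'' trick is the missing idea. A minor point: your claim that $\phi^{\pmb{\alpha}}$ is automatically unitary is not justified (the range projection is in $\tau_\cB(\cA_{\frh_p}(I\times I))$, a type III factor with many nontrivial projections); the paper works throughout with isometries and recovers $|\epsilon_\cB(\pmb{\alpha},\pmb{\lambda})|=1$ at the end.

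Your evenness argument is a valid and slightly different route from the paper's. The paper takes $\pmb{\alpha}=\pmb{\beta}$ in the braiding computation, reads off $\rme^{\rmi\pi(\pmb{\alpha}|\pmb{\alpha})}=1$ directly, and then gets integrality from the polarization identity $(\pmb{\alpha}+\pmb{\beta}|\pmb{\alpha}+\pmb{\beta})-(\pmb{\alpha}|\pmb{\alpha})-(\pmb{\beta}|\pmb{\beta})=2(\pmb{\alpha}|\pmb{\beta})$. Your two-orientation argument for integrality and your use of (2dCN\ref{2dcn:cylinder}) for evenness both work; the latter is in fact how the paper verifies that axiom in Theorem~\ref{th:2dnetQ}, just run in reverse.
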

\begin{proof}
 We take an isometry $\phi^{\pmb{\alpha}} \in \cB(I\times I, \cAd V, \pmb \lambda)$ as in Lemma \ref{lm:BMT3.1'},
 and we use the same notation introduced in its proof for $\sigma_{\pmb{\lambda}h}$ and $\cH_{\pmb{\lambda}h}$.
 By assumption, $\underline \psi^{\pmb{\alpha}} \phi^{\pmb{\alpha}*}$ preserves each $\cH_{\pmb{\lambda}h}$
 and commutes with $\sigma_{\pmb{\lambda}h}(\cA_{\frh_p}(I' \times I')))$,
 therefore, as each $\sigma_{\pmb{\lambda}h}$ is an automorphism, by Haag duality of $\cA_{\frh_p}$,
 it holds that $\underline \psi^{\pmb{\alpha}} \phi^{\pmb{\alpha}*} = \bigoplus_{\pmb{\lambda} \in Q} \sigma_{\pmb{\lambda}h}(W_{\pmb{\lambda}})$
 for some $W_{\pmb{\lambda}} \in \cA_{\frh_p}(I\times I)$ depending on $\pmb{\lambda} \in Q$.

 Let us consider $\pmb{0} \in Q$ and set $\psi^{\pmb{\alpha}}_{\cB} := \tau_{\cB}(W_{\pmb{0}})\phi^{\pmb{\alpha}}$.
 Then, denoted by $P_{\pmb{0}}$ the orthogonal projection onto $\cH_{\pmb{0}h}$, it holds that
 \begin{align*}
  P_{\pmb{0}}\psi^{\pmb{\alpha}}_{\cB}
  &= P_{\pmb{0}} \sigma_{\pmb{0}}(W_{\pmb{0}})\phi^{\pmb{\alpha}} \\
  &= P_{\pmb{0}} \underline \psi^{\pmb{\alpha}}
 \end{align*}
 therefore, for any $x \in \cA_{\frh_p}$ we have
 \begin{align*}
  P_{\pmb{0}}(\psi^{\pmb{\alpha}}_{\cB} \cdot \tau_{\cB}(x) \cdot \psi^{\pmb{\alpha}*}_{\cB})
  &= P_{\pmb{0}}(\underline \psi^{\pmb{\alpha}} \cdot \tau_{\cB}(x) \cdot \underline \psi^{\pmb{\alpha}*}).
 \end{align*}
 Note that, by taking $x \in \cA_{\frh_p}(\tilde O)$, where $\overline{I\times I} \subset \tilde O$,
 we have
 $\psi^{\pmb{\alpha}}_{\cB} \cdot \tau_{\cB}(x) \cdot \psi^{\pmb{\alpha}*}_{\cB},
 \underline \psi^{\pmb{\alpha}} \cdot \tau_{\cB}(x) \cdot \underline \psi^{\pmb{\alpha}*} \in \cA_{\frh_p}(\tilde O)$.
 By
 the separating property of the vacuum, we have that
 $\psi^{\pmb{\alpha}}_{\cB} \cdot \tau_{\cB}(x) \cdot \psi^{\pmb{\alpha}*}_{\cB}
 = \underline \psi^{\pmb{\alpha}} \cdot \tau_{\cB}(x) \cdot \underline \psi^{\pmb{\alpha}*}$,
 or equivalently,
 \begin{align*}
  \tau_{\cB}(x) \cdot \psi^{\pmb{\alpha}*}_{\cB}\underline \psi^{\pmb{\alpha}}
   = \psi^{\pmb{\alpha}*}_{\cB}\underline \psi^{\pmb{\alpha}} \cdot \tau_{\cB}(x).
 \end{align*}

 Moreover, if $x' \in \cA_{\frh_p}(I'\times I')$, then by relative locality and the fact that
 $\tau_{\cB}$ is localized in $I\times I$ (by the choice of localization of the function $h$), we have
 \begin{align*}
  \tau_{\cB}(x') \cdot \psi^{\pmb{\alpha}*}_{\cB}\underline \psi^{\pmb{\alpha}}
   = \psi^{\pmb{\alpha}*}_{\cB}\underline \psi^{\pmb{\alpha}} \cdot \tau_{\cB}(x').
 \end{align*}
 As $\cA_{\frh_p}(\tilde O)$ and $\cA_{\frh_p}(I'\times I')$ generate the whole net $\cA_{\frh_p}$,
 this shows that
 $\psi^{\pmb{\alpha}*}_{\cB}\underline \psi^{\pmb{\alpha}} = \bigoplus_{\pmb{\lambda} \in Q} \epsilon_{\cB}(\pmb{\alpha}, \pmb{\lambda})$
 for some $\epsilon_{\cB}(\pmb{\alpha}, \pmb{\lambda}) \in \bbC$.
 As $\psi^{\pmb{\alpha}}_{\cB}$ is an isometry and $\underline \psi^{\pmb{\alpha}}$ is unitary, we have $|\epsilon_{\cB}(\pmb{\alpha}, \pmb{\lambda})| = 1$,
 thus $\epsilon_{\cB}(\pmb{\alpha}, \pmb{\lambda}) \in S^1$.

 Recall that $\cB$ is local.
 Then, for any $\gamma_+, \gamma_- \in \uDiffSone\times\uDiffSone$ that map $I\times I$ to $I_+\times I_-$,
 where $I_+$ is in the past of $I$ and $I_-$ is in the future of $I$, $\cAd U(\gamma_+\times \gamma_-)(\psi^{\pmb{\alpha}}_{\cB})$ and $\psi^{\pmb{\alpha}}_{\cB}$ must commute. 
 On the other hand, as each $\epsilon_{\cB}(\pmb{\alpha}, \pmb{\lambda})$ clearly commutes with $U(\gamma_+\times \gamma_-)|_{\cH_{\pmb{\lambda}h}}$ for every $\pmb \lambda \in Q$,
 arguing as for \eqref{eq:field-cov}, we get
 \begin{align}
  \psi^{\pmb{\alpha}}_{\cB} \cdot \cAd U_Q(\gamma_+\times \gamma_-)(\psi^{\pmb{\alpha}}_{\cB})
  &= \tau_{\cB}\left(\sigma_{p\pmb{\alpha}h}((z_{p\pmb{\alpha}h}(\gamma_+))\otimes \sigma_{\bar p\pmb{\alpha}h}(z_{\bar p\pmb{\alpha}h}(\gamma_-))\right) \psi^{\pmb{\alpha}}_{\cB}\psi^{\pmb{\alpha}}_{\cB} \nonumber \\
  \cAd U_Q(\gamma_+\times \gamma_-)(\psi^{\pmb{\alpha}}_{\cB}) \cdot \psi^{\pmb{\alpha}}_{\cB}
  &= \tau_{\cB}\left(z_{p\pmb{\alpha}h}(\gamma_+)\otimes z_{\bar p\pmb{\alpha}h}(\gamma_-)\right) \psi^{\pmb{\alpha}}_{\cB} \psi^{\pmb{\alpha}}_{\cB}
  \label{eq:braiding-cB}
 \end{align}
 By the definition of braiding between the automorphisms of $\cA_{\frh_p}$, see \eqref{eq:braiding-z} and \eqref{eq:braidingNdim},
 \begin{align*}
  \tau_{\cB}(z_{p\pmb{\alpha}h}(\gamma_+)\otimes z_{\bar p\pmb{\alpha}h}(\gamma_-))^*
  \tau_{\cB}(\sigma_{p\pmb{\alpha}h}((z_{p\pmb{\alpha}h}(\gamma_+))\otimes \sigma_{\bar p\pmb{\alpha}h}(z_{\bar p\pmb{\alpha}h}(\gamma_-)))
  = \varepsilon_{p\pmb{\alpha}h, p\pmb{\alpha}h}^+ \varepsilon_{\bar p\pmb{\alpha}h, \bar p\pmb{\alpha}h}^- = \rme^{\rmi \pi (\alpha|\alpha)}.
 \end{align*}
 Thus, it must hold that $(\pmb{\alpha}|\pmb{\alpha}) \in 2\bbZ$ to have locality of $\cB$.

 For any pair $\pmb{\alpha}, \pmb{\beta} \in Q$, it holds that
 $(\pmb{\alpha}+\pmb{\beta}|\pmb{\alpha}+\pmb{\beta}) = (\pmb{\alpha}|\pmb{\alpha}) + 2(\pmb{\alpha}|\pmb{\beta}) + (\pmb{\beta}|\pmb{\beta}) \in 2\bbZ$,
 hence $(\pmb{\alpha}|\pmb{\beta}) \in \bbZ$. That is, $Q$ is an even lattice with respect to $(\cdot|\cdot)$.
\end{proof}

\begin{theorem}\label{th:classification}
 Let $\cA_{p\frh}\otimes\cA_{\bar p\frh} \subset \cB$ be a local conformal extension and assume that
 $\cA_{p\frh}\otimes\cA_{\bar p\frh}$ on $\cH$ is a direct sum
 of $\cH_{p\pmb{\lambda}h}\otimes \cH_{\bar p\pmb{\lambda}h}$, for $\pmb{\lambda} \in Q \subset \bbR^N$ possibly counted with multiplicity and corresponding to the representations $\sigma_{p\pmb{\lambda}h}\otimes\sigma_{\bar p\pmb{\lambda}h}$.
 Assume that  the set $Q$ is discrete in $\bbR^N$.
 Then each $\pmb{\lambda}$ appears with multiplicity 1, $Q$ is a (not necessarily $N$-dimensional) even lattice in $\bbR^N$ with respect to $(\cdot|\cdot)$ defined in Section \ref{sec:2d-twist}, and $\cB$ is unitarily equivalent to the two-dimensional lattice conformal net $\cA_Q$ constructed in Section \ref{2dlatticenet}.
\end{theorem}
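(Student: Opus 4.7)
The structural information from Lemmas \ref{lm:BMT3.1'} and \ref{lm:BMT3.1''} is almost sufficient: $Q$ is an even lattice in $\bbR^N$ with respect to $(\cdot|\cdot)$, each charge $\pmb{\lambda}\in Q$ appears with multiplicity one, and for every $\pmb{\alpha}\in Q$ there is a charge-carrying isometry $\psi^{\pmb{\alpha}}_{\cB}\in\cB(I\times I)$ which acts on each block $\cH_{\pmb{\lambda}h}$ as a scalar in $S^1$ times $\underline{\psi}^{\pmb{\alpha}}$. The hypotheses already identify $\cH$ with $\cH_Q$ as a Hilbert space direct sum and $\tau_{\cB}$ with $\tau_Q$ as a representation of $\cA_{\frh_p}$; the remaining task is to rescale the $\psi^{\pmb{\alpha}}_{\cB}$ to match the defining relations of the $\psi^{\pmb{\alpha}}$ in $\cA_Q$ from Section \ref{2dlatticenet}, and to promote this identification to a unitary equivalence of local nets.

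The plan is to define $\omega_{\cB}(\pmb{\alpha},\pmb{\beta})\in S^1$ via $\psi^{\pmb{\alpha}}_{\cB}\psi^{\pmb{\beta}}_{\cB}\Omega = \omega_{\cB}(\pmb{\alpha},\pmb{\beta})\,\psi^{\pmb{\alpha}+\pmb{\beta}}_{\cB}\Omega$, so that $\omega_{\cB}$ is automatically a scalar 2-cocycle on $Q$ by associativity of operator multiplication. To extract a symmetry constraint, I would adapt the locality computation at the end of the proof of Lemma \ref{lm:BMT3.1''} to the pair $\psi^{\pmb{\alpha}}_{\cB}$ and $\cAd U_{\cB}(\gamma_+\times\gamma_-)(\psi^{\pmb{\beta}}_{\cB})$, with $\gamma_\pm$ moving $I\times I$ to a spacelike separated double cone; using the chiral braiding identities \eqref{eq:braiding-z} and \eqref{eq:braidingNdim}, this yields $\psi^{\pmb{\alpha}}_{\cB}\psi^{\pmb{\beta}}_{\cB} = (-1)^{(\pmb{\alpha}|\pmb{\beta})}\psi^{\pmb{\beta}}_{\cB}\psi^{\pmb{\alpha}}_{\cB}$, hence $\omega_{\cB}(\pmb{\alpha},\pmb{\beta})/\omega_{\cB}(\pmb{\beta},\pmb{\alpha}) = (-1)^{(\pmb{\alpha}|\pmb{\beta})} = \epsilon(\pmb{\alpha},\pmb{\beta})/\epsilon(\pmb{\beta},\pmb{\alpha})$ by \eqref{eq:shift-commut}. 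Therefore $\omega_{\cB}\,\epsilon^{-1}$ is a \emph{symmetric} $S^1$-valued 2-cocycle on the free abelian group $Q$, and the classical fact that symmetric 2-cocycles on free abelian groups with values in a divisible group are coboundaries yields a 1-cochain $\eta:Q\to S^1$ with $\eta(\pmb{0})=1$ and $\omega_{\cB}(\pmb{\alpha},\pmb{\beta}) = \epsilon(\pmb{\alpha},\pmb{\beta})\,\eta(\pmb{\alpha})\,\eta(\pmb{\beta})\,\eta(\pmb{\alpha}+\pmb{\beta})^{-1}$. Setting $\tilde\psi^{\pmb{\alpha}}_{\cB} := \eta(\pmb{\alpha})^{-1}\psi^{\pmb{\alpha}}_{\cB}\in\cB(I\times I)$ gives $\tilde\psi^{\pmb{\alpha}}_{\cB}\tilde\psi^{\pmb{\beta}}_{\cB}\Omega = \epsilon(\pmb{\alpha},\pmb{\beta})\,\tilde\psi^{\pmb{\alpha}+\pmb{\beta}}_{\cB}\Omega$, which extends to the operator equality $\tilde\psi^{\pmb{\alpha}}_{\cB}\tilde\psi^{\pmb{\beta}}_{\cB} = \epsilon(\pmb{\alpha},\pmb{\beta})\,\tilde\psi^{\pmb{\alpha}+\pmb{\beta}}_{\cB}$ on all of $\cH$: agreement on $\Omega$ combined with the 2-cocycle identity satisfied by $\epsilon$ itself propagates to the dense subspace $\bigoplus_{\pmb{\gamma}\in Q}\tau_{\cB}(\cA_{\frh_p})\tilde\psi^{\pmb{\gamma}}_{\cB}\Omega$, where $\Omega$ is separating for $\cB(I\times I)$.

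To build the unitary equivalence, I would define $V:\cH\to\cH_Q$ on the dense cyclic subspace $\mathrm{span}\{\tau_{\cB}(x)\tilde\psi^{\pmb{\lambda}}_{\cB}\Omega : x\in\cA_{\frh_p}(I_+\times I_-),\,\pmb{\lambda}\in Q\}$ by $V\tau_{\cB}(x)\tilde\psi^{\pmb{\lambda}}_{\cB}\Omega := \tau_Q(x)\psi^{\pmb{\lambda}}\Omega_Q$. Well-definedness and isometry follow from Reeh--Schlieder on each sector: both sides carry the same irreducible representation $\sigma_{p\pmb{\lambda}h}\otimes\sigma_{\bar p\pmb{\lambda}h}$ with cyclic vector $\tilde\psi^{\pmb{\lambda}}_{\cB}\Omega$ on one side and $\psi^{\pmb{\lambda}}\Omega_Q$ on the other. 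The closure $V$ intertwines $\Omega\leftrightarrow\Omega_Q$, $\tau_{\cB}\leftrightarrow\tau_Q$, and $\tilde\psi^{\pmb{\alpha}}_{\cB}\leftrightarrow\psi^{\pmb{\alpha}}$. Since by the Arveson spectral decomposition of Lemma \ref{lm:BMT3.1'} each element of $\cB(I\times I)$ decomposes into charged components of the form $\tau_{\cB}(x)\tilde\psi^{\pmb{\alpha}}_{\cB}$ (together with their adjoints), we obtain $\cAd V(\cB(I\times I)) = \cA_Q(I\times I)$. Diffeomorphism covariance of both nets, together with the fact that on each sector the covariance representation $U$ is uniquely fixed (up to a scalar normalized by the vacuum) by the Sugawara construction applied to the currents in $\tau$, extends the equality to all double cones and yields $\cAd V(U_{\cB}) = U_Q$.

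The cohomological step is the conceptual crux: defining $\omega_{\cB}$ through vacuum matrix elements is essential since it makes the ``cocycle'' a scalar depending only on $(\pmb{\alpha},\pmb{\beta})$, whereas the \textit{a priori} $\pmb{\lambda}$-dependent ratios $c_{\pmb{\alpha},\pmb{\lambda}+\pmb{\beta}}c_{\pmb{\beta},\pmb{\lambda}}/c_{\pmb{\alpha}+\pmb{\beta},\pmb{\lambda}}$ cannot be normalized away by scalar rescaling alone. Combining this with the locality-braiding computation to reduce $\omega_{\cB}\epsilon^{-1}$ to a symmetric cocycle, and invoking the vanishing of the symmetric part of $H^2(Q,S^1)$ for free abelian $Q$, is the substantive content of the argument; once this rescaling is in place, the construction of $V$ and the verification of all axioms proceed along standard lines.
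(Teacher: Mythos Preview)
Your proposal is correct and follows essentially the same route as the paper. Both arguments use Lemmas \ref{lm:BMT3.1'} and \ref{lm:BMT3.1''} to reduce to an even lattice with multiplicity one, extract a scalar $2$-cocycle governing the products $\psi^{\pmb{\alpha}}_{\cB}\psi^{\pmb{\beta}}_{\cB}$, use the locality--braiding computation to show its ratio with $\epsilon$ is symmetric, and invoke that symmetric $S^1$-valued $2$-cocycles on a free abelian group are coboundaries (the paper cites \cite[Lemma 3.4.2]{Baumgaertel95}) to rescale to operators satisfying the $\epsilon$-relations exactly.

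Two minor remarks. First, your step from the vacuum identity $\tilde\psi^{\pmb{\alpha}}_{\cB}\tilde\psi^{\pmb{\beta}}_{\cB}\Omega = \epsilon(\pmb{\alpha},\pmb{\beta})\tilde\psi^{\pmb{\alpha}+\pmb{\beta}}_{\cB}\Omega$ to the operator identity needs no propagation argument: both sides lie in $\cB(I\times I)$ and $\Omega$ is separating, so equality is immediate. The paper phrases this the same way (it first writes the product as a block-diagonal scalar times $\psi^{\pmb{\alpha}+\pmb{\beta}}_{\cB}$ and then uses separating to see the scalar is independent of the block). Second, your justification of isometry of $V$ via ``Reeh--Schlieder on each sector'' is not quite the right reason: two unit cyclic vectors in the same irreducible representation need not give matching matrix elements. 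What actually makes $V$ an isometry is that, by the normalization in Lemma \ref{lm:BMT3.1''}, both $\tilde\psi^{\pmb{\lambda}}_{\cB}\Omega$ and $\psi^{\pmb{\lambda}}\Omega_Q$ are unimodular scalar multiples of the \emph{same} vector $\underline\psi^{\pmb{\lambda}}\Omega$. The paper exploits this directly by writing $V$ as an explicit diagonal phase operator $\bigoplus_{\pmb{\lambda}} \epsilon'_\cB(\pmb{\lambda},\pmb{\lambda})$, which is a cleaner packaging of the same content.
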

\begin{proof}
 By assumption $\tau_{\cB} = \bigoplus_{\pmb{\lambda} \in Q} \sigma_{\pmb{\lambda}h}$ and $Q$ (disregarding multiplicities) is a discrete subset of $\bbR^N$ containing $\pmb 0$. By Lemma \ref{lm:BMT3.1'} and Lemma \ref{lm:BMT3.1''}, $Q$ is an even lattice and the multiplicity of each $\pmb \lambda \in Q$ in the decomposition of $\tau_\cB$ is 1. Recall the short-hand notation introduced earlier for $\sigma_{\pmb{\lambda}h}$ and $\cH_{\pmb{\lambda}h}$.
 To show that $\cB$ is unitarily equivalent to $\cA_Q$,
 for each $\pmb{\alpha} \in Q$ we take $\psi^{\pmb{\alpha}}_{\cB}$ as in Lemma \ref{lm:BMT3.1''},
 such that $\psi^{\pmb{\alpha}*}_\cB \underline \psi^{\pmb{\alpha}} = \bigoplus_{\pmb{\lambda} \in Q} \epsilon_{\cB}(\pmb{\alpha}, \pmb{\lambda})$
 and $\epsilon_{\cB}(\pmb{\alpha}, \pmb{\lambda}) \in S^1$.

 From this it follows that,
 for any $x \in \cA_{\frh_p}(O)$, $O$ a double cone in $\bbR^{1+1}$, we have
 $\psi^{\pmb{\alpha}}_\cB \tau_{\cB} (x) \psi^{\pmb{\alpha}*}_\cB
 = \tau_{\cB}(\sigma_{-\pmb{\alpha}h}(x)) \in \tau_{\cB}(\cA_{\frh_p}(O))$,
 just as with $\underline \psi^{\pmb{\alpha}}$.
 Then it follows that $\psi^{\pmb{\alpha}}_\cB \psi^{\pmb{\beta}}_\cB = Z(\pmb{\alpha}, \pmb{\beta})\psi^{\pmb{\alpha} + \pmb{\beta}}_\cB$,
 and \textit{a priori} $Z(\pmb{\alpha}, \pmb{\beta}) = \bigoplus_{\pmb{\gamma} \in Q} Z(\pmb{\alpha}, \pmb{\beta}, \pmb{\gamma})$
 with $Z(\pmb{\alpha}, \pmb{\beta}, \pmb{\gamma}) \in \bbC$, but by the separating property of the vacuum,
 actually $Z(\pmb{\alpha}, \pmb{\beta}, \pmb{\gamma})$ does not depend on $\pmb{\gamma}$.
 Moreover, $Z(\pmb{\alpha}, \pmb{\beta})$ satisfies the 2-cocycle equation.
 By repeating the calculation of \eqref{eq:braiding-cB} for possibly different $\pmb{\alpha}, \pmb{\beta}$,
 and imposing locality, we obtain
 $Z(\pmb{\alpha}, \pmb{\beta})Z(\pmb{\beta}, \pmb{\alpha})^{-1} = \rme^{\rmi(\pmb{\alpha}|\pmb{\beta})}$.

 Recall that, on the same Hilbert space $\cH_Q$, we defined in \eqref{eq:shift-field} the twisted shift operators $\psi^{\pmb{\alpha}}$, $\pmb \alpha \in Q$,
 which satisfy $\psi^{\pmb{\alpha}}\psi^{\pmb{\beta}} = \epsilon(\pmb{\alpha}, \pmb{\beta})\psi^{\pmb{\alpha} + \pmb{\beta}}$, and we also have
 $\epsilon(\pmb{\alpha}, \pmb{\beta})\epsilon(\pmb{\beta}, \pmb{\alpha})^{-1} = \rme^{\rmi(\pmb{\alpha}|\pmb{\beta})}$,
 see below \eqref{eq:twococycle}.
 
 The quotient of 2-cocycles is again a 2-cocycle, and $\epsilon(\pmb{\alpha}, \pmb{\beta})Z(\pmb{\alpha}, \pmb{\beta})^{-1}$
 satisfies
 \begin{align*}
  \epsilon(\pmb{\alpha}, \pmb{\beta})Z(\pmb{\alpha}, \pmb{\beta})^{-1}(\epsilon(\pmb{\beta}, \pmb{\alpha})Z(\pmb{\beta}, \pmb{\alpha})^{-1})^{-1}
  = 1.
 \end{align*}
 By \cite[Lemma 3.4.2]{Baumgaertel95}, it is 2-coboundary:
 $\epsilon(\pmb{\alpha}, \pmb{\beta})Z(\pmb{\alpha}, \pmb{\beta})^{-1} = \chi(\pmb{\alpha})\chi(\pmb{\beta})\chi(\pmb{\alpha}+\pmb{\beta})^{-1}$ for some function $\chi: Q\to S^1$.
 This means that $\psi^{\prime\pmb{\alpha}}_{\cB} := \chi(\pmb{\alpha})\psi^{\pmb{\alpha}}_{\cB} \in \cB(I\times I)$ satisfy
 \begin{align}\label{eq:prime-cocycle}
  \psi^{\prime\pmb{\alpha}}_{\cB}\psi^{\prime\pmb{\beta}}_{\cB} = \epsilon(\pmb{\alpha}, \pmb{\beta})\psi^{\prime\pmb{\alpha}+\pmb{\beta}}_{\cB}.
 \end{align}
 Note that $\psi^{\prime\pmb{0}}_{\cB} = \bb1 = \psi^{\pmb{0}}$.
 Furthermore, by \eqref{eq:prime-cocycle},
 \begin{align}\label{eq:psi-recursive}
  \psi^{\prime\pmb{\alpha}}_{\cB} = \epsilon(\pmb{\alpha}, \pmb{\beta})\psi^{\prime\pmb{\alpha}+\pmb{\lambda}}_{\cB}(\psi^{\prime\pmb{\lambda}}_{\cB})^*,
 \end{align}
 therefore,
 $\psi^{\prime\pmb{\alpha}}_{\cB}$ is completely determined by
 $\{\psi^{\prime\pmb{\lambda}}_{\cB}|_{\cH_{\pmb{0}h}}\}_{\pmb{\lambda} \in Q}$ and $\epsilon$.
 
 Now we can write $\psi^{\prime\pmb{\alpha}}_{\cB} = \bigoplus_{\pmb{\lambda} \in Q} \epsilon'_{\cB}(\pmb{\alpha}, \pmb{\lambda}) \psi^{\pmb{\alpha}}$
 for some $\epsilon'_{\cB}(\pmb{\alpha} \in \bbC$.
 It follows that  $\epsilon'_{\cB}(\pmb{0}, \pmb{0}) = 1$.
 Therefore, with the unitary $V := \bigoplus_{\pmb{\lambda}} \epsilon'_{\cB}(\pmb{\lambda}, \pmb{\lambda})$, it holds that
 \begin{align*}
  \cAd V(\psi^{\pmb{\lambda}})|_{\cH_{\pmb{0}h}}
  = \epsilon'_{\cB}(\pmb{\lambda}, \pmb{\lambda})\overline{\epsilon'_{\cB}(\pmb{0}, \pmb{0})}\psi^{\pmb{\lambda}}|_{\cH_{\pmb{0}h}}
  = \epsilon'_{\cB}(\pmb{\lambda}, \pmb{\lambda})\psi^{\pmb{\lambda}}|_{\cH_{\pmb{0}h}}
  = \psi^{\prime\pmb{\lambda}}_{\cB}|_{\cH_{\pmb{0}h}}
 \end{align*}
 As $\cAd V(\psi^{\pmb{\alpha}})|_{\cH_{\pmb{0}h}}$ has the same cocycle $\epsilon$,
 it follows from the remark below \eqref{eq:psi-recursive} that $\cAd V(\psi^{\pmb{\alpha}}) = \psi^{\prime\pmb{\alpha}}_{\cB}$.
 As $\cAd V$ commutes with $\tau_\cB(x) = \tau_Q(x)$ for any $x \in \cA_{\frh_p}(O)$,
 we conclude that two extensions $\cA_{\frh_p}\subset \cB$ and $\cA_{\frh_p}\subset \cA_Q$ are unitarily equivalent.
\end{proof}

\section{Two-dimensional Wightman fields}\label{wightman}
Let $\frh, p, Q, \epsilon$ as in Section \ref{sec:2d}.
Using the charge structures studied for conformal nets, we construct Wightman fields on the Hilbert space $\cH_Q$.
For this purpose, it is more convenient to take
\begin{align}\label{eq:K_Q}
 \cK_Q := \bigoplus_{\pmb{\lambda} \in Q} \cK_{p\pmb{\lambda}} \otimes \cK_{\bar p\pmb{\lambda}},
\end{align}
where $\cK_{p\pmb{\lambda}}, \cK_{\bar p\pmb{\lambda}}$
are the completion of $M_{p\frh}(1, p\pmb{\lambda}), M_{\bar p\frh}(1, \bar p\pmb{\lambda})$
including the actions of the operators $p\pmb{\alpha}(m), \bar p\pmb{\alpha}(m)$,
thus $\cK_{p\pmb{\lambda}} = \cH_{\frac{p\pmb{\lambda}}{2\pi}}$ (the constant function $\frac{p\pmb{\lambda}}{2\pi}$), see Section \ref{sec:Heisenberg-net}.
As we remarked there, 
$\cH_{p\pmb{\lambda}h} \cong \cH_{\frac{p\pmb{\lambda}}{2\pi}} = \cK_{p\pmb{\lambda}}$
as representations of $\cA_{p\frh}$, and similar for $\cA_{\bar p\frh}$.
In this sense, there is a natural unitary equivalence
\begin{align*}
 \cH_Q = \bigoplus_{\pmb{\lambda} \in Q} \cH_{p\pmb{\lambda}} \otimes \cH_{\bar p\pmb{\lambda}}
 \cong \bigoplus_{\pmb{\lambda} \in Q} \cK_{p\pmb{\lambda}} \otimes \cK_{\bar p\pmb{\lambda}} =\cK_Q
\end{align*}
intertwining the actions of $\cA_{\frh_p}$.

We use the following convention of formal series of vectors or operators, \cf \cite[Section 5.2]{AGT23Pointed}, \cite[Section 2.1]{AMT24OS}.
Let $z, \bar z$ be two independent formal variables\footnote{$\bar z$ is not the complex conjugate of $z$, although they
are natural notations when one considers correlation functions, see \cite{AMT24OS}.}.
For a family $\{A_{r,s}\}_{{r,s}\in\bbR}$ of vectors in $W$ (we often take $W = \End(V)$ for some vector space $V$),
we write $A(\underline z) = \sum_{r,s} A_{s,r} z^{-r-1} \bar z^{-s-1}$.
If $A_{r,s} = 0$ whenever $r-s \notin \bbZ$, we write $\sum_{r,s} A_{r,s} z^{-r-1} \bar z^{-s-1}$.
The elements $A_{r,s}$ are referred to as \textbf{Fourier components}.

\subsection{Vertex operators}\label{sec:vertex}

On each $M_{\frh}(1, \pmb{\lambda})$ of Section \ref{sec:Heisenberg-algebra},
we use the same symbol for the operators $\pmb \alpha(n)$, irrespectively of $\pmb \lambda$.
We define formal series (with a slight abuse of notations ``$\pmb{\alpha}(\cdot)$'')
\begin{align*}
 \pmb{\alpha}(z) &:= \sum_{n \in \bbZ} \pmb{\alpha}(n)z^{-n-1}, \\
 E^+(\pmb{\alpha}, z) &:= \exp\left(-\sum_{n \in \bbZ, n \ge 1} \frac{\pmb{\alpha}(n)}{n}z^{-n} \right), \\
 E^-(\pmb{\alpha}, z) &:= \exp\left(\sum_{n \in \bbZ, n \ge 1} \frac{\pmb{\alpha}(-n)}{n}z^{n} \right).
\end{align*}

By \cite[Chapter VI (1.2.2)]{Toledano-LaredoThesis}, it holds that
 \begin{align}\label{eq:comm-E}
  \left(1-\frac{z}w\right)^{-(\pmb{\alpha}, \pmb{\beta})_{\frh}}E^+(\pmb{\alpha}, w)E^-(\pmb{\beta}, z)
  &= E^-(\pmb{\beta}, z)E^+(\pmb{\alpha}, w),
 \end{align}
where $(1-u)^a = \sum_{n\in\bbZ, n \ge 0}\binom{a}{n}(-u)^n$ by definition, giving only integer powers of $u$.

Let $\widehat{\frh_p}$ be the infinite-dimensional Lie algebra generated by
$p\pmb{\alpha}(m), \bar p \pmb{\alpha}(n)$ and a central element $K$, where $\pmb{\alpha} \in \frh$, with the following commutation relations
\begin{align*}
 [p\pmb{\alpha}_1(m), p\pmb{\alpha}_2(n)] &= \delta_{m,-n}m\cdot (p\pmb{\alpha}_1,p\pmb{\alpha}_2)_\frh K, \\
 [\bar p\pmb{\alpha}_1(m), \bar p\pmb{\alpha}_2(n)] &= \delta_{m,-n}m\cdot (\bar p\pmb{\alpha}_1,\bar p\pmb{\alpha}_2)_\frh K, \\
 [p\pmb{\alpha}_1(m), \bar p\pmb{\alpha}_2(n)] &= 0.
\end{align*}

There are modules of this algebra parametrized by $\pmb{\lambda} = p\pmb{\lambda}\oplus \bar p\pmb{\lambda}$, where $\pmb{\lambda} \in \frh$ again.
Indeed, for each $\pmb{\lambda} \in \frh$, we denote $M_{\frh_p}(1, \pmb{\lambda}) = U(\frh_p)/J_{\pmb{\lambda}}$,
where $J_{\pmb{\lambda}}$ is the ideal generated by $p\pmb{\alpha}(m), \bar p\pmb{\alpha}(n), m, n > 0, p\pmb{\alpha}(0)-(p\pmb{\lambda}, p\pmb{\alpha})_\frh,
\bar p\pmb{\alpha}(0) - (\bar p\pmb{\lambda}, \bar p\pmb{\alpha})_\frh, K-1$.
As $\frh_p = p\frh \oplus \bar p\frh$, there is a natural isomorphism
$M_{\frh_p}(1, \pmb{\lambda}) \cong M_{p\frh}(1, p\pmb{\lambda}) \otimes M_{\bar p\frh}(1, \bar p\pmb{\lambda})$
see Section \ref{sec:Heisenberg-algebra}.

On each $M_{\frh_p}(1, \pmb{\lambda})$, we define formal series with two variables $z, \bar z$ by
\begin{align}
 \pmb{\alpha}(\underline z) &= \sum_{z \in \bbZ} (p\pmb{\alpha}(n)z^{-n-1} + \bar p \pmb{\alpha}(n)\bar z^{-n-1}), \nonumber \\
 E^+(\pmb{\alpha}, \underline z) &= \exp\left(-\sum_{z \ge 1} \left(\frac{p\pmb{\alpha}(n)}{n}z^{-n} + \frac{\bar p \pmb{\alpha}(n)}{n}\bar z^{-n}\right)\right), \label{eq:def-E} \\
 E^-(\pmb{\alpha}, \underline z) &= \exp\left(\sum_{z \ge 1} \left(\frac{p\pmb{\alpha}(-n)}{n}z^{n} + \frac{\bar p \pmb{\alpha}(-n)}{n}\bar z^{n}\right)\right).
 \nonumber
\end{align}
With respect to the tensor product structure $M_{\frh_p}(1, \pmb{\lambda}) \cong M_{p\frh}(1, p\pmb{\lambda}) \otimes M_{\bar p\frh}(1, \bar p\pmb{\lambda})$,
we can write
\begin{align}
 \pmb{\alpha}(\underline z) &= p\pmb{\alpha}(z)\otimes \bb1 + \bb1 \otimes \bar p\pmb{\alpha}(\bar z), \nonumber \\
 E^+(\pmb{\alpha}, \underline z) &= E^+(p\pmb{\alpha}, z) \otimes E^+(\bar p\pmb{\alpha}, \bar z), \label{eq:decomp-E} \\
 E^-(\pmb{\alpha}, \underline z) &= E^-(p\pmb{\alpha}, z) \otimes E^-(\bar p\pmb{\alpha}, \bar z). \nonumber\
\end{align}
With respect to this decomposition,
we can construct a pair of the Virasoro algebra representations by the Sugawara formula,
which we denote $L_0\otimes \bb1, \bb1\otimes L_0$, respectively (where we omitted the dependence on $\pmb{\lambda}$).

Let us write
$\underline z^{\pmb{\alpha}(0)} := z^{p\pmb{\alpha}(0)}\bar z^{\bar p\pmb{\alpha}(0)}$,
where $p\pmb{\alpha}(0)|_{\cK_{p\pmb{\lambda}}} = (p\pmb{\alpha}, p\pmb{\lambda})_{\frh}$ on each $\pmb{\lambda} \in Q$.
We introduce the pre-vertex operators as in \cite[Section 6.4]{AGT23Pointed}:
\begin{align}
 \underline{Y_{p\pmb{\alpha}}}(z) &= E^-(p\pmb{\alpha}, z) E^+(p\pmb{\alpha}, z) z^{p\pmb{\alpha}(0)} \nonumber \\
 \underline{Y_{\bar p\pmb{\alpha}}}(\bar z) &= E^-(\bar p\pmb{\alpha}, \bar z) E^+(\bar p\pmb{\alpha}, \bar z) \bar z^{\bar p\pmb{\alpha}(0)} \nonumber \\
 \underline{Y_{\pmb{\alpha}}}(\underline z) &= E^-(\pmb{\alpha}, \underline z) E^+(\pmb{\alpha}, \underline z) \underline z^{p\pmb{\alpha}(0)}
 \cong \underline{Y_{p\pmb{\alpha}}}(z)\otimes \underline{Y_{\bar p\pmb{\alpha}}}(\bar z) \label{eq:decomp-Y},
\end{align}
where the last tensor product refers again to
$M_{\frh_p}(1, \pmb{\lambda}) \cong M_{p\frh}(1, p\pmb{\lambda}) \otimes M_{\bar p\frh}(1, \bar p\pmb{\lambda})$.

\subsection{Formal Wightman fields}
On $\Psi = (\Psi_{\pmb{\lambda}})_{\pmb{\lambda} \in Q} \in \cK_Q$ (see \eqref{eq:K_Q}),
we take the twisted shift operator $c_{\pmb{\alpha}}$ defined by
\begin{align}\label{eq:def-c}
 (c_{\pmb{\alpha}} \Psi)_{\pmb{\alpha} + \pmb{\lambda}} = \epsilon(\pmb{\alpha}, \pmb{\lambda})\Psi_{\pmb{\lambda}}
\end{align}
(we use a notation different from \eqref{eq:shift-field}, in order to distinguish the specific equivalence classes $\cK_{\pmb{\lambda}}$
from $\cH_{\pmb{\lambda}h}$).
Then it holds that (\cf \cite[Section 6.4]{AGT23Pointed})
\begin{align}\label{eq:comm-c-z}
 c_{\pmb{\alpha}} c_{\pmb{\beta}} &= (-1)^{(\pmb{\alpha}|\pmb{\beta})}c_{\pmb{\beta}} c_{\pmb{\alpha}}, \nonumber \\
 c_{\pmb{\alpha}} \underline z^{\pmb{\beta}(0)} &= z^{-(p\pmb{\alpha}, p\pmb{\beta})_\frh} \bar z^{-(\bar p\pmb{\alpha}, \bar p\pmb{\beta})_\frh}
 \underline z^{\pmb{\beta}(0)}c_{\pmb{\alpha}}.
\end{align}

The formal series for our Wightman fields are given by
\begin{align}\label{eq:def-wightman}
 Y_{\pmb{\alpha}}(\underline z) := c_{\pmb{\alpha}}\underline{Y_{\pmb{\alpha}}}(\underline z).
\end{align}
There are only countably many non-zero Fourier components $Y_{\pmb{\alpha}, r, s}$ of
$Y_{\pmb{\alpha}}(\underline z) = \sum_{r,s} Y_{\pmb{\alpha}, r, s} z^{-r-1}\bar z^{-s-1}$
and when restricted to $M_{\frh_p}(1, \pmb{\lambda})$, its image is $M_{\frh_p}(1, \pmb{\alpha} + \pmb{\lambda})$
and only the components $Y_{\pmb{\alpha}, r, s}$ such that
$r + \frac{(p\pmb{\alpha}, p\pmb{\lambda})_{\frh}}2, s + \frac{(\bar p\pmb{\alpha}, \bar p\pmb{\lambda})_{\frh}}2 \in \bbZ$
are non-zero.

Compared with \cite[Section 6.4]{AGT23Pointed},
our $Y_{\pmb{\alpha}}$ has more tensor components in each chiral and antichiral components,
and $c_{\pmb{\alpha}}$ is twisted by the cocycle $\epsilon$.
Note that the cocycle $\epsilon$ takes values in $\{-1, 1\}$,
and we have the action of the pair of the Virasoro algebras $L_m\otimes \bb1, \bb1\otimes L_m$ on $\cK_Q$,
thus the commutation relations remain the same (\cf \cite[below (6.5)]{AGT23Pointed}):
\begin{align*}
 \pmb{\alpha}(m)c_{\pmb{\beta}} &= c_{\pmb{\beta}}(\pmb{\alpha}(m) + (\pmb{\alpha}, \pmb{\beta})_{\frh}\delta_{m,0}), \\
 L_m\otimes \bb1 \cdot c_{\pmb{\alpha}} &= c_{\pmb{\alpha}}(L_m\otimes \bb1 + p\pmb{\alpha}(m)) + \frac{(p\pmb{\alpha}, p\pmb{\alpha})_{\frh}}2\delta_{m,0}, \\
 \bb1\otimes L_m \cdot c_{\pmb{\alpha}} &= c_{\pmb{\alpha}}(\bb1\otimes L_m + \bar p\pmb{\alpha}(m)) + \frac{(\bar p\pmb{\alpha}, \bar p\pmb{\alpha})_{\frh}}2\delta_{m,0}.
\end{align*}
From this, it is analogous to \cite[Section 6.4]{AGT23Pointed} to show that $Y_{\pmb{\alpha}}$ are primary
with the scaling dimensions $(\frac{(p\pmb{\alpha}, p\pmb{\alpha})_{\frh}}2, \frac{(\bar p\pmb{\alpha}, \bar p\pmb{\alpha})_{\frh}}2)$:
\begin{align}
 [L_m\otimes \bb1, Y_{\pmb{\alpha}}(\underline z)] = \partial_z Y_{\pmb{\alpha}}(\underline z)z^{m+1} + \frac{(p\pmb{\alpha}, p\pmb{\alpha})_{\frh}}2
 (m+1)Y_{\pmb{\alpha}}(\underline z), \label{eq:primary+}\\
 [\bb1\otimes L_m, Y_{\pmb{\alpha}}(\underline z)] = \partial_{\bar z} Y_{\pmb{\alpha}}(\underline z)\bar z^{m+1}
 + \frac{(\bar p\pmb{\alpha}, \bar p\pmb{\alpha})_{\frh}}2 (m+1)Y_{\pmb{\alpha}}(\underline z). \label{eq:primary-}
\end{align}

Using \eqref{eq:def-wightman}\eqref{eq:decomp-Y}\eqref{eq:decomp-E}\eqref{eq:comm-E}\eqref{eq:comm-c-z},
we have (\cf \cite[(6.4)]{AGT23Pointed})
\begin{align}\label{eq:comm-Y}
 &\left(1-\frac{z}w\right)^{-(p\pmb{\alpha}, p\pmb{\beta})_{\frh}}
 \left(1-\frac{\bar z}{\bar w}\right)^{-(\bar p\pmb{\alpha}, \bar p\pmb{\beta})_{\frh}}
 w^{-(p\pmb{\alpha}, p\pmb{\beta})_{\frh}} \bar w^{-(\bar p\pmb{\alpha}, \bar p\pmb{\beta})_{\frh}}
 Y_{\pmb{\alpha}}(\underline w)Y_{\pmb{\beta}}(\underline z) \nonumber \\
 &= (-1)^{(\pmb{\alpha}|\pmb{\beta})}\cdot \left(1-\frac w{z}\right)^{-(p\pmb{\alpha}, p\pmb{\beta})_{\frh}}
 \left(1-\frac{\bar w}{\bar z}\right)^{-(\bar p\pmb{\alpha}, \bar p\pmb{\beta})_{\frh}}
 z^{-(p\pmb{\alpha}, p\pmb{\beta})_{\frh}} \bar z^{-(\bar p\pmb{\alpha}, \bar p\pmb{\beta})_{\frh}}
 Y_{\pmb{\beta}}(\underline z) Y_{\pmb{\alpha}}(\underline w)
\end{align}

\subsection{Wightman fields}
We say that a formal series $A(z)$ or $A(\bar z)$ with coefficients in $\End(M_{p\frh}(1, p\pmb{\lambda}))$ or $\End(M_{\bar p\frh}(1, \bar p\pmb{\lambda}))$
(respectively $B(\underline z)$ with coefficients in
$\End(\bigoplus_{\pmb{\lambda} \in Q} M_{p \frh}(1, p\pmb{\lambda})\otimes M_{\bar p\frh}(1, \bar p\pmb{\lambda}))$)
satisfies polynomial energy bounds
if $\|A_s\Psi\| \le C(|s|+1)^p \|(L_0 + \bb1)^q \Psi\|$ (respectively $\|B_{s,t}\Psi\| \le C(|s| + |t| + 1)^p \|(L_0\otimes \bb1 + \bb1\otimes L_0)^q \Psi\|)$
for some $C, p, q > 0$.

By \cite[Proposition VI.1.2.1]{Toledano-LaredoThesis} \cite[Theorem A.2]{Gui19}, $\underline{Y_{p\pmb{\alpha}}}(z), \underline{Y_{\bar p\pmb{\alpha}}}(\bar z)$
satisfies polynomial energy bounds. Thus by the arguments of \cite[Lemma 5.5]{AGT23Pointed}
$\underline{Y_{\pmb{\alpha}}}(\underline z)$ satisfies polynomial energy bounds as well.

Let us write
\begin{align}\label{eq:Dlambda}
 \scD_{\pmb{\lambda}} = C^\infty(L_0|_{\cK_{p\pmb{\lambda}}}\otimes \bb1 + \bb1\otimes L_0|_{\cK_{\bar p\pmb{\lambda}}})
 \subset \cK_{p\pmb{\lambda}}\otimes \cK_{\bar p\pmb{\lambda}}\cong \cK_{\pmb{\lambda}}
\end{align}
and $\scD_Q = \bigoplus_{\pmb{\lambda} \in Q} \scD_{\pmb{\lambda}}$ (the algebraic direct sum).
By polynomial energy bounds, the Fourier components of $Y_{\pmb{\alpha}}(\underline z)$,
which are first defined on $M_{p\frh}(1, p\pmb{\lambda}) \otimes M_{\bar p\frh}(1, \bar p\pmb{\lambda})$,
extend to $\scD_{\pmb{\lambda}}$, thus to $\scD_Q$.
Moreover, for a test function $f$ on $(-\pi, \pi)\times (-\pi, \pi)$, define
\begin{align*}
 f_{r,s} = \frac1{(2\pi)^2}\int_{-\pi}^{\pi}\int_{-\pi}^{\pi} f(\theta_+, \theta_-)\rme^{-\rmi (r\theta_+ + s\theta_-)}d\theta_+d\theta_-.
\end{align*}
Then the sum
\begin{align*}
 Y_{\pmb{\alpha}}(f) := \sum_{r,s \in \bbR} Y_{\pmb{\alpha}, r, s} f_{r,s}
\end{align*}
converges on each $\scD_{\pmb{\lambda}}$ of \eqref{eq:Dlambda}, thus on $\scD_Q$.

Recall that we defined conformal Wightman fields on the Einstein cylinder in \cite[Section 5.1]{AGT23Pointed},
by requiring \textbf{locality}, \textbf{diffeomorphism covariance}, \textbf{positivity of energy},
\textbf{vacuum and the Reeh-Schlieder property} and \textbf{polynomial energy bounds}
for a family of such fields.
Among them, the cyclicity of vacuum is required for a family of fields and not for single fields (see \cite[(2dW4)]{AGT23Pointed}).
For simplicity, we include the fields $\pmb{\alpha}(f), \pmb{\alpha} \in \frh_p$.

\begin{theorem}
 The family $Y_{\pmb{\alpha}}(f), \pmb{\alpha} \in Q$ with $\pmb{\alpha}(f), \pmb{\alpha} \in \frh_p$,
 together with $U_Q = \bigoplus_{\pmb{\lambda} \in Q} U_{p\pmb{\lambda}}\otimes U_{\bar p\pmb{\lambda}}$ (\cf Section \ref{2dlatticenet})
 and $\Omega_{\pmb{0}}$,
 satisfies the above axioms for two-dimensional conformal Wightman fields.
\end{theorem}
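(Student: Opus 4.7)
The plan is to verify the Wightman axioms in turn, leaving locality for last as the main obstacle. Polynomial energy bounds for $Y_{\pmb{\alpha}}(f)$ were established just above the theorem statement, by combining the chiral vertex operator bounds from \cite[Proposition VI.1.2.1]{Toledano-LaredoThesis} with a two-variable version of \cite[Lemma 5.5]{AGT23Pointed}; bounds for $\pmb{\alpha}(f)$ are inherited from the linear energy bounds of Section \ref{sec:Heisenberg-net}. Positivity of energy is immediate from $U_Q = \bigoplus_{\pmb{\lambda}\in Q} U_{p\pmb{\lambda}}\otimes U_{\bar p\pmb{\lambda}}$ being a direct sum of positive-energy representations. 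The vector $\Omega_{\pmb 0}$ lives in the summand indexed by $\pmb{\lambda} = \pmb{0}$, on which $U_Q$ restricts to the vacuum representation of $\uMob\times\uMob$, and its uniqueness up to a phase follows from the fact that only $p\pmb{\lambda} = \bar p\pmb{\lambda} = \pmb{0}$ admits $\uMob\times\uMob$-invariant vectors.

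Diffeomorphism covariance is obtained by integrating the pair of Sugawara Virasoro representations $L_m\otimes \bb1$ and $\bb1\otimes L_m$ on each summand $\cK_{p\pmb{\lambda}}\otimes \cK_{\bar p\pmb{\lambda}}$ to a projective unitary representation of $\uDiffSone\times \uDiffSone$, which one then identifies with $U_Q$. The primary field identities \eqref{eq:primary+} and \eqref{eq:primary-} yield the expected transformation law for $Y_{\pmb{\alpha}}(f)$ with conformal weights $\bigl(\frac{1}{2}(p\pmb{\alpha},p\pmb{\alpha})_\frh, \frac{1}{2}(\bar p\pmb{\alpha},\bar p\pmb{\alpha})_\frh\bigr)$, while $\pmb{\alpha}(f)$ transforms as a pair of weight-one chiral and antichiral currents. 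Cyclicity of $\Omega_{\pmb 0}$ follows from the observation that $Y_{\pmb{\alpha}}\Omega_{\pmb 0}$ has a non-zero highest-weight component in the summand of \eqref{eq:K_Q} indexed by $\pmb{\alpha}$, so that repeated application of the $Y_{\pmb{\alpha}}$ with $\pmb{\alpha}\in Q$ reaches every charge sector, while the Heisenberg currents $\pmb{\beta}(f)$ generate a dense subspace of each sector from its highest-weight vector by the standard Fock-space argument.

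The hard part is locality. The plan is to promote the formal identity \eqref{eq:comm-Y} into an operator relation after smearing $Y_{\pmb{\alpha}}(\underline w)Y_{\pmb{\beta}}(\underline z)$ against test functions $f, g \in C^\infty$ supported in mutually spacelike double cones. Using the energy bounds and positivity of energy separately in each lightcone direction, matrix elements of the smeared products between vectors of $\scD_Q$ extend to boundary values of functions holomorphic in a tube domain in the lightcone variables $(w_\pm, z_\pm)$, while those of the reversed-order product live in the opposite tube. Spacelike separation, which in lightcone coordinates amounts to one variable pair being in the opposite order of the other, allows simultaneous analytic continuation across the cuts, picking up a monodromy factor $\rme^{\rmi\pi[(p\pmb{\alpha},p\pmb{\beta})_\frh - (\bar p\pmb{\alpha},\bar p\pmb{\beta})_\frh]} = (-1)^{(\pmb{\alpha}|\pmb{\beta})}$, which cancels exactly the cocycle sign $(-1)^{(\pmb{\alpha}|\pmb{\beta})}$ on the right-hand side of \eqref{eq:comm-Y} originating from \eqref{eq:comm-c-z}. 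Evenness of $Q$ is essential here: it ensures $(\pmb{\alpha}|\pmb{\beta})\in\bbZ$ so that both signs are well-defined and cancel unambiguously. The remaining locality relations, namely $[Y_{\pmb{\alpha}}(f), \pmb{\beta}(g)] = 0$ for spacelike supports and the mutual commutativity of the Heisenberg currents $\pmb{\alpha}(f), \pmb{\beta}(g)$, follow more directly from the Heisenberg commutation relations together with the primary-field identities \eqref{eq:primary+}, \eqref{eq:primary-}, paralleling the chiral case in \cite[Section 6.4]{AGT23Pointed}.
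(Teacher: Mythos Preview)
Your outline is correct and aligns with the paper's proof for all axioms except that your treatment of locality is pitched at a higher level of abstraction than what the paper actually does. You describe a two-variable tube-domain analytic continuation with a monodromy factor $\rme^{\rmi\pi(\pmb{\alpha}|\pmb{\beta})}$ cancelling the cocycle sign; the paper instead exploits the tensor decomposition $Y_{\pmb{\alpha}}(\underline z) = c_{\pmb{\alpha}}\,\underline{Y_{p\pmb{\alpha}}}(z)\otimes\underline{Y_{\bar p\pmb{\alpha}}}(\bar z)$ and reduces to the \emph{one-variable} operator identity already established in \cite[Lemma 5.3]{AGT23Pointed}. Concretely, for $\supp f_+ < \supp g_+$ the chiral factor picks up $\lim_{\mathrm{Im}\,\zeta>0,\,\zeta\to -1}\zeta^{-(p\pmb{\alpha},p\pmb{\beta})_\frh}$, while for $\supp f_- > \supp g_-$ the antichiral factor picks up the limit from the lower half-plane; the product of these two scalars is $(-1)^{-(\pmb{\alpha}|\pmb{\beta})}$, which cancels the $(-1)^{(\pmb{\alpha}|\pmb{\beta})}$ from $c_{\pmb{\alpha}}c_{\pmb{\beta}} = (-1)^{(\pmb{\alpha}|\pmb{\beta})}c_{\pmb{\beta}}c_{\pmb{\alpha}}$.

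The two routes are equivalent in spirit---your monodromy computation is exactly what these limits compute---but the paper's decomposition buys you a rigorous passage from the formal identity \eqref{eq:comm-Y} to smeared operators without having to set up and justify the two-variable branch structure from scratch: the non-integer powers $(1-z/w)^{-(p\pmb{\alpha},p\pmb{\beta})_\frh}$ are handled once and for all in the chiral lemma. Your sketch would need to supply that justification (choice of branch, why the boundary values match the formal-series expansion on each side) to be complete, whereas the paper simply invokes the established one-dimensional result twice.
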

\begin{proof}
 As most of the proofs are analogous to \cite[Theorem 5.9, Theorem 5.7]{AGT23Pointed}, we will be brief
 except for locality.
 We have checked polynomial energy bounds of all these fields.
 Positivity of energy follows by construction, \cf Theorem \ref{th:2dnetQ}.
 Moreover, as we have (2dCN\ref{2dcn:cylinder}), by covariance all the fields extend to the Einstein cylinder.
 As we include chiral and antichiral fields $\pmb{\alpha}(f)$, they span $\cK_{\pmb{0}}$ from $\Omega_Q$,
 and as $Y_{\pmb{\alpha}}(f)$ maps $\cK_{\pmb{\lambda}}$ to $\cK_{\pmb{\alpha} + \pmb{\lambda}}$,
 their image exhausts $\cK_Q$.

 To verify diffeomorphism covariance, it is enough to consider diffeomorphisms of the form
 $\gamma_+\times \iota$ or $\iota \times \gamma_-$. For this, we apply the arguments of \cite[Proposition 6.4]{CKLW18},
 \cite[Lemma 5.4]{AGT23Pointed}: from \eqref{eq:primary+}
 the commutation relations with the scaling dimension $\frac{(p\pmb{\alpha}, p\pmb{\alpha})_{\frh}}2$ follow:
 \begin{align*}
  \rmi[L_1(f_1)\otimes\bb1, Y_{\pmb{\alpha}}(f_2)] = Y_{\pmb{\alpha}}(\textstyle{\frac{(p\pmb{\alpha}, p\pmb{\alpha})_{\frh}}2}f_1'f_2 - f_1\partial_{\theta_+}f_2),
 \end{align*}
 where $f_1 \in C^\infty_0(S^1)$, $f_2$ is a compactly supported test function on $(-\pi, \pi)\times (-\pi, \pi)$.
 From this, the covariance with respect to $U(\gamma_+\times\iota)$ follows exactly as in \cite[Lemma 5.4]{AGT23Pointed}.
 The covariance with respect to $U(\iota\times\gamma_-)$ is similar.
 The covariance of the chiral and antichiral fields are known, \eg \cite[Proposition 6.4]{CKLW18}.

 Locality between $\pmb{\alpha}(f), \pmb{\beta}(g)$
 and between $\pmb{\alpha}(f), Y_{\pmb{\beta}}(g)$ is analogous to \cite{AGT23Pointed}.
 In order to show locality between $Y_{\pmb{\alpha}}(f)$ and $Y_{\pmb{\beta}}(g)$,
 we take technical elements of \cite[Lemma 5.3]{AGT23Pointed}:
 On $\bigoplus_{\pmb{\lambda} \in Q} \cK_{p\pmb{\lambda}}$, with $c'_{p\pmb{\alpha}}$ the simple shift operator by $p\pmb{\alpha}$
 (without cocycle), we have the commutation relations between formal series, just as we did for $Y_{\pmb{\alpha}}(\underline z)$
 (even more easily)
 \begin{align*}
   \left(1-\frac{z}w\right)^{-(p\pmb{\alpha}, p\pmb{\beta})_{\frh}}
   w^{-(p\pmb{\alpha}, p\pmb{\beta})_{\frh}}
  c'_{p\pmb{\alpha}}\underline{Y_{p\pmb{\alpha}}}(w)\cdot c'_{p\pmb{\beta}}\underline{Y_{p\pmb{\beta}}}(z)
  &= \left(1-\frac{w}z\right)^{-(p\pmb{\alpha}, p\pmb{\beta})_{\frh}}
   z^{-(p\pmb{\alpha}, p\pmb{\beta})_{\frh}}
  c'_{p\pmb{\beta}}\underline{Y_{p\pmb{\beta}}}(z)\cdot c'_{p\pmb{\alpha}}\underline{Y_{p\pmb{\alpha}}}(w).
 \end{align*}
 By the same arguments as in \cite[Lemma 5.3]{AGT23Pointed},
 this translates into the commutation relations between operators:
 Let $f_+, g_+ \in C^\infty_0((-\pi, \pi))$ such that $\supp f_+ < \supp g_+$, then
 \begin{align*}
  c'_{p\pmb{\alpha}}\underline{Y_{p\pmb{\alpha}}}(f_+)c'_{p\pmb{\beta}}\underline{Y_{p\pmb{\beta}}}(g_+)
  &= \lim_{\mathrm{Im}\, \zeta > 0, \zeta\to -1}\zeta^{-(p\pmb{\alpha}, p\pmb{\beta})_{\frh}}
  \cdot
  c'_{p\pmb{\beta}}\underline{Y_{p\pmb{\beta}}}(g_+)\underline{Y_{p\pmb{\alpha}}}(f_+),
 \end{align*}
 where $\zeta^s$ is defined as the analytic continuation of positive roots on $\bbR_{>0}$ to $\bbC\setminus \bbR_{\le 0}$.
 These relations hold even when restricted to $\cK_{\pmb{p\lambda}}$,
 and thus to $\cK_{\pmb{p\lambda}}\otimes \cK_{\bar p\pmb{\lambda}}$.
 Similarly, if $\supp f_- > \supp g_-$,
 \begin{align*}
  c'_{\bar p\pmb{\alpha}}\underline{Y_{\bar p\pmb{\alpha}}}(f_-)c'_{\bar p\pmb{\beta}}\underline{Y_{\bar p\pmb{\alpha}}}(g_-)
  &= \lim_{\mathrm{Im}\, \zeta < 0, \zeta\to -1}\zeta^{-(\bar p\pmb{\alpha}, \bar p\pmb{\beta})_{\frh}}
  \cdot
  c'_{p\pmb{\beta}}\underline{Y_{\bar p\pmb{\beta}}}(g_-)c'_{p\pmb{\alpha}}\underline{Y_{\bar p\pmb{\alpha}}}(f_-).
 \end{align*}

 On $\cK_{\pmb{\lambda}}$, we have
 $Y_{\pmb{\alpha}}(\underline z) = c_{\pmb{\alpha}} \underline{Y_{p\pmb{\alpha}}}(z)\otimes \underline{Y_{\bar p\pmb{\alpha}}}(\bar z)$,
 where $c_{\pmb{\alpha}}$ is the shift $c'_{p\pmb{\alpha}} \otimes c'_{p\pmb{\alpha}}: \cK_{\pmb{\lambda}} \cong\cK_{\pmb{p\lambda}}\otimes \cK_{\bar p\pmb{\lambda}}\to \cK_{\pmb{\alpha} + \pmb{\lambda}}\cong
 \cK_{p(\pmb{\alpha} + \pmb{\lambda})}\otimes \cK_{\bar p(\pmb{\alpha} + \pmb{\lambda})}$
 followed by the scalar $\epsilon(\pmb{\alpha}, \pmb{\lambda})$.

 Recall that $c_{\pmb{\alpha}}c_{\pmb{\beta}} = (-1)^{(\pmb{\alpha}|\pmb{\beta})}c_{\pmb{\beta}}c_{\pmb{\alpha}}$.
 Altogether,
 for test functions $f,g \in C^\infty_0((-\pi, \pi)\times (-\pi, \pi))$ with compact supports
 such that $\supp f$ is on the left of $\supp g$,
 \eqref{eq:comm-Y}
 implies on each $\cK_{\pmb{\lambda}}$ that
 \begin{align*}
  Y_{\pmb{\alpha}}(f)Y_{\pmb{\beta}}(g)
  &= (-1)^{(\pmb{\alpha}|\pmb{\beta})}\cdot
  \lim_{\mathrm{Im}\, \zeta > 0, \zeta\to -1}\zeta^{-(p\pmb{\alpha}, p\pmb{\beta})_{\frh}}\cdot
  \lim_{\mathrm{Im}\, \zeta < 0, \zeta\to -1}\zeta^{-(\bar p\pmb{\alpha}, \bar p\pmb{\beta})_{\frh}}\cdot
  Y_{\pmb{\beta}}(g) Y_{\pmb{\alpha}}(f) \\
  &= (-1)^{(\pmb{\alpha}|\pmb{\beta})}\cdot
  \lim_{\mathrm{Im}\, \zeta > 0, \zeta\to -1}\zeta^{-(p\pmb{\alpha}, p\pmb{\beta})_{\frh}}\cdot
  \lim_{\mathrm{Im}\, \zeta > 0, \zeta\to -1}\zeta^{(\bar p\pmb{\alpha}, \bar p\pmb{\beta})_{\frh}}\cdot
  Y_{\pmb{\beta}}(g) Y_{\pmb{\alpha}}(f) \\
  &= (-1)^{(\pmb{\alpha}|\pmb{\beta})}\cdot
  (-1)^{-(\pmb{\alpha}|\pmb{\beta})}
  Y_{\pmb{\beta}}(g) Y_{\pmb{\alpha}}(f) \\
  &= Y_{\pmb{\beta}}(g) Y_{\pmb{\alpha}}(f),
 \end{align*}
 where in the second equality we calculate the limit on the circle
 $S^1 \setminus \{-1\}$ and used the fact that the limit $\zeta \to -1, \mathrm{Im}\, \zeta < 0$
 amounts to the limit $\zeta^{-1} \to -1, \mathrm{Im}\, \zeta > 0$.

\end{proof}

\begin{example}
 Let us take the case of Example \ref{ex:RR2net}.
 The restriction of $\epsilon(\pmb{\alpha}, \pmb{\lambda})^{-1}Y_{\pmb{\alpha}}(\underline z)$ to $M_{\frh_p}(1, \pmb{\lambda})\subset \cK_{\pmb{\lambda}}$
 has the form
 \begin{align*}
  \epsilon(\pmb{\alpha}, \pmb{\lambda})^{-1}Y_{\pmb{\alpha}}(\underline z)
  = Y_{p\pmb{\alpha}}(z) \otimes Y_{\bar p\pmb{\alpha}}(\bar z)
 \end{align*}
 and the tensor component $Y_{p\pmb{\alpha}}(z)$ can be identified (up to a change of notations) with
 $\cY_{p\pmb{\alpha}}(c_{p\pmb{\alpha}}, x)$ of \cite[Theorem A.2(a)]{Gui19}.
 Thus its Fourier components are bounded uniformly in $\pmb{\lambda}$, say by $C$,
 if $(p\pmb{\alpha}, p\pmb{\alpha})_\frh \le 1$ (similar for $Y_{\bar p\pmb{\alpha}}(\bar z)$).

 If we have both $(p\pmb{\alpha}, p\pmb{\alpha})_\frh \le 1, (\bar p\pmb{\alpha}, \bar p\pmb{\alpha})_\frh \le 1$,
 then  the smeared two-dimensional field $Y_{\pmb{\alpha}}(f)$ is bounded (\cf \cite{Rehren97}).
 This holds if $\pmb{\alpha} = \frac{R}{\sqrt 2}\oplus \frac{R}{\sqrt 2}$ with $R^2 \le 2$
 or $\pmb{\alpha} = \frac{R^{-1}}{\sqrt 2}\oplus \frac{R^{-1}}{\sqrt 2}$ with $R^{-2} \le 2$.
 Therefore, if $\frac1{\sqrt 2} \le R \le \sqrt 2$,
 they together generate the two-dimensional conformal net.
 By the uniqueness result, Theorem \ref{th:classification}, the generated net must be $\cA_Q$ of Theorem \ref{th:2dnetQ}.
\end{example}


\subsubsection*{Acknowledgements}
We thank Sebastiano Carpi and Yasuyuki Kawahigashi for discussions about braided equivalences,
Yuto Moriwaki for informing us of his results in \cite{Moriwaki23}
and Mizuki Oikawa for discussions regarding \cite{BKL15}.

M.S.A.\! is a Humboldt Research Fellow supported by the Alexander von Humboldt Foundation
and is partially supported by INdAM--GNAMPA Project ``Ricostruzione di segnali, tramite operatori e frame, in presenza di rumore'' CUP E53C23001670001 and GNAMPA--INdAM.
L.G.\! and Y.T.\! acknowledge support from the GNAMPA--INdAM project \lq\lq Operator algebras
and infinite quantum systems", CUP E53C23001670001, project GNAMPA 2024 \lq\lq Probabilit\`a Quantistica e Applicazioni", CUP E53C23001670001, 
from the MIUR Excellence Department Project MatMod@TOV awarded to the Department of Mathematics, University of Rome ``Tor Vergata'', CUP E83C23000330006
and from the Tor Vergata University of Rome funding OANGQM, CUP E83C25000580005.

{\small
\def\polhk#1{\setbox0=\hbox{#1}{\ooalign{\hidewidth
  \lower1.5ex\hbox{`}\hidewidth\crcr\unhbox0}}} \def\cprime{$'$}

}

\end{document}